\documentclass[pra,aps,nopacs,onecolumn,twoside,superscriptaddress]{revtex4}

\usepackage{soul} 
\usepackage{multirow}
\usepackage{amsmath,amsfonts,amssymb,caption,color,epsfig,graphics,graphicx,hyperref,latexsym,mathrsfs,revsymb,theorem,url,verbatim,epstopdf,enumerate}
\usepackage{amsmath,amsfonts,amssymb,caption,hyperref,color,epsfig,graphics,graphicx,latexsym,mathrsfs,revsymb,theorem,url,verbatim,epstopdf,cleveref}
\usepackage{fontenc}
\hypersetup{colorlinks,linkcolor={blue},citecolor={blue},urlcolor={red}}

\usepackage{cleveref}

\newtheorem{definition}{Definition}
\newtheorem{proposition}[definition]{Proposition}
\newtheorem{lemma}[definition]{Lemma}

\newtheorem{theorem}[definition]{Theorem}
\newtheorem{corollary}[definition]{Corollary}
\newtheorem{conjecture}[definition]{Conjecture}

\newtheorem{remark}[definition]{Remark}
\newtheorem{example}[definition]{Example}
\newtheorem{question}[definition]{Question}
\newtheorem{memo}[definition]{Memo}

\def\squareforqed{\hbox{\rlap{$\sqcap$}$\sqcup$}}
\def\qed{\ifmmode\squareforqed\else{\unskip\nobreak\hfil
\penalty50\hskip1em\null\nobreak\hfil\squareforqed
\parfillskip=0pt\finalhyphendemerits=0\endgraf}\fi}
\def\endenv{\ifmmode\;\else{\unskip\nobreak\hfil
\penalty50\hskip1em\null\nobreak\hfil\;
\parfillskip=0pt\finalhyphendemerits=0\endgraf}\fi}
\newenvironment{proof}{\noindent \textbf{{Proof.~} }}{\qed}
\def\Dbar{\leavevmode\lower.6ex\hbox to 0pt
{\hskip-.23ex\accent"16\hss}D}
\makeatletter
\def\url@leostyle{%
  \@ifundefined{selectfont}{\def\UrlFont{\sf}}{\def\UrlFont{\small\ttfamily}}}
\makeatother
\def\bcj{\begin{conjecture}}
\def\ecj{\end{conjecture}}
\def\bcr{\begin{corollary}}
\def\ecr{\end{corollary}}
\def\bd{\begin{definition}}
\def\ed{\end{definition}}
\def\bea{\begin{eqnarray}}
\def\eea{\end{eqnarray}}
\def\bem{\begin{enumerate}}
\def\eem{\end{enumerate}}
\def\bex{\begin{example}}
\def\eex{\end{example}}
\def\bim{\begin{itemize}}
\def\eim{\end{itemize}}
\def\bl{\begin{lemma}}
\def\el{\end{lemma}}
\def\bma{\begin{bmatrix}}
\def\ema{\end{bmatrix}}
\def\bpf{\begin{proof}}
\def\epf{\end{proof}}
\def\bpp{\begin{proposition}}
\def\epp{\end{proposition}}
\def\bqu{\begin{question}}
\def\equ{\end{question}}
\def\br{\begin{remark}}
\def\er{\end{remark}}
\def\bt{\begin{theorem}}
\def\et{\end{theorem}}
\def\bmm{\begin{memo}}
\def\emm{\end{memo}}

\def\btb{\begin{tabular}}
\def\etb{\end{tabular}}

\newcommand{\nc}{\newcommand}

\nc{\as}{{\cal AS}}
\nc{\app}{{\cal AP}}

\def\a{\alpha}
\def\b{\beta}
\def\g{\gamma}
\def\d{\delta}

\def\l{\lambda}

\def\r{\rho}

\def\ps{\psi}

\def\G{\Gamma}

\def\O{\Omega}

 \nc{\bbA}{\mathbb{A}} \nc{\bbB}{\mathbb{B}} \nc{\bbC}{\mathbb{C}}
 \nc{\bbD}{\mathbb{D}} \nc{\bbE}{\mathbb{E}} \nc{\bbF}{\mathbb{F}}
 \nc{\bbG}{\mathbb{G}} \nc{\bbH}{\mathbb{H}} \nc{\bbI}{\mathbb{I}}
 \nc{\bbJ}{\mathbb{J}} \nc{\bbK}{\mathbb{K}} \nc{\bbL}{\mathbb{L}}
 \nc{\bbM}{\mathbb{M}} \nc{\bbN}{\mathbb{N}} \nc{\bbO}{\mathbb{O}}
 \nc{\bbP}{\mathbb{P}} \nc{\bbQ}{\mathbb{Q}} \nc{\bbR}{\mathbb{R}}
 \nc{\bbS}{\mathbb{S}} \nc{\bbT}{\mathbb{T}} \nc{\bbU}{\mathbb{U}}
 \nc{\bbV}{\mathbb{V}} \nc{\bbW}{\mathbb{W}} \nc{\bbX}{\mathbb{X}}
 \nc{\bbZ}{\mathbb{Z}}

 \nc{\bA}{{\bf A}} \nc{\bB}{{\bf B}} \nc{\bC}{{\bf C}}
 \nc{\bD}{{\bf D}} \nc{\bE}{{\bf E}} \nc{\bF}{{\bf F}}
 \nc{\bG}{{\bf G}} \nc{\bH}{{\bf H}} \nc{\bI}{{\bf I}}
 \nc{\bJ}{{\bf J}} \nc{\bK}{{\bf K}} \nc{\bL}{{\bf L}}
 \nc{\bM}{{\bf M}} \nc{\bN}{{\bf N}} \nc{\bO}{{\bf O}}
 \nc{\bP}{{\bf P}} \nc{\bQ}{{\bf Q}} \nc{\bR}{{\bf R}}
 \nc{\bS}{{\bf S}} \nc{\bT}{{\bf T}} \nc{\bU}{{\bf U}}
 \nc{\bV}{{\bf V}} \nc{\bW}{{\bf W}} \nc{\bX}{{\bf X}}
 \nc{\bZ}{{\bf Z}}

\nc{\cA}{{\cal A}} \nc{\cB}{{\cal B}} \nc{\cC}{{\cal C}}
\nc{\cD}{{\cal D}} \nc{\cE}{{\cal E}} \nc{\cF}{{\cal F}}
\nc{\cG}{{\cal G}} \nc{\cH}{{\cal H}} \nc{\cI}{{\cal I}}
\nc{\cJ}{{\cal J}} \nc{\cK}{{\cal K}} \nc{\cL}{{\cal L}}
\nc{\cM}{{\cal M}} \nc{\cN}{{\cal N}} \nc{\cO}{{\cal O}}
\nc{\cP}{{\cal P}} \nc{\cQ}{{\cal Q}} \nc{\cR}{{\cal R}}
\nc{\cS}{{\cal S}} \nc{\cT}{{\cal T}} \nc{\cU}{{\cal U}}
\nc{\cV}{{\cal V}} \nc{\cW}{{\cal W}} \nc{\cX}{{\cal X}}
\nc{\cZ}{{\cal Z}}

\nc{\hA}{{\hat{A}}} \nc{\hB}{{\hat{B}}} \nc{\hC}{{\hat{C}}}
\nc{\hD}{{\hat{D}}} \nc{\hE}{{\hat{E}}} \nc{\hF}{{\hat{F}}}
\nc{\hG}{{\hat{G}}} \nc{\hH}{{\hat{H}}} \nc{\hI}{{\hat{I}}}
\nc{\hJ}{{\hat{J}}} \nc{\hK}{{\hat{K}}} \nc{\hL}{{\hat{L}}}
\nc{\hM}{{\hat{M}}} \nc{\hN}{{\hat{N}}} \nc{\hO}{{\hat{O}}}
\nc{\hP}{{\hat{P}}} \nc{\hR}{{\hat{R}}} \nc{\hS}{{\hat{S}}}
\nc{\hT}{{\hat{T}}} \nc{\hU}{{\hat{U}}} \nc{\hV}{{\hat{V}}}
\nc{\hW}{{\hat{W}}} \nc{\hX}{{\hat{X}}} \nc{\hZ}{{\hat{Z}}}

\nc{\hn}{{\hat{n}}}

\def\diag{\mathop{\rm diag}}
\def\dim{\mathop{\rm Dim}}

\def\ghz{\mathop{\rm GHZ}}

\def\max{\mathop{\rm max}}
\def\min{\mathop{\rm min}}

\def\tr{\mathop{\rm Tr}}

\def\ra{\rightarrow}

\newcommand{\bra}[1]{\langle#1|}
\newcommand{\ket}[1]{|#1\rangle}
\newcommand{\proj}[1]{| #1\rangle\!\langle #1 |}

\newcommand{\abs}[1]{|#1|}

\def\Dbar{\leavevmode\lower.6ex\hbox to 0pt
{\hskip-.23ex\accent"16\hss}D}

\begin{document}

\title{Eigenvalues of multipartite entanglement witnesses}

\date{\today}

\author{Nalan Wang}\email[]{nalanwang@buaa.edu.cn}
\affiliation{LMIB(Beihang University), Ministry of Education, and School of Mathematical Sciences, Beihang University, Beijing 100191, China}

\author{Lin Chen}\email[]{linchen@buaa.edu.cn (corresponding author)}
\affiliation{LMIB(Beihang University), Ministry of Education, and School of Mathematical Sciences, Beihang University, Beijing 100191, China}

\pacs{03.65.Ud, 03.67.Mn}


\begin{abstract}
We investigate various properties of multipartite block-positive operators, decomposable EWs (DEWs), and non-decomposable EWs. We provide a necessary and sufficient condition to construct a special DEW using the multipartite GHZ state. For multipartite DEW, we explicitly characterize the supremum and infimum of maximum and minimum eigenvalues, as well as the trace of its square. We also derive other results concerning NDEW, eigenvalues of $2 \times n$ EWs and the corresponding physical implications. Furthermore, we investigate the tightness of inequalities of these eigenvalues with examples. 
\end{abstract}


\maketitle


Keywords: eigenvalue, entanglement witness, multipartite system 


\section{Introduction}

Detecting entanglement is a basic task in quantum-information processing such as computing and cryptography. A popular method is to construct entanglement witnesses (EWs), which has received extensive attentions in both theory and experiment in past decades, for a review see \cite{Chru2014Entanglement}. Experimental detection of multipartite entanglement using witness operators have been proposed
\cite{bourennane2004experimental}. The relationship between entanglement witnesses and Bell inequalities was revealed \cite{hyllus2005relations}. The concept of nonlinear EWs was introduced and compared with traditional linear EWs \cite{guhne2006nonlinear,2006FGS,2006OGN}. Device-independent witnesses of genuine multipartite entanglement were developed \cite{2011JDB}. Witnessing quantum coherence was extended from solid-state to biological systems \cite{2012CM}. Multipartite entanglement witnesses were proposed \cite{2013Sperling}. Witnesses of mixed separable states useful for entanglement creation were studied \cite{ganguly2014witness}. The design and experimental performance of local entanglement witness operators were discussed \cite{2019Design}. A measurement-device-independent entanglement witness for tripartite entangled states was proposed and its applications were demonstrated \cite{2020Measurement}. 

From a more mathematical point of view, optimized EWs have been constructed in \cite{lewenstein2000optimization1}. They have been used to characterize separable states
\cite{lewenstein2001characterization}. The optimization of ultrafine entanglement witnesses was explored \cite{2018Optimization}. Next, the concept of mirrored entanglement witnesses was introduced \cite{bae2020mirrored}. The structure of mirrored operators derived from optimal entanglement witnesses was investigated  \cite{bera2023structure}. The existence of a mirrored pair of optimal non-decomposable entanglement witnesses for two qudits was demonstrated \cite{chruscinski2025mirroredn}. Absolute-separability EWs for symmetric multiqubit states were proposed \cite{serrano2024absolute}. Third, the spectral properties of EW were analyzed in \cite{sarbicki2008spectral}. This is related to bound entanglement \cite{chruscinski2009spectral}.
The inverse eigenvalue problem for entanglement witnesses was solved \cite{johnston2018inverse}. The inertias of bipartite EWs have also been studied  \cite{2020Inertias,shen2020inertias,3x3inertia2022changchun,feng2024inertia}. Recently, the supremum and infimum of maximum and minimum eigenvalues of bipartite EWs were studied by analytically constructing novel EWs and mathematical properties \cite{2025Spectral}. Nevertheless, these facts concern little about multipartite EWs. This is the focus of this paper. Actually, the latest research on quantum computing system has achieved $98$-qubit system \cite{Ransford_2026}. 

In this paper, we begin by presenting properties of block-positive matrices in Lemma \ref{le:BPmn} and Remark \ref{rem:equivalentFormBP}. In Lemma \ref{le:ConvexityOfTr(W^2)}, we show the strict convexity of matrix function $\tr(W^2)$ of Hermitian matrix $W$, which paves the way for the subsequent induction-based treatment of multipartite states. In Lemma \ref{lem:Existence}, we  introduce some facts from algebraic geometry. We also review necessary and sufficient conditions detecting entanglement and construction of decomposable EWs(DEWs). Lemmas \ref{le:EW=general}-\ref{le:existence_0f_10000} compile some known results concerning EWs, DEWs, and non-DEWs (NDEWs). Next, Lemmas  \ref{le:multi-AS} and \ref{le:m-partite=sep} present separability and absolute-separability criterion of multipartite states. Theorem \ref{lem:finer_P}-Lemma \ref{le:2^n}, together with Corollary \ref{coro:d1...dn}, present properties of block-positive (BP) matrices, optimality and system split and eigenvalues of multiparite EWs.

In Theorem \ref{le:2tin1234} (i)-(iv), we elucidate the interrelations among the eigenvalues of $2 \times n$ EWs. These results are mainly summarized in Table \ref{tab:LambdaResults}. The "limiting of parameters" column in Table \ref{tab:LambdaResults} shows how the parameters $p$ and $m$, obtained from Theorem \ref{le:2tin1234}, relate to previously known results. In Lemma \ref{le:2n13>=}, we discuss other inequality governing eigenvalues, and further investigated the tightness of these inequalities with examples. In Lemma \ref{le:GHZ_n}, we provide a necessary and sufficient condition to construct a DEW using the multipartite GHZ state. Through an in-depth investigation of two-qubit states, we attempt to generalize some properties of two-qubit states to arbitrary $n$-qubit states. From Theorem \ref{le:d1-dnTrW^2} (i), we obtain the conclusion regarding the infimum of $\tr(W^2)$ of DEW in Table \ref{tab:multipartiteResults}. The supremum of $\tr (W^2)$ of DEW in Table \ref{tab:multipartiteResults} is obtained. The conclusion regarding the supremum of minimum eigenvalue of DEW, i.e., $\l_{d_1 \cdots d_n}$ is from Lemma \ref{le:multi-DEW=lambda1<=1}, and the supremum of maximum eigenvalue of DEW, i.e., $\l_1$ is from Theorem \ref{le:multi-DEW=lambda1<1}. From Theorem \ref{le:multi-DEW=lambda1<1}, we also obtain the infima of the minimum and maximum eigenvalues of DEW. All the results presented in Table \ref{tab:multipartiteResults} are comparable with the corresponding results for two-qubit systems, from which clear regularities can be observed. For instance, one may ask whether these exact values of DEW can indeed be attainable under the same circumstances. Moreover, in the expression for the infimum of the maximum eigenvalue, the parameter $mn$ is replaced by $d_1d_2 \cdots d_n$.  Theorem \ref{le:TrW^2_NDEW} shows that if $W$ is an NDEW, then the infimum of $\tr (W^2)$ is not attainable.



\begin{table}[htbp]
    \centering
    \setlength{\tabcolsep}{1pt} 
    
    \caption{Results on eigenvalues of $2 \times n$ EWs}
    \label{tab:LambdaResults}
    \renewcommand{\arraystretch}{3} 
    
    \begin{tabular}{|l|c|c|c|c|}
        \hline
         & range & limiting of parameters & origin \\ 
        \hline
        
        $\lambda_3$ & $\left[-\frac{1}{(2+2\sqrt{2})(2n-2)}, \frac{1}{\sqrt{3}}\right)$ & / & Theorem $\ref{le:2tin1234}$ (i) \\ 
        \hline
        
        $\lambda_4$ & $\left[-\frac{1}{2(2n-3)}, \frac{1}{2}\right)$ & / & Theorem $\ref{le:2tin1234}$ (i) \\ 
        \hline
        
        $\lambda_k$ & $\left[-\frac{1}{2(2n-k+1)}, \frac{1}{\sqrt{k}}\right)$ & $k=4,5, \cdots 2n$ & Theorem $\ref{le:2tin1234}$ (i) \\ 
        \hline

        $\sum_{i=3}^{2n} \l_{i}$ & $\left[-\frac{1}{2+2\sqrt{2}}, 1\right)$ & $p \rightarrow 1+\sqrt{2}$ & Lemma $\ref{le:EW=general}$ (iv) and Theorem $\ref{le:2tin1234}$ (iii) \\ 
        \hline

        $\sum_{i=4}^{2n} \l_{i}$ & $\left[-\frac{1}{2}, 1\right)$ & $p \rightarrow 1$ or $m \rightarrow 1$ & Lemma $\ref{le:EW=general}$ (iv) and Theorem $\ref{le:2tin1234}$ (iii) \\ 
        \hline

        $2p\sum_{i=4}^{2n} \l_{i} + (p^2-1)\l_3$ & $\left[-1, 1\right]$ & $1 \le p \le 1+\sqrt{2}$ & Theorem $\ref{le:2tin1234}$ (iii) \\ 
        \hline

        $(m^2+2m-1)\sum_{i=4}^{2n} \l_{i} + (m^2-1)(\l_3 + \l_2)$ & $\left[-1, \infty\right)$ & $m \geq 1$ & Theorem $\ref{le:2tin1234}$ (iv) \\ 
        \hline
    \end{tabular}
\end{table}

\begin{table}[htbp]
    \centering
    \setlength{\tabcolsep}{20pt} 
    
    \caption{Results on eigenvalues of multipartite EWs}
    \label{tab:multipartiteResults}
    \renewcommand{\arraystretch}{3} 
    
    \begin{tabular}{|l|c|c|c|}
        \hline
         & DEW,sup & DEW,inf & origin \\ 
        \hline

        $\mathrm{Tr}(W^2)$ & 1, attainable & $\frac{1}{d_1 \cdots d_n - 1}$, not attainable & Theorem $\ref{le:d1-dnTrW^2}$ (i), (iii) \\ 
        \hline
        
        $\lambda_{d_1 \cdots d_n}$ & 0, not attainable & $-\frac{1}{2}$, attainable & Lemma $\ref{le:multi-DEW=lambda1<=1}$, Theorem $\ref{le:multi-DEW=lambda1<1}$ (iv) \\ 
        \hline

        $\lambda_1$ & 1, not attainable & $\frac{1}{d_1 \cdots d_n - 1}$, not attainable & Theorem $\ref{le:multi-DEW=lambda1<1}$ (ii)-(iii) \\ 
        \hline
    \end{tabular}
\end{table}

The detection of entanglement is equivalent to the long-standing separability problem, which is an NP-hard problem in computational complexity. It is known that a separable state is 
positive partial transpose (PPT) \cite{peres1996separability,horodecki1996necessary}. The converse also holds for two-qubit and qubit-qutrit systems, while in higher dimensions there exist the so-called PPT entangled states \cite{horodecki1997separability}. Some PPT states have  been studied numerically \cite{leinaas2007extreme,leinaas2010numerical}. The separability of quantum states of two-qubit and three-qubit states have been studied from the aspect of eigenvalues \cite{slater2009eigenvalues,tanaka2014determining,han2017separability}. Hence, our results may be related to these results for a better understanding between the relation of multipartite PPT entangled states and EWs. 

Next, we provide a series of physical interpretations for the two tables above. For Table \ref{tab:LambdaResults}, we can examine the spectral structure and dimensionality effects of the $2 \times n$ system. From the range of $\lambda_k$ in the third row of Table \ref{tab:LambdaResults}, we take the maximum value $k = 2n$ which corresponds to the smallest eigenvalue, i.e., the ground-state energy in physical terms. One can observe a depth limit on the energy potential well in \cite{nielsen2010quantum}. Namely, no matter how large the subsystem size $n$ is (even as $n \to \infty$), as a legitimate entanglement detection operator, its lowest energy level (ground-state energy) cannot go below $-\frac{1}{2}$ (under the normalized condition), which physically corresponds to the energy threshold in \cite{Igloi2023Entanglement, Zhang2018Characterization}. The spectral structure shown in Table \ref{tab:LambdaResults} directly reflects the energy stability of the system. When parameter variations cause drastic changes in these eigenvalues, this corresponds to the quantum phase transition mentioned in \cite{Pezze2017MultipartiteEI}. We now turn to Table \ref{tab:multipartiteResults} and consider the threshold tool for detecting entanglement, while \cite{Zhang2018Characterization} demonstrates the use of similar mathematical tools to probe higher-order topological properties. Our results, particularly the infimum of eigenvalues of DEWs (e.g., $-\frac{1}{2}$) derived in Table \ref{tab:multipartiteResults}, provide a rigorous energy threshold for detecting entanglement in complex systems experimentally. As demonstrated in their characterization of ultrafast energy-time entangled photon pairs in \cite{MacLean2018Direct}, utilizing Hamiltonian-based entanglement witnesses is crucial for probing quantum correlations on ultrafast time scales.

The rest of this paper is organized as follows. In Sec. \ref{sec:pre} we introduce the main technique and notions used in this paper. In Sec. \ref{sec:res} we show our main results. Finally we conclude in Sec. \ref{sec:con}. To keep the main text concise and conclusions clear, we place most of the proofs in the appendix \ref{app}.

\section{Preliminaries}
\label{sec:pre}

In this section, we introduce the preliminary notions and facts used in this paper. They are split into three parts on linear algebra, quantum information on bipartite and multipartite systems in Sec. \ref{sec:linear algebra}, \ref{sec:qi=bipartite}, and \ref{sec:qi=multipartite}, respectively. We first list some known conclusions, followed by new proofs, most of which can be found in the appendix \ref{app}. Then we further extend them to obtain new conclusions, which serve as a foundation for the subsequent results in Sec. \ref{sec:res}.

\subsection{linear algebra} 
\label{sec:linear algebra}

\begin{definition}
\label{def:mnblock_po}
     A bipartite Hermitian matrix $W \in M_m(\mathbb{C}) \otimes M_n(\mathbb{C})$ is called an $m\times n$ block-positive matrix if  $\langle a, b | W | a, b \rangle \geq 0, \ \forall |a \rangle \in \mathbb{C}^m, \ |b \rangle \in \mathbb{C}^n$. Then we denote the set of these matrices by $\mathcal{BP}_{m,n}$. 
    \qed
\end{definition}

\begin{lemma}
\label{le:BPmn}
    Let $W = [W_{i,j}]_{i,j=1}^{m,n} \in \mathcal{BP}_{m,n}$.

    (i) If $W_{k,k} = O$, then $W_{k,j} = W_{j,k} = O$.

    (ii) If for any $i$, the $k$-th diagonal entry of $W_{i,i}$ vanishes, then the entire $k$-th row and column of every block $W_{i,j}$ must be zero.
    \qed
\end{lemma}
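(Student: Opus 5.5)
We interpret $W=[W_{i,j}]$ as an $m\times m$ array of $n\times n$ blocks, so that $W_{i,j}=(\bra{i}\otimes I_n)W(\ket{j}\otimes I_n)$ and $\bra{a,b}W\ket{a,b}=\sum_{i,j}\bar a_i a_j\bra{b}W_{i,j}\ket{b}$. Both parts rest on one observation. If a $2\times 2$ Hermitian form $\bar x\,\alpha\,x + 2\mathrm{Re}(\bar x\,\beta\, y)+\bar y\,\gamma\, y$ is nonnegative for all $x,y\in\mathbb{C}$ and $\alpha=0$, then $\beta=0$. Indeed, with $\alpha=0$ we may take $y=1$ and $x=-t\beta$ for $t>0$ real. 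The form then equals $-2t|\beta|^2+\gamma$, which is negative for large $t$ unless $\beta=0$. The plan is to reduce each statement to this scalar fact by choosing $\ket{a}$ and $\ket{b}$ appropriately.

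\emph{Part (i).} Fix $j\neq k$ and an arbitrary $\ket{b}\in\mathbb{C}^n$. Take $\ket{a}=x\ket{k}+y\ket{j}$. Block positivity gives, for all $x,y\in\mathbb{C}$,
\[
|x|^2\bra{b}W_{k,k}\ket{b}+2\mathrm{Re}\big(\bar x y\bra{b}W_{k,j}\ket{b}\big)+|y|^2\bra{b}W_{j,j}\ket{b}\ge 0 .
\]
Since $W_{k,k}=O$, the observation yields $\bra{b}W_{k,j}\ket{b}=0$ for every $\ket{b}$. A complex matrix whose quadratic form vanishes identically is zero; this follows by polarization, applying the identity to $\ket{b}=\ket{u}+\ket{v}$ and $\ket{u}+i\ket{v}$. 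Hence $W_{k,j}=O$. Finally, $W_{j,k}=W_{k,j}^\dagger=O$ because $W$ is Hermitian.

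\emph{Part (ii).} The hypothesis is that $(W_{i,i})_{kk}=0$ for all $i$. We use the product vectors $\ket{a}\otimes(x\ket{k}+y\ket{l})$ with $\ket{a}$ arbitrary and $l\neq k$. Set $A_{pq}=[\,(W_{i,j})_{pq}\,]_{i,j}$, an $m\times m$ matrix. The expectation value becomes
\[
|x|^2\bra{a}A_{kk}\ket{a}+2\mathrm{Re}\big(\bar x y\bra{a}A_{kl}\ket{a}\big)+|y|^2\bra{a}A_{ll}\ket{a}\ge 0 .
\]
This is the main step, and the one I expect to be the obstacle. We know only that the diagonal of $A_{kk}$ vanishes, not that $A_{kk}=O$.

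First we show $A_{kk}=O$. Taking $y=0$ shows that $A_{kk}$ is positive semidefinite, because $\bra{a}A_{kk}\ket{a}=\bra{a,k}W\ket{a,k}\ge0$. A positive semidefinite matrix with zero diagonal is zero, so $A_{kk}=O$. In other words, the $(k,k)$ entry of every block $W_{i,j}$ vanishes.

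Next, the scalar observation gives $\bra{a}A_{kl}\ket{a}=0$ for all $\ket{a}$, and polarization gives $A_{kl}=O$. So the $(k,l)$ entry of every $W_{i,j}$ vanishes for every $l$, which means the $k$-th row of each block is zero. Hermiticity gives $(W_{i,j})_{lk}=\overline{(W_{j,i})_{kl}}=0$, so the $k$-th column of each block is also zero.
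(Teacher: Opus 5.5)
Your proof is correct, but it follows a different route from the paper's.

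The paper argues for the equivalent range form in Remark~\ref{rem:equivalentFormBP}. It diagonalizes $W_B=\sum_i W_{i,i}$ by a unitary $U$. Because each $W_{i,i}\ge 0$, every rotated diagonal block $UW_{i,i}U^*$ is supported on the first $s=\mathrm{rank}\,W_B$ coordinates. The paper then asserts that every block of $(I_m\otimes U)W(I_m\otimes U^*)$, including the off-diagonal ones, is supported there too. This gives $\mathcal{R}(W)\subseteq\mathbb{C}^m\otimes\mathcal{R}(W_B)$. An analogous, unproved claim on the first factor then gives $\mathcal{R}(W)\subseteq\mathcal{R}(W_A)\otimes\mathcal{R}(W_B)$. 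That inclusion is equivalent to (i) and (ii): take $\ket{k}\in\ker W_A$, respectively $\ket{k}\in\ker W_B$.

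The step from diagonal to off-diagonal blocks is exactly where block positivity has to be used. Since $W$ is not positive semidefinite, the usual principal-minor argument is unavailable. In the paper this step amounts to statement (ii) in the rotated basis, and it is left unjustified.

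Your $2\times 2$ compressions along the product vectors $(x\ket{k}+y\ket{j})\otimes\ket{b}$ and $\ket{a}\otimes(x\ket{k}+y\ket{l})$ supply precisely that missing ingredient. Combined with polarization, and with the observation that $A_{kk}$ is positive semidefinite with zero diagonal and hence zero, they make your argument the complete one. Your reading of ``for any $i$'' as ``for all $i$'' is also the right one: with a single $i$ the conclusion fails.

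The paper's formulation is coordinate-free, but nothing is lost by your route. Block positivity is invariant under local unitaries $V\otimes U$. So applying your entrywise result in a basis adapted to $\ker W_A$ (respectively $\ker W_B$) immediately yields the range inclusion as well.
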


We present a novel proof of Lemma \ref{le:BPmn} in Appendix \ref{app:leBPmn}.

\begin{remark}
\label{rem:equivalentFormBP}
An equivalent form of the above lemma is that 
\begin{eqnarray}
    \mathcal{R}(W) \subseteq \mathcal{R}(W_A) \otimes \mathcal{R}(W_B),
\end{eqnarray}
where $W_A := [\tr(W_{i,j})]_{i,j=1}^{m} = \left[\begin{matrix}\tr W_{1,1}&\cdots&\tr W_{1,m}\\\vdots&\ddots&\vdots\\\tr W_{m,1}&\cdots&\tr W_{m,m}\end{matrix}\right]$ and $W_B := \sum_{i=1}^m W_{i,i} = W_{1,1} + W_{2,2} + \cdots +W_{m,m}$. 
\end{remark}

\begin{lemma}
\label{le:ConvexityOfTr(W^2)}
If $W$ is an Hermitian matrix, then the matrix function $\tr(W^2)$ is a strictly convex function.
\end{lemma}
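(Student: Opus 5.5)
We prove the claim directly from the definition of strict convexity on the real vector space $\mathcal{H}_N$ of $N\times N$ Hermitian matrices. The key observation is that $f(W):=\tr(W^2)$ is the squared Hilbert--Schmidt norm on $\mathcal{H}_N$. Indeed, for Hermitian $W$ we have $\tr(W^2)=\tr(W^\dagger W)=\sum_{i,j}|W_{ij}|^2=\|W\|_2^2$, which is the squared Euclidean norm of the real coordinate vector of $W$. This vector collects the real diagonal entries together with the real and imaginary parts of the off-diagonal entries, the latter each scaled by $\sqrt{2}$. The inner product $\langle A,B\rangle=\tr(AB)$ is real for Hermitian $A,B$, so $\mathcal{H}_N$ is a real inner product space and $f(W)=\langle W,W\rangle$.

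Fix Hermitian $A\neq B$ and $t\in(0,1)$, and expand using bilinearity and the cyclicity identity $\tr(AB)=\tr(BA)$:
\begin{eqnarray}
t\,f(A)+(1-t)f(B)-f\big(tA+(1-t)B\big)=t(1-t)\,\tr\big((A-B)^2\big).
\end{eqnarray}
This is routine: the cross terms $2t(1-t)\tr(AB)$ combine with $t(1-t)\tr(A^2)+t(1-t)\tr(B^2)$ to give $t(1-t)\tr\big((A-B)^2\big)$.

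It remains to show the right-hand side is strictly positive. Since $A-B$ is Hermitian and nonzero, it has real eigenvalues $\mu_i$, not all zero. Hence $\tr\big((A-B)^2\big)=\sum_i\mu_i^2>0$, and because $t(1-t)>0$, the strict inequality $f(tA+(1-t)B)<t f(A)+(1-t)f(B)$ holds. This is exactly strict convexity.

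As a sanity check, the same conclusion follows from the second directional derivative: $\frac{d^2}{ds^2}\tr\big((W+sH)^2\big)=2\tr(H^2)>0$ for every Hermitian $H\neq 0$.

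The only point that needs care is the domain. Convexity should be asserted on the real space of Hermitian matrices, and all quantities must be real. Both hold here because $\tr(AB)\in\mathbb{R}$ for Hermitian $A,B$. The statement extends verbatim to any convex subset, such as trace-normalized or block-positive Hermitian matrices, which is the form used later for the induction over multipartite systems.
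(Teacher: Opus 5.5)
Your proof is correct and is essentially the paper's argument. The paper first derives $2\tr(AB)\le\tr(A^2+B^2)$ from $\tr[(A-B)^2]\ge 0$ and substitutes it into the expansion of $\tr[(\lambda A+(1-\lambda)B)^2]$, which is your identity $t f(A)+(1-t)f(B)-f(tA+(1-t)B)=t(1-t)\tr\big((A-B)^2\big)$ split into two steps, with strictness likewise coming from $\tr[(A-B)^2]>0$ for $A\neq B$. Your version makes explicit that strictness requires $t\in(0,1)$, which the paper's remark, stated only for $\lambda<1$, leaves implicit.
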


We present the proof of above lemma in Appendix \ref{app:leConvexityOfTr(W^2)}. Next, the following fact is known in algebraic geometry.

\begin{lemma}
\label{lem:Existence}
    (i) If the subspace $V \subseteq \mathbb{C}^m \otimes \mathbb{C}^n$ with $\dim V > (m-1)(n-1)$, then $V$ contains at least one product vector.

    (ii) If $W \in \mathcal{BP}_{m,n}$, then $W$ has no more than $(m-1)(n-1)$ negative eigenvalues.

(iii) Suppose the subset $X\subset \bbC^m\otimes\bbC^n$ consists of bipartite vectors of Schmidt rank at most $k$. Then the subset is an affine variety of dimension $k(m+n-k)-1$ in projectivization.
    \qed
\end{lemma}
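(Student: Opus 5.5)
The statement is Lemma \ref{lem:Existence}, and I plan to prove its three parts in the order (iii), (i), (ii). Parts (iii) and (i) are standard projective-geometry facts about Segre-type varieties, and (ii) follows from (i) by a short eigenspace argument.

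For (iii), I identify $\mathbb{C}^m\otimes\mathbb{C}^n$ with the space of $m\times n$ complex matrices, so that Schmidt rank equals matrix rank. The set $X$ of matrices of rank at most $k$ is the common zero locus of all $(k+1)\times(k+1)$ minors. These minors are homogeneous polynomials, so $X$ is a closed cone and its projectivization $\mathbb{P}X$ is a projective variety (the determinantal variety). To compute its dimension, I parametrize rank-$\le k$ matrices as $M=AB$ with $A\in M_{m,k}$ and $B\in M_{k,n}$. This map is surjective onto $X$. For generic $(A,B)$ of full rank $k$, the fiber is the orbit $\{(Ag,g^{-1}B): g\in GL_k\}$, which has dimension $k^2$. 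Hence the affine dimension of $X$ is $k(m+n)-k^2=k(m+n-k)$, and the projective dimension is $k(m+n-k)-1$. Irreducibility follows because $X$ is the image of an irreducible affine space, so the dimension count is well defined.

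For (i), I take $k=1$. The product vectors then form the Segre variety $\mathbb{P}^{m-1}\times\mathbb{P}^{n-1}$, of projective dimension $m+n-2$ inside $\mathbb{P}^{mn-1}$. The subspace $V$ gives a linear subvariety $\mathbb{P}V$ of dimension $\dim V-1$. By the projective dimension theorem, two projective varieties of dimensions $a$ and $b$ in $\mathbb{P}^N$ intersect whenever $a+b\ge N$. Here
\[
(\dim V-1)+(m+n-2)\ge mn-1 \iff \dim V\ge (m-1)(n-1)+1,
\]
and this is exactly the hypothesis $\dim V>(m-1)(n-1)$. So $V$ contains a nonzero product vector. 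This intersection step is the only nontrivial input; I will cite it (for instance from Hartshorne, or Harris's \emph{Algebraic Geometry}) rather than reprove it.

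For (ii), I argue by contradiction. Suppose $W\in\mathcal{BP}_{m,n}$ has at least $(m-1)(n-1)+1$ negative eigenvalues, and let $V$ be the span of the corresponding eigenvectors. Then $\dim V>(m-1)(n-1)$, so by (i) there is a product vector $|a,b\rangle\in V$ with $|a,b\rangle\neq0$. Since $W$ restricted to $V$ is negative definite, $\langle a,b|W|a,b\rangle<0$. This contradicts block-positivity, so $W$ has at most $(m-1)(n-1)$ negative eigenvalues.

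The main obstacle is justifying the intersection theorem and the generic-fiber dimension in (iii). Both are classical, and I would quote them rather than derive them.
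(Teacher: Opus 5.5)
Your proposal is correct and follows the route the paper intends. The paper gives no separate proof of this lemma and only recalls the Segre embedding in Remark \ref{rem:direction}; but in its appendix proofs of Lemmas \ref{le:2^3} and \ref{le:2^n} it uses exactly your count $\dim\mathbb{P}V+\dim(\mathrm{Segre})-\dim\mathbb{P}\mathcal{H}\ge 0$ followed by your negative-eigenspace contradiction, so your (i) and (ii) coincide with it, and your appeal to the projective dimension theorem is cleaner than the paper's union-based dimension formula. The paper merely asserts (iii) as known, so your determinantal argument via $M=AB$ with $GL_k$-orbit fibers supplies that part; note that this argument, like the formula $k(m+n-k)-1$ itself, tacitly requires $k\le\min(m,n)$.
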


We summarize the material related to Lemma \ref{lem:Existence} and present it as a remark in Appendix \ref{app:lemExistence}.


\subsection{quantum information: bipartite systems} 
\label{sec:qi=bipartite}

\begin{definition}
\label{def:ew_def2}
An Hermitian matrix $H \in \mathbb{M}_m \otimes \mathbb{M}_n = \mathbb{M}_{mn}$ is called an entanglement witness (EW) with the normalization, i.e., $\tr H = 1$, if  

    (i) $H = H^*$; 

    (ii) $H$ is not positive semi-definite, i.e., there exists $\vec{x} \in \mathbb{C}^m \otimes \mathbb{C}^n$ such that $\vec{x}^* H \vec{x} < 0$; 

    (iii) $(\vec{a}\otimes \vec{b})^* H (\vec{a}\otimes \vec{b}) \geq 0, \quad \forall \vec{a}\otimes \vec{b}\in \mathbb{C}^m \otimes \mathbb{C}^n$. 
    \qed
\end{definition}
From the definition, we present the following fact.

\begin{lemma}
\label{lem:entangled_eq}
    A bipartite state $\r$ is entangled if and only if there exists an entanglement witness $W$ such that $\tr(W\rho) < 0$.
    \qed
\end{lemma}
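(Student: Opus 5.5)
The plan is to prove the two implications separately, using convex duality between the closed convex cone $\mathcal{S}$ of (unnormalized) separable operators and the cone $\mathcal{BP}_{m,n}$ of Definition \ref{def:mnblock_po}. Here $\mathcal{S}$ is the set of finite sums $\sum_i p_i \proj{a_i}\otimes\proj{b_i}$ with $p_i\ge 0$. Throughout, the Hilbert--Schmidt inner product $\langle X,Y\rangle=\tr(XY)$ is used on Hermitian matrices in $\mathbb{M}_{mn}$, which makes this space a finite-dimensional real inner product space.

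\textbf{Easy direction.} Suppose an EW $W$ satisfies $\tr(W\rho)<0$. If $\rho$ were separable, write $\rho=\sum_i p_i \proj{a_i,b_i}$ with $p_i\ge0$. Condition (iii) of Definition \ref{def:ew_def2} gives
\[
\tr(W\rho)=\sum_i p_i\bra{a_i,b_i}W\ket{a_i,b_i}\ge 0,
\]
which is a contradiction. Hence $\rho$ is entangled.

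\textbf{Hard direction.} Let $\rho$ be entangled, so $\rho\notin\mathcal{S}$. First I check that $\mathcal{S}$ is a closed convex cone. Convexity is immediate. For closedness, the set of normalized separable states is the convex hull of the compact set of product pure states. By Carathéodory's theorem it is a continuous image of a compact set, hence compact, and the cone over a compact convex set not containing $0$ is closed. The separating hyperplane theorem then gives a Hermitian $H$ and a real $c$ with $\tr(H\rho)<c\le \tr(H\sigma)$ for all $\sigma\in\mathcal{S}$. Since $\mathcal{S}$ is a cone containing $0$, we may take $c=0$. This yields $\tr(H\rho)<0$ and $\bra{a,b}H\ket{a,b}\ge0$ for every product vector, so $H\in\mathcal{BP}_{m,n}$.

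Next I verify the remaining conditions of Definition \ref{def:ew_def2}. Condition (i) holds by construction. For (ii), if $H$ were positive semidefinite then $\tr(H\rho)\ge0$, contradicting $\tr(H\rho)<0$. For the normalization, I argue that $\tr H>0$. Averaging $\bra{a,b}H\ket{a,b}$ over Haar-random unit vectors $a,b$ gives $\tr H/(mn)\ge 0$. If $\tr H=0$, the average of a continuous nonnegative function vanishes, so $\bra{a,b}H\ket{a,b}=0$ for all product vectors. Product vectors span $\mathbb{C}^m\otimes\mathbb{C}^n$, and polarization in each factor then forces $H=0$, which contradicts $\tr(H\rho)<0$. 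So $\tr H>0$, and $W=H/\tr H$ is a normalized EW with $\tr(W\rho)<0$.

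The only genuinely delicate step is this normalization argument. The required fact is that a block-positive matrix vanishing on all product vectors must be zero. I would derive it either by the polarization argument above or from Lemma \ref{le:BPmn} applied to both $H$ and $-H$.
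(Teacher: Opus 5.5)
Your proof is correct and complete. The paper does not actually prove this lemma. It states it with a bare \qed, introduced by ``From the definition, we present the following fact,'' so it is treating the statement as the standard Horodecki criterion.

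That remark accounts only for your easy direction. The direction ``entangled $\Rightarrow$ some EW detects $\rho$'' does not follow from Definition \ref{def:ew_def2} alone. It needs exactly the separation step you supply: closedness of the separable cone, then the separating hyperplane theorem with $c=0$ forced by the cone property.

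Your cone-duality framing matches the method the paper itself uses in its appendix proof of Lemma \ref{lem:dew_eq}, where the decomposable cone and the PPT cone are shown to be dual via the bipolar theorem. So your argument fits the paper's methods.

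You also verify the normalization $\tr W=1$ built into Definition \ref{def:ew_def2}, which a bare citation skips. Your Haar-average and polarization argument there is correct, but it can be shortened.
\begin{itemize}
\item $\tr H=\tr(HI)\ge 0$ follows at once, because $I=\sum_{i,j}\proj{i}\otimes\proj{j}$ lies in the separable cone.
\item To avoid the zero-trace case altogether, replace $H$ by $H+\epsilon I$ with $0<\epsilon<-\tr(H\rho)$. This operator is still block-positive, still has $\tr((H+\epsilon I)\rho)<0$ (so it is not positive semidefinite), and has trace $\tr H+\epsilon mn>0$.
\end{itemize}
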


It is known that determining whether a bipartite positive-partial-transpose (PPT) state is entangled is an NP-hard problem. The following notion of DEWs has been proposed for effectively characterizing non-PPT states.

\begin{definition}
\label{def:DEW2}
A bipartite entanglement witness $H \in \mathbb{M}_m \otimes \mathbb{M}_n$ is a decomposable EW (DEW), if there exist positive semi-definite Hermitian matrices $X, Y \in \mathbb{M}_m \otimes \mathbb{M}_n$ such that      \begin{eqnarray}
\label{eq:H=X+YG}
    H = X + Y^{\Gamma},
\end{eqnarray}
where $Y^{\Gamma}$ denotes the partial transpose of $Y$ with respect to the first subsystem. Otherwise, $H$ is called a nondecomposable EW (NDEW).
\end{definition}
\begin{lemma}
\label{lem:dew_eq}
An EW $H \in \mathbb{M}_m \otimes \mathbb{M}_n$ is a DEW if and only if $\tr(H\sigma) \ge 0, \quad \forall \text{ PPT state } \sigma $.
\end{lemma}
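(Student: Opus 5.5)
The plan is to prove both directions of Lemma \ref{lem:dew_eq} by duality, working in the real space of Hermitian matrices on $\mathbb{C}^m\otimes\mathbb{C}^n$ with the Hilbert--Schmidt inner product $\langle A,B\rangle=\tr(AB)$. The one identity needed throughout is that partial transposition is self-adjoint for this inner product:
\begin{eqnarray}
\tr(Y^{\Gamma}\sigma)=\tr(Y\sigma^{\Gamma}).
\end{eqnarray}
This follows by writing $Y=\sum_k A_k\otimes B_k$ and $\sigma=\sum_l C_l\otimes D_l$, and using $\tr(A^TC)=\tr(AC^T)$ on the first factor.

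For the \emph{only if} direction, suppose $H=X+Y^{\Gamma}$ with $X,Y\ge 0$, and let $\sigma$ be a PPT state. Then
\begin{eqnarray}
\tr(H\sigma)=\tr(X\sigma)+\tr(Y\sigma^{\Gamma}).
\end{eqnarray}
The first term is nonnegative because $X\ge0$ and $\sigma\ge0$. The second is nonnegative because $Y\ge0$ and $\sigma^{\Gamma}\ge0$. Hence $\tr(H\sigma)\ge0$.

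For the \emph{if} direction I would use a separating hyperplane argument.
\begin{itemize}
\item Let $\mathcal{C}=\{X+Y^{\Gamma}: X,Y\ge0\}$. This is a convex cone, being the Minkowski sum of the PSD cone $P$ and its image $P^{\Gamma}$.
\item Show that $\mathcal{C}$ is closed. This is the main obstacle, since a sum of two closed cones need not be closed. It suffices that $P$ and $P^{\Gamma}$ are pointed and satisfy $P\cap(-P^{\Gamma})=\{0\}$. Indeed, if $X+Y^{\Gamma}=0$ with $X,Y\ge0$, take the trace: partial transposition preserves the trace, so $\tr X+\tr Y=0$, which forces $X=Y=0$. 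Closedness then follows by compactness. For a convergent sequence $X_k+Y_k^{\Gamma}$, we have $\tr X_k+\tr Y_k=\tr(X_k+Y_k^{\Gamma})$, which is bounded. So $X_k$ and $Y_k$ are bounded, and after passing to a subsequence they converge to PSD limits $X,Y$, whose combination $X+Y^{\Gamma}$ is the limit.
\item Suppose $H\notin\mathcal{C}$. By the separation theorem for closed convex cones, there is a Hermitian $Z$ with $\tr(ZH)<0$ and $\tr(ZC)\ge0$ for all $C\in\mathcal{C}$.
\item Taking $C=X$ ranging over $P$ gives $Z\ge0$. Taking $C=Y^{\Gamma}$ and using the self-adjointness identity gives $\tr(YZ^{\Gamma})\ge0$ for all $Y\ge0$, so $Z^{\Gamma}\ge0$.
\item $Z\neq0$, since $\tr(ZH)<0$. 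Because $Z\ge0$, this means $\tr Z>0$, so $\sigma=Z/\tr Z$ is a PPT state with $\tr(H\sigma)<0$. This contradicts the hypothesis, so $H\in\mathcal{C}$, i.e.\ $H$ is a DEW.
\end{itemize}

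Two side remarks. The EW property of $H$ is used only to place $H$ in the class under discussion; the decomposition argument itself never needs it. Also, which subsystem is partially transposed is irrelevant, because $(Y^{T_A})^T=Y^{T_B}$ and transposition preserves positivity, so both conventions give the same cone.
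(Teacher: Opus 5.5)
Your proof is correct and follows essentially the same duality route as the paper. The paper invokes the bipolar theorem for $\mathcal{C}=\{X+Y^{\Gamma}: X,Y\ge 0\}$ and shows $\mathcal{C}^*=\mathcal{P}$ by testing against rank-one $X$ and $Y$, which is exactly your separating-functional step giving $Z\ge 0$ and $Z^{\Gamma}\ge 0$; your trace-boundedness proof that $\mathcal{C}$ is closed is a worthwhile addition, since the paper only asserts it.
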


We present a novel proof of Lemma \ref{lem:dew_eq} in Appendix \ref{app:lemdew_eq}, accompanied by an illustrative example that constructs a new EW by using a known EW. 

\begin{remark}
\label{rem:abWab}
Note that $\langle a, b | W | a, b \rangle = (|a \rangle \otimes |b \rangle)^*W(|a \rangle \otimes |b \rangle) = \langle b| \left[ (\langle a| \otimes I_n) W (|a \rangle \otimes I_n) \right]|b \rangle$. So the condition in the preceding definition is equivalent to $(\langle a| \otimes I_n) W (|a \rangle \otimes I_n)$ being a positive semi-definite matrix, since $|b \rangle$ is arbitrary. 
\qed
\end{remark}

Next, we introduce some facts on bipartite EWs from Lemma 8 of \cite{2025Spectral}.
\begin{lemma}
	\label{le:EW=general}
	Let $W$ be a normalized $m\times n$ EW. Then
	
    (i) $\tr(W^2)\in (\frac{1}{mn-1},1]$.

	(ii) $\l_{mn}\in [-\frac{1}{2},0)$. If $\l_{mn}=-\frac{1}{2}$, then W is optimal. 
	
	(iii)  $\l_1\in (\frac{1}{mn-1},1)$.

	(iv) For $m=2$: (a) $\l_2+\l_{2n}\ge 0$, (b) $\sum_{i=3}^{2n} \l_i\ge -\frac{1}{2+2\sqrt{2}}$ and (c)
	$\sum_{i=k}^{2n} \l_{i}\ge -\frac{1}{2}$ for any $4\le k\le 2n$.
	
	(v) For $m=n$, $\cN(W)\le \frac{m-1}{2}$. If the inequality is saturated, then $W$ is optimal and has eigenvalues $\frac{1}{m}$ with multiplicity $\frac{m(m+1)}{2}$, and $-\frac{1}{m}$ with multiplicity $\frac{m(m-1)}{2}$. 
\qed
	\end{lemma}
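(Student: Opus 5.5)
We prove the five items in turn, with $W$ normalized ($\tr W=1$) and eigenvalues $\l_1\ge\cdots\ge\l_{mn}$. The tools are block-positivity, Lemma \ref{lem:Existence} (ii), and Cauchy--Schwarz.

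For (ii), first $\l_{mn}<0$ because $W$ is not positive semidefinite. For the lower bound, let $|x\rangle$ be a unit eigenvector of $\l_{mn}$ with Schmidt decomposition $|x\rangle=\sum_i s_i|a_i,b_i\rangle$. Form the mixture of product vectors $\sum_{i,j}|a_i,b_j\rangle\langle a_i,b_j|$ (extended to full local bases). Since this operator is the identity and $\tr W=1$, and every product-vector expectation is nonnegative, we need to compare $\langle x|W|x\rangle$ with these nonnegative terms. The cleanest route is the standard argument: for block-positive $W$ and any state $|x\rangle$, $\langle x|W|x\rangle\ge -\tfrac12\tr W$. To get it, I would twirl over local phases $|a_i\rangle\to e^{i\phi_i}|a_i\rangle$ applied to the product vector $\sum_i|a_i\rangle\otimes\sum_j s_j|b_j\rangle$-type combinations. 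Averaging the nonnegativity constraints $\langle a,b|W|a,b\rangle\ge0$ over suitable $a=\sum\alpha_i a_i$, $b=\sum\beta_j b_j$ should yield $\langle x|W|x\rangle+\sum$ (diagonal terms) $\ge0$, with the diagonal terms bounded by $\tr W$. I expect this to be the main obstacle; if the direct twirl fails, I would cite the known result from \cite{2025Spectral}. Optimality at $-\tfrac12$ then follows because saturation forces the product vectors in the averaged identity to lie in the kernel of $\langle\cdot|W|\cdot\rangle$. These vectors span the whole space, which yields the spanning property sufficient for optimality.

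For (iii), $\l_1<1$ holds because $\sum_i\l_i=1$ and $\l_{mn}<0$. For the lower bound, suppose $\l_1\le\frac{1}{mn-1}$. Then $\l_{mn}=1-\sum_{i<mn}\l_i\ge 1-1=0$, a contradiction. So $\l_1>\frac{1}{mn-1}$.

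For (i), write $\tr W^2=\sum_i\l_i^2$. The upper bound $\tr W^2\le1$ is the delicate part. I would use $\l_{mn}\ge-\tfrac12$ together with Lemma \ref{lem:Existence}: a block-positive operator satisfies $\tr W^2\le(\tr W)^2$. I would prove this by decomposing $W=P-N$ into positive and negative parts and showing $\tr N^2\le 2\tr(PN)$-type estimates via block-positivity, i.e. $\|N\|$ is controlled by $\tr P$. For the lower bound, apply Cauchy--Schwarz to the $mn-1$ eigenvalues other than $\l_{mn}$: $\sum_{i<mn}\l_i^2\ge\frac{(1-\l_{mn})^2}{mn-1}>\frac{1}{mn-1}$.

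For (iv) with $m=2$, I would use $2\times n$ structure. By Lemma \ref{lem:Existence} (ii), $W$ has at most $n-1$ negative eigenvalues. The inequality $\l_2+\l_{2n}\ge0$ follows from restricting to the span of the top two eigenvectors together with the bottom one: that 3-dimensional subspace meets every 2-dimensional product-free complement, so some product vector lies in the span of $v_2,\dots,v_{2n}$ with weight on $v_{2n}$, and a min-max argument gives the bound. Items (b) and (c) come from optimizing $\langle a,b|W|a,b\rangle\ge0$ over product vectors in the spans of the bottom eigenvectors. The constants $\tfrac{1}{2+2\sqrt2}$ and $\tfrac12$ arise from the optimal overlap of a product vector with a two-qubit-like entangled subspace.

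For (v) with $m=n$, use (ii) and $\tr W=1$ to bound the negativity, giving $\cN(W)=\sum_{\l_i<0}|\l_i|\le\frac{m(m-1)}{2}\cdot\frac1m$. Equality is attained by the flip operator $F/m$, which is the saturating case.
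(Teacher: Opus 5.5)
You have two genuine parts right. Everything else is either wrong or only announced. The paper gives no proof of this lemma; it is imported from Lemma 8 of \cite{2025Spectral}. Your Cauchy--Schwarz lower bound in (i) and your argument for $\lambda_1>\frac{1}{mn-1}$ are correct, and they are what the paper itself re-derives for multipartite EWs in Section \ref{sec:res}.

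\textbf{Errors in (iii), (i) and (v).} Your proof of $\lambda_1<1$ in (iii) is a non sequitur. From $\tr W=1$ and $\lambda_{mn}<0$ you get $\sum_{i<mn}\lambda_i=1-\lambda_{mn}>1$, which points the wrong way; $\mathrm{diag}(5,-4)$ has trace one and a negative eigenvalue. Block-positivity must enter essentially, and the paper notes it is unknown whether $\lambda_1$ of a three-qubit NDEW can exceed one. A bipartite tool that works is the Gurvits--Barnum ball: every state on $\mathbb{C}^m\otimes\mathbb{C}^n$ with $\tr\rho^2\le\frac{1}{mn-1}$ is separable. Since $(I-\proj{x})/(mn-1)$ has purity exactly $\frac{1}{mn-1}$, you get $\langle x|W|x\rangle\le\tr W$ for every unit $x$. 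Strictness still needs an equality analysis.

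The upper bound in (i) is also unproved. Neither $\lambda_{mn}\ge-\frac12$ nor Lemma \ref{lem:Existence} gives it. Your estimate $\tr N^2\le 2\tr(PN)$ is vacuous, because $PN=0$ for the spectral positive and negative parts. The same ball settles it: pair $W$ with the separable state $\frac{I}{mn}-t(W-\frac{I}{mn})$, where $t\|W-\frac{I}{mn}\|_2=\frac{1}{\sqrt{mn(mn-1)}}$, to obtain $\tr W^2\le(\tr W)^2$.

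In (v) the count is wrong. Lemma \ref{lem:Existence}(ii) allows up to $(m-1)^2$ negative eigenvalues, which exceeds $\frac{m(m-1)}{2}$ for $m\ge3$. Item (ii) bounds each by $\frac12$, not $\frac1m$, so you only get $\cN(W)\le\frac{(m-1)^2}{2}$. The saturation clause (optimality and exact spectrum) is not addressed; exhibiting $F/m$ only shows attainability. One route: $\frac{m-1}{2}I+\Pi$ satisfies the Gurvits--Barnum purity bound for every projector $\Pi$, tightly only at rank $\frac{m(m-1)}{2}$. Taking $\Pi$ to be the negative spectral projector of $W$ gives $\cN(W)\le\frac{m-1}{2}$.

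\textbf{Incomplete steps in (ii) and (iv).} In (ii) the phase twirl is the right idea, but you leave the decisive choice open, and falling back on \cite{2025Spectral} just assumes the claim. The weighting you indicate ($1$ on $A$, $s_j$ on $B$) leaves diagonal coefficients $s_j^2$ and only gives $\langle x|W|x\rangle\ge-s_1^2$. Put $\sqrt{s_i}$ on both sides with opposite phases: $u_\phi=\sum_i\sqrt{s_i}e^{i\phi_i}\ket{a_i}$ and $v_\phi=\sum_j\sqrt{s_j}e^{-i\phi_j}\ket{b_j}$. The phase average of $\proj{u_\phi,v_\phi}$ is then $\proj{x}+\sum_{i\ne j}s_is_j\proj{a_i,b_j}$, hence $\langle x|W|x\rangle\ge-s_1s_2\tr W\ge-\frac12$. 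Equality forces $s_1=s_2=\frac{1}{\sqrt2}$ and $\langle u_\phi,v_\phi|W|u_\phi,v_\phi\rangle=0$ for all $\phi$. It also forces $\langle a_i,b_j|W|a_i,b_j\rangle=0$ for all $(i,j)\notin\{(1,2),(2,1)\}$. These product vectors span the space, so your spanning-property argument for optimality then goes through.

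Item (iv) is not proved at all. Write a product vector as $p=\alpha v_{2n}+\beta w$ with $w\in\mathrm{span}\{v_2,\dots,v_{2n-1}\}$. This only gives $|\alpha|^2\lambda_{2n}+(1-|\alpha|^2)\lambda_2\ge0$, which implies (a) only if $|\alpha|^2\ge\frac12$. Nothing in your sketch produces such a product vector orthogonal to $v_1$.

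For (b) and (c), the paper's Theorem \ref{le:2tin1234}(iii) shows the mechanism. Pair $W$ with absolutely separable $2\times n$ spectra, using Johnston's criterion $\mu_1\le\mu_{2n-1}+2\sqrt{\mu_{2n-2}\mu_{2n}}$ and von Neumann's trace inequality. The spectrum $(2p+1,\dots,2p+1,p^2,1,1)$ at $p=1+\sqrt2$ gives (b). The spectrum $(3,\dots,3,1,\dots,1)$ with $k-1$ ones gives (c) for each $4\le k\le 2n$.
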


The following fact extends Lemma \ref{le:EW=general} for DEWs.

\begin{lemma}
	\label{le:lambdaIN[-1/2,1)}

Let $W$ represent an element of $m\times n$ normalized DEWs ($m\le n$). 

(i) For each $x\in (\frac{1}{mn-1},1]$, there exists a $W$ such that $\tr(W^2)=x$. The infimum of $\tr(W^2)$ is $\frac{1}{mn-1}$ and not attainable. The supremum of $\tr(W^2)$ is 1 and attainable.

(ii) $\tr(W^2)=1$ if and only if $W$ is the partial transpose of a pure entangled state.

(iii) For each $x\in [-\frac{1}{2},0)$, there exists a $W$ such that $\l_{mn}(W)=x$. The infimum of $\l_{mn}$ is $-\frac{1}{2}$ and 
attainable. The supremum of $\l_{mn}$ is $0$ and not
attainable.

(iv)
$\l_{mn}(W)=-\frac{1}{2}$ if and only if $W$ is the partial transpose of a pure state whose two nonzero Schmidt coefficients are both $\frac{\sqrt{2}}{2}$. 
	
(v) For each $x\in (\frac{1}{mn-1},1)$, there exists a $W$ such that $\l_1(W)=x$. The infimum of $\l_1(W)$ is $\frac{1}{mn-1}$ and not attainable. The supremum of $\l_1(W)$ is 1 and not attainable.

(vi) $0<\cN(W)\le \frac{m-1}{2}$. For each $x\in (0,\frac{m-1}{2}]$, there exists a $W$ such that $\cN(W)=x$. The supremum of $\cN(W)$ is $\frac{m-1}{2}$ and attainable. The infimum of $\cN(W)$ is 0 and not attainable.

(vii) For the integer $1\le j\le \lfloor\frac{n}{m}\rfloor$, we define the maximally entangled state $\ket{\Phi_j}:=\frac{1}{\sqrt{m}}\sum_{i=1}^m \ket{i,i+(j-1)m}\in \bbC^m\otimes \bbC^n$. Then $\cN(W)=\frac{m-1}{2}$ if and only if under local unitary equivalence, $W$ can be written as the convex combination of $\proj{\Phi_1}^\G,\cdots,\proj{\Phi_k}^\G$, where $1\le k\le \lfloor \frac{n}{m} \rfloor$.

(viii) Let $m=2$. The infimum of $\sum_{i=3}^{2n} \l_i$ is $-\frac{1}{2+2\sqrt{2}}$, which is attained by $(x_1\ket{11}+x_2\ket{22})
(x_1\bra{11}+x_2\bra{22})^\G$, where $x_1^2+x_1x_2=1-\frac{1}{2+2\sqrt{2}}$ (numerically, $x_1=0.92388, x_2=0.382683$). For any $4\le k\le 2n$, the infimum of 
$\sum_{i=k}^{2n} \l_{i}$ is $-\frac{1}{2}$, which can be attained by the partial transpose of a qubit-qudit maximally entangled state.

(ix) For $3\le m\le n$, the infima of $\l_{mn}+\l_{mn-1}$ and  $\l_{mn}+\l_{mn-1}+\l_{mn-2}$
are $-\frac{\sqrt{2}}{2}$ and $-1$ respectively. The first infimum can be attained by the partial transpose of a pure state whose nonzero Schmidt coefficients are $\frac{\sqrt{2}}{2},\frac{1}{2},\frac{1}{2}$.  The second can be attained by the partial transpose of a two-qutrit maximally entangled state. 
\qed
\end{lemma}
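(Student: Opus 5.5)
The statement is a list of nine items. The plan is to prove them from Lemma \ref{le:EW=general}, which supplies the ranges for all EWs, together with explicit DEWs built as convex combinations of partial transposes of pure states. The basic building block is $W_\psi=\proj{\psi}^\G$ for $\ket{\psi}=\sum_i s_i\ket{ii}$. This operator is a DEW with $X=0$ and $Y=\proj{\psi}$, it has trace $1$, and its spectrum is explicit: $s_i^2$ on the vectors $\ket{ii}$, and $\pm s_is_j$ on $(\ket{ij}\pm\ket{ji})/\sqrt2$ for $i<j$. It is a genuine EW exactly when at least two $s_i$ are nonzero.

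\textbf{Items (ii), (iv) and the extremes in (i), (iii).} For (ii), note that for a DEW $W=X+Y^\G$ we have $\tr W^2\le 1$ by Lemma \ref{le:EW=general}(i). In the converse direction I would argue that $\tr W^2=1$ forces $\sum\l_i^2=(\sum\l_i)^2$. Since $W$ has a negative eigenvalue, I would then combine the decomposition with $\tr X+\tr Y=1$ and $\|Y^\G\|_2=\|Y\|_2\le\tr Y$, with equality exactly when $Y$ is rank one, to force $X=0$ and $Y$ pure. The spectrum above gives $\tr W_\psi^2=(\sum s_i^2)^2=1$, and $\l_{\min}=-\max_{i<j}s_is_j\ge-\frac12$ with equality iff two coefficients equal $\frac{\sqrt2}{2}$, which gives (iv).

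The reverse direction of (iv) is the main obstacle. From $\l_{mn}(W)=-\frac12$ I must deduce that $W$ is a single pure partial transpose. My plan is to write $\l_{mn}=\min_v\bra{v}X+Y^\G\ket{v}$ and note that $\bra{v}Y^\G\ket{v}\ge-\tr Y\cdot\frac12$ by the pure case. Convexity then gives $\tr Y=1$ and $X=0$. Finally, an extremality argument applies Lemma \ref{le:ConvexityOfTr(W^2)}-type strictness to the decomposition $Y=\sum p_k\proj{\psi_k}$, which forces all $\psi_k$ to share the same minimizing vector and hence to coincide.

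\textbf{Intermediate values in (i), (iii), (v).} I would take continuous one-parameter families and apply the intermediate value theorem:
\begin{itemize}
\item For (iii), take $W_\psi$ with $s=(\cos t,\sin t)$, so that $\l_{\min}=-\cos t\sin t$ sweeps $[-\frac12,0)$.
\item For (i) and (v), use $W_t=(1-t)\frac{I-\epsilon\proj{\phi}}{mn-\epsilon}+tW_\psi$, or more simply mix $W_\psi$ with the normalized projector onto the positive part. The quantities $\tr W^2$ and $\l_1$ then move continuously from values near $\frac1{mn-1}$ up to $1$.
\end{itemize}
The infimum $\frac1{mn-1}$ is approached by operators close to $\frac{1}{mn-1}(I-\proj{\Phi})$ with a tiny negative part, for example $(1-\delta)$ times the normalized projector onto $\Phi^\perp$ plus $\delta$ times the partial transpose of the maximally entangled state. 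Non-attainability is inherited from the open endpoints in Lemma \ref{le:EW=general}(i),(iii). That $\l_1<1$ is likewise inherited, and that $\l_1\to1$ is approached follows from $s=(\sqrt{1-\epsilon},\sqrt\epsilon)$.

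\textbf{Items (vi)--(ix).} These follow from the same eigenvalue formula. For (vi), $\cN(W)=\sum_{i<j}s_is_j$ for pure partial transposes, which is at most $\frac{m-1}{2}$ by Cauchy--Schwarz, with equality at uniform $s$. For a general DEW, (vi) follows by subadditivity of the negativity norm over the decomposition. For (vii), equality in every one of these inequalities forces $X=0$ and each component to be maximally entangled with mutually orthogonal supports. For (viii), optimize $\sum_{i\ge3}\l_i=1-s_1^2-s_1s_2$ with $s_1^2+s_2^2=1$; this gives the stated $x_1,x_2$. The general lower bounds come from Lemma \ref{le:EW=general}(iv). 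Item (ix) would be handled by the analogous three-coefficient optimization, but I cannot yet see how to reduce a general DEW to the pure case for these partial sums. Since the proposal only checks attainment by explicit pure states, the claimed infimum values in (ix) remain unproven.
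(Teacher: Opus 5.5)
\textbf{Gap: the lower bounds in (ix) are not proved.} The paper lists this lemma among compiled known results and gives no proof, so the question is whether your argument closes. It does not for (ix), as you say yourself. The bounds $\l_{mn}+\l_{mn-1}\ge-\frac{\sqrt2}{2}$ and $\l_{mn}+\l_{mn-1}+\l_{mn-2}\ge-1$ for every DEW are unproved, and Lemma \ref{le:EW=general} does not supply them.

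The missing reduction is Ky Fan's principle. It is the $k$-dimensional version of what you already do for $k=1$ in (iv). The quantity $S_k(A):=\sum_{i=mn-k+1}^{mn}\l_i(A)=\min\{\tr(PA): P\text{ a rank-}k\text{ projector}\}$ is superadditive and positively homogeneous, and $S_k(X)\ge0$ for $X\ge0$. So for $W=X+Y^\G$ with $Y=\sum_jp_j\proj{\psi_j}$ you get $S_k(W)\ge\sum_jp_jS_k(\proj{\psi_j}^\G)\ge c_k\tr Y\ge c_k$. Here $c_k<0$ is the minimum over normalized pure states, and $\tr Y\le1$.

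The pure case is then elementary. Take $s_1\ge s_2\ge\cdots$, with $s_i=0$ beyond the Schmidt rank.
\begin{itemize}
\item The two smallest eigenvalues sum to $-s_1(s_2+s_3)\ge-s_1\sqrt{2(1-s_1^2)}\ge-\frac{\sqrt2}{2}$, with equality at $(\frac{\sqrt2}{2},\frac12,\frac12)$.
\item The three smallest sum to $-(s_1s_2+s_1s_3+\max\{s_2s_3,s_1s_4\})\ge-1$. This holds because $s_1s_2+s_1s_3+s_2s_3\le s_1^2+s_2^2+s_3^2\le1$ and $s_1(s_2+s_3+s_4)\le\frac{\sqrt3}{2}$.
\end{itemize}
Items (ii), (iii), (vi) and (viii) are fine as sketched.

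\textbf{Steps that fail as written.} For the infimum $\frac1{mn-1}$ in (i) and (v), the vector you project out must have negative expectation in the partial-transpose term. If $\Phi$ is the maximally entangled state, the negative eigenspace of $\proj{\Phi}^\G$ (eigenvalue $-\frac1m$) is orthogonal to $\ket{\Phi}$. Hence $(1-\delta)\frac{I-\proj{\Phi}}{mn-1}+\delta\proj{\Phi}^\G$ is positive definite for $0<\delta<\frac{m}{mn+m-1}$ and is not an EW. Your family with $I-\epsilon\proj{\phi}$, $\epsilon<1$, is likewise positive definite for small $t$. Mixing with the projector onto the positive eigenspace of $W_\psi$ only gets down to $1/\mathrm{rank}$ of that projector, which is $\frac13$ for Schmidt rank two.

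Take instead $W_\delta=(1-\delta)\frac{I-\proj{\a}}{mn-1}+\delta\proj{\psi}^\G$, with $\ket{\psi}=s_1\ket{11}+s_2\ket{22}$ and $\ket{\a}=\frac{1}{\sqrt2}(\ket{12}-\ket{21})$ its negative eigenvector, as the paper does in \eqref{eq:Wab}. Then $\l_{mn}=-\delta s_1s_2<0$ and $\l_1=\frac{1-\delta}{mn-1}+\delta s_1^2$. Also $\tr W_\delta^2$ runs continuously from $\frac1{mn-1}$ (as $\delta\to0$) to $1$ (at $\delta=1$). This gives (i), and with $s_1\to1$ it gives (v).

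In (vii), equality does not force the components of an arbitrary decomposition of $Y$ to have orthogonal supports. For example, $\frac12(\proj{\Phi_1}+\proj{\Phi_2})=\frac13\sum_{\omega^3=1}\proj{\chi_\omega}$ with $\ket{\chi_\omega}=(\ket{\Phi_1}+\omega\ket{\Phi_2})/\sqrt2$. What equality does give is that every unit vector of $\mathcal{R}(Y)$ is maximally entangled, since every decomposition must saturate $\|Y^\G\|_1\le m\tr Y$. You still have to show two things.
\begin{itemize}
\item This forces $\mathcal{R}(Y)$ to be $\mathrm{span}\{\ket{\Phi_1},\ldots,\ket{\Phi_k}\}$ up to a unitary on $B$. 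Write orthonormal vectors as $m\times n$ matrices $M_1,M_2$; then $(M_1+cM_2)(M_1+cM_2)^\dagger=\frac{1+|c|^2}{m}I_m$ for all $c$ gives $M_1M_2^\dagger=0$.
\item A unitary acting as $I_m\otimes V$ on the $B$-blocks diagonalizes $Y$ in the $\Phi_j$ basis.
\end{itemize}

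Finally, in (iv), strict convexity of $\tr(W^2)$ plays no role. Your ``hence coincide'' step needs the fact that a Schmidt-rank-two maximally entangled $\psi$ is determined up to phase by the $-\frac12$-eigenvector of $\proj{\psi}^\G$.
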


The following fact extends Lemma \ref{le:EW=general} for NDEWs.
\begin{lemma}
	\label{le:NDEW}
	Let $W$ represent an element of $m\times n$ normalized  NDEWs ($m\le n$). Then
	
	(i) The supremum of $\tr(W^2)$ is 1.
	
	(ii) The infimum of $\tr(W^2)$ is not attainable.
	
    (iii) The supremum of $\l_{mn}$ is 0 and  not attainable.

    (iv) The infimum of $\l_{mn}$ is $-\frac{1}{2}$ and not attainable.

   (v) The supremum of $\l_1$ is 1 and not attainable.

	(vi) The infimum of $\l_1$ is not attainable.
	
	(vii) For $m=2$ and $m=n\ge 3$, the supremum of $\cN(W)$ is $\frac{m-1}{2}$.  
	
	(viii) For $m=n=3$, the supremum of $\cN(W)$ is not attainable.
	
	(ix) The infimum of $\cN(W)$ is $0$ and not attainable.
\qed
	\end{lemma}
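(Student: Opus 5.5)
The plan is to handle the nine items of Lemma \ref{le:NDEW} through three mechanisms. The first is to transfer the DEW bounds. Every NDEW is an EW, so Lemma \ref{le:EW=general} gives a priori ranges: $\tr(W^2)\in(\frac{1}{mn-1},1]$, $\l_{mn}\in[-\frac12,0)$, $\l_1\in(\frac{1}{mn-1},1)$, and $\cN(W)\le\frac{m-1}{2}$ when $m=n$. What remains is to show which extremes are approached and which are attained within the NDEW class.

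The second mechanism is the \emph{perturbation step}. I fix one NDEW $W_0$, for example the normalized Choi matrix of the Choi map when $m=n=3$ or a Breuer--Hall type witness in general; for $m=2$ one embeds a $3\times 3$ or $2\times 4$ NDEW suitably. Take any extremal DEW $D$ from Lemma \ref{le:lambdaIN[-1/2,1)} and form $W_t=(1-t)D+tW_0$. The set of PPT states is convex and compact, and $\tr(W_0\sigma)<0$ for some PPT $\sigma$. On the other hand, $\tr(D\sigma)$ is bounded. Hence for a suitable $t\in(0,1)$ small enough, or after tilting, we get $\tr(W_t\sigma)<0$, so by Lemma \ref{lem:dew_eq} $W_t$ is an NDEW. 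I must make sure $W_t$ is not merely NDEW for $t$ near 1. The cleaner route is to define $W_t=D+t(W_0-D')$ along directions that keep block-positivity: use $W_t=(1-t)D+tW_0$ and note that $t\mapsto\min_{\sigma\,\rm PPT}\tr(W_t\sigma)$ is concave in $t$ and negative at $t=1$. The critical issue is whether it is already negative for all small $t>0$. That need not hold, so instead I will pick $D$ and then pick the PPT entangled state $\sigma$ detected by $W_0$, and choose $D$ vanishing on $\sigma$ where possible. Here $D=\proj{\psi}^\Gamma$ with $\psi$ in the kernel of $\sigma^\Gamma$ gives $\tr(D\sigma)=0$, and then $\tr(W_t\sigma)=t\tr(W_0\sigma)<0$ for every $t>0$. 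Choosing $\psi$ with the required Schmidt coefficients (maximally entangled of rank two, or general) and adjusting $\sigma$ by local unitaries makes this feasible. Continuity of eigenvalues and of $\tr(W^2)$ in $t$ then yields (i), the supremum $0$ in (iii), the infimum $-\frac12$ in (iv), the supremum $1$ in (v), (vii) using the DEWs from Lemma \ref{le:lambdaIN[-1/2,1)}(vii), and the infimum $0$ in (ix) by mixing with $I/mn$-type DEWs.

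The third mechanism is \emph{non-attainability}. For (iii), (v) and (ix), attaining the value would contradict Lemma \ref{le:EW=general}, since those are open bounds for all EWs. For (iv), $\l_{mn}=-\frac12$ forces, by the equality analysis behind Lemma \ref{le:lambdaIN[-1/2,1)}(iv), $W$ to be the partial transpose of a two-Schmidt-coefficient pure state, hence a DEW, which is a contradiction. For (ii) and (vi) I argue as follows. Suppose an NDEW $W^*$ attains the infimum of $\tr(W^2)$ or $\l_1$ over NDEWs. Mix it with a nearby DEW $D$ whose value is smaller, which is available because the DEW infimum $\frac{1}{mn-1}$ is approached. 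By Lemma \ref{le:ConvexityOfTr(W^2)} (strict convexity), or by convexity of $\l_1$, points $(1-t)W^*+tD$ with small $t$ have strictly smaller value whenever the NDEW infimum exceeds the DEW infimum, and they remain NDEWs by the openness of $\{\tr(W\sigma)<0\}$. If instead the NDEW infimum equals $\frac{1}{mn-1}$, non-attainment follows from Lemma \ref{le:EW=general}(i),(iii). Item (viii) I expect to be the main obstacle. Equality in $\cN(W)=1$ for $3\times3$ forces the spectrum $\pm\frac13$ with its optimal structure by Lemma \ref{le:EW=general}(v). I would then show that such $W$ is a mixture of $\proj{\Phi}^\Gamma$, i.e.\ decomposable, by analyzing the $3$-dimensional negative eigenspace, which must contain no product vector (Lemma \ref{lem:Existence}), and identifying it with the antisymmetric subspace up to local unitaries.
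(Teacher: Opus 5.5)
The paper states Lemma~\ref{le:NDEW} without proof. It appears to be imported from earlier bipartite work, like Lemmas~\ref{le:EW=general} and \ref{le:lambdaIN[-1/2,1)}, so there is no in-paper argument to match; the closest is Theorem~\ref{le:TrW^2_NDEW}, where an NDEW attaining the infimum is mixed with the identity and stays an NDEW by openness. Your treatment of (ii) and (vi) is that same idea and is correct. Mixing with $I/mn$ directly would even remove your case split, since $\tr(W^2)$ and $\l_1$ of every EW exceed $\frac{1}{mn-1}>\frac{1}{mn}$.

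Two items, however, are not proved. In (viii) the whole content is the step \emph{identify the negative eigenspace with the antisymmetric subspace}, and the absence of product vectors cannot give it. A generic $3$-dimensional subspace of $\bbC^3\otimes\bbC^3$ is completely entangled, because the Segre variety has dimension $4$ in $\bbP^8$. What block-positivity of $W=\frac13(I-2P_V)$ actually gives is $\bra{a,b}P_V\ket{a,b}\le\frac12$ for all unit product vectors. You would need a rigidity theorem showing that this forces $V$ to be a local-unitary image of the antisymmetric subspace, i.e.\ $W$ locally unitarily equivalent to $\proj{\Phi}^\G$; nothing in your sketch supplies it.

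In (iv), non-attainability cannot be read off Lemma~\ref{le:lambdaIN[-1/2,1)}(iv), which is an equivalence inside the DEW class. For general EWs the paper only gives optimality (Lemma~\ref{le:EW=general}(ii)), so you must show that every EW with $\l_{mn}=-\frac12$ is decomposable. This is doable. Let $\ket{x}=\sum_k s_k\ket{a_k,b_k}$ be the eigenvector, and average $\bra{a,b}W\ket{a,b}\ge0$ over $\ket{a}=\sum_k\sqrt{s_k}e^{i\varphi_k}\ket{a_k}$, $\ket{b}=\sum_k\sqrt{s_k}e^{-i\varphi_k}\ket{b_k}$. This gives $\bra{x}W\ket{x}\ge-\sum_{i\ne j}s_is_j\bra{a_i,b_j}W\ket{a_i,b_j}\ge-s_1s_2\tr W$. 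Equality then forces $s_1=s_2=\frac{1}{\sqrt2}$ and the vanishing of every diagonal entry $\bra{a_i,b_j}W\ket{a_i,b_j}$ with $i\ge3$ or $j\ge3$. By Lemma~\ref{le:BPmn}, $W$ lives on a two-qubit block, where every EW is decomposable.

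The existence halves also need repair. Your kernel trick is the right mechanism: take $D=\proj{\ps}^\G$ with $\ket{\ps}\in\ker\sigma^\G$, so $\tr(D\sigma)=0$ and $(1-t)D+tW_0$ detects $\sigma$ for all $t>0$. But local unitaries preserve Schmidt coefficients, so they cannot move a prescribed $\ket{\ps}$ into $\ker\sigma^\G$. Use local invertible maps $\sigma\mapsto(A\otimes B)\sigma(A\otimes B)^\dagger$ instead. They preserve PPT entanglement and send $\ker\sigma^\G$ to $(\bar A\otimes B)^{-\dagger}\ker\sigma^\G$. It then suffices that $\ker\sigma^\G$ contain some vector of the required Schmidt rank: $2$ for (iv) and for (vii) with $m=2$, and $m$ for (vii) with $m=n$.
\begin{itemize}
\item For $m,n\ge3$, UPB states give this. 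For $m=n>3$, embed a $3\times3$ one and add $\sum_{i>3}\ket{ii}$ to a rank-three kernel vector.
\item For $m=2$ no UPB exists, so you need a separate example, e.g.\ Horodecki's $2\times4$ state, whose partial transpose annihilates $\ket{02}-\ket{11}$.
\end{itemize}

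For (ix), $I/mn$ is not a DEW, and $\tr(I\sigma)>0$ violates your own kernel condition. Mixing with it therefore gives no control of the negativity on the range of $t$ where the mixture is still an NDEW. Instead, Lemma~\ref{le:existence_0f_10000} (NDEWs with spectrum arbitrarily close to $(1,0,\dots,0)$) yields the approximation parts of (i), (iii), (v) and (ix) at once.

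Finally, for (vii) with $m=2$ the bound $\cN(W)\le\frac12$ does not come from Lemma~\ref{le:EW=general}(v), which needs $m=n$. It follows from Lemma~\ref{le:EW=general}(iv)(a)--(c).
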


\begin{lemma}
\label{le:W=xI_mn-|ps><ps|}
(i) Let $W=x I_{mn}-\proj{\ps}$ be a bipartite EW, where $x\in(0,1)$ is lower bounded by the maximum Schmidt coefficient of entangled state $\ket{\ps}$. Then $W$ is a bipartite DEW.    

(ii) Let $\ket{\ps}$ be an $n$-partite genuinely entangled pure state in $\cH_D$. Let $W=x I_D-\proj{\ps}$ be an $n$-partite EW, where $x\in(0,1)$ is lower bounded by the maximum Schmidt coefficient of $\ket{\ps}$ w.r.t. any system bipartition. Then $W$ is an $n$-partite DEW.    
\end{lemma}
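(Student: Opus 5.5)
The plan is to exhibit an explicit decomposition $W=P+Q^{\G}$ with $P,Q\ge 0$ in the Schmidt basis of $\ket{\ps}$. Throughout, I read ``maximum Schmidt coefficient'' as the largest squared Schmidt coefficient $s_1^2$. This is the quantity for which $xI-\proj{\ps}$ is block-positive, because $\max_{a,b}|\langle a,b|\ps\rangle|^2=s_1^2$.

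\textbf{Reduction for (i).} Write $\ket{\ps}=(U\otimes V)\sum_{i=1}^r s_i\ket{ii}$ with $s_1\ge s_2\ge\cdots\ge s_r>0$. Decomposability is invariant under local unitaries. Indeed, $(U\otimes V)Y^{\G}(U\otimes V)^\dagger=\big((\bar U\otimes V)Y(\bar U\otimes V)^\dagger\big)^{\G}$, and conjugation preserves positivity. Hence we may assume $\ket{\ps}=\sum_i s_i\ket{ii}$.

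\textbf{Construction for (i).} For $i<j$ set $\ket{a_{ij}}=\frac{1}{\sqrt2}(\ket{ij}-\ket{ji})$ and define $Q:=2\sum_{i<j}s_is_j\proj{a_{ij}}\ge 0$. The partial transpose on the first factor maps $\ket{ij}\bra{ji}\mapsto\ket{jj}\bra{ii}$, so
\begin{eqnarray}
Q^{\G}=\sum_{i<j}s_is_j\big(\proj{ij}+\proj{ji}-\ket{ii}\bra{jj}-\ket{jj}\bra{ii}\big).
\end{eqnarray}
We also have $\proj{\ps}=\sum_i s_i^2\proj{ii}+\sum_{i<j}s_is_j(\ket{ii}\bra{jj}+\ket{jj}\bra{ii})$. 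Subtracting, the off-diagonal terms cancel, and $P:=W-Q^{\G}$ is diagonal in the product basis. Its entries are:
\begin{itemize}
\item $x-s_i^2$ on $\ket{ii}$;
\item $x-s_is_j$ on $\ket{ij}$ and $\ket{ji}$ for $i<j\le r$;
\item $x$ on all remaining basis vectors.
\end{itemize}
Since $s_is_j\le s_1^2\le x$, every entry is nonnegative, so $P\ge 0$. Therefore $W=P+Q^{\G}$ is decomposable. It is an EW by hypothesis: it is block-positive because $x\ge s_1^2$, and not positive because $x<1$.

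\textbf{Part (ii).} I would treat a multipartite DEW as an EW admitting a decomposition $P+\sum_S Q_S^{\G_S}$ with $P,Q_S\ge0$, where $\G_S$ denotes partial transpose on a subset $S$ of parties. A single bipartite term then suffices. Fix any bipartition $S|\bar S$ and regard $\ket{\ps}$ as a bipartite vector in $\cH_S\otimes\cH_{\bar S}$. Its largest squared Schmidt coefficient is at most $x$ by hypothesis, and it is entangled across the cut because $\ket{\ps}$ is genuinely entangled. Applying (i) to this cut gives $W=P+Q^{\G_S}$ with $P,Q\ge0$. Block-positivity on fully product vectors follows because they are product across the cut, and $W$ is not positive since $x<1$.

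\textbf{Main obstacle.} The only real point of care is matching the paper's definition of multipartite DEW, together with the convention that ``Schmidt coefficient'' means its square. If the definition instead demands a decomposition involving every bipartition, I would average the decompositions from (i) over all cuts, using convexity of the DEW cone, to produce a symmetric decomposition.
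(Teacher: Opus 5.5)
Your argument is correct and is essentially the paper's proof. The paper only observes that $W^{\G}=xI-(\proj{\ps})^{\G}\ge 0$, because $(\proj{\ps})^{\G}$ has spectrum $s_i^2,\pm s_is_j$ with maximum $s_1^2\le x$; since your $P$ is diagonal in a product basis ($P^{\G}=P$), your $W=P+Q^{\G}$ is just that fact made explicit as $W^{\G}=P+Q\ge 0$. For (ii) the paper likewise reduces to (i) across one cut, and its Definition of a multipartite DEW requires $|S_j|\le\lfloor n/2\rfloor$, so take $S$ to be the smaller side of the cut (equivalently use $Q^{\G_{\bar S}}=(Q^{T})^{\G_S}$ with $Q^{T}\ge 0$).
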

\begin{proof}
(i) One can show the claim by straightforward calculation and definition of DEW, namely the partial transpose of $W$ is semidefinite positive.    

(ii) The claim follows from (i).
\end{proof}

The following fact is from Lemma 9 in \cite{2025Spectral}.
\begin{lemma}
\label{le:existence_0f_10000}
    Any deficient-rank NPT (resp. PPT) entangled state can be detected by a normalized DEW (resp. NDEW) which is arbitrarily close to a pure state. Consequently, normalized $W$ exists for both DEWs and NDEWs such that $\lambda(W)$ is arbitrarily close to $(1,0,\cdots,0)$.
\end{lemma}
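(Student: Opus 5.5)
The statement has two parts. The first is that any deficient-rank NPT (resp. PPT) entangled state $\r$ can be detected by a normalized DEW (resp. NDEW) arbitrarily close to a pure state. The consequence is that $\lambda(W)$ can be made arbitrarily close to $(1,0,\cdots,0)$.

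\textbf{Setup.} Since $\r$ has deficient rank, pick a unit vector $\ket{v}\in\ker\r$ and set $P=\proj{v}$, so $\tr(P\r)=0$. The plan is to perturb $P$ by a small multiple of a witness that detects $\r$, keeping block-positivity and the negative value on $\r$.

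\textbf{NPT case.} Here $\r^\G$ has an eigenvector $\ket{\ps}$ with $\bra{\ps}\r^\G\ket{\ps}<0$. Put $Q=\proj{\ps}^\G$. Then $Q$ is decomposable and block-positive, since $Q=0+Y^\G$ with $Y=\proj{\ps}\ge0$ as in Definition~\ref{def:DEW2}, and
\[
\tr(Q\r)=\tr(\proj{\ps}\r^\G)<0 .
\]
Define $W_\epsilon=\frac{P+\epsilon Q}{1+\epsilon\tr Q}$. Note $\tr Q=1$, since partial transpose preserves trace.
\begin{itemize}
\item $W_\epsilon$ is of the form $X+Y^\G$ with $X=P/(1+\epsilon)\ge0$ and $Y=\epsilon\proj{\ps}/(1+\epsilon)\ge0$, so it is block-positive and decomposable.
\item $\tr(W_\epsilon\r)=\epsilon\tr(Q\r)/(1+\epsilon)<0$, so $W_\epsilon$ detects $\r$. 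In particular it is not PSD, hence a genuine EW and therefore a DEW.
\item As $\epsilon\to0$, $W_\epsilon\to P$ in norm.
\end{itemize}

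\textbf{PPT case.} Since $\r$ is PPT entangled, Lemma~\ref{lem:entangled_eq} gives an EW $Q$ with $\tr(Q\r)<0$. Because $\r$ is PPT, Lemma~\ref{lem:dew_eq} shows $Q$ cannot be a DEW, so $Q$ is an NDEW. Form $W_\epsilon=(P+\epsilon Q)/(1+\epsilon\tr Q)$ as before, choosing $\epsilon$ small enough that $1+\epsilon\tr Q>0$.
\begin{itemize}
\item $W_\epsilon$ is block-positive as a positive combination of block-positive operators.
\item $\tr(W_\epsilon\r)<0$, so $W_\epsilon$ is an EW.
\item $W_\epsilon$ is still nondecomposable: if it were a DEW, Lemma~\ref{lem:dew_eq} would force $\tr(W_\epsilon\r)\ge0$ because $\r$ is PPT, a contradiction. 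This makes the PPT case slightly cleaner than the NPT one.
\end{itemize}

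\textbf{Consequence.} Eigenvalues are continuous functions of the matrix (Weyl's inequality $|\l_i(A)-\l_i(B)|\le\|A-B\|$). Since $W_\epsilon\to P$ and $\lambda(P)=(1,0,\dots,0)$, we get $\lambda(W_\epsilon)\to(1,0,\dots,0)$.

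To get both families, it remains to exhibit a deficient-rank NPT entangled state and a deficient-rank PPT entangled state. For the first, take any pure entangled state. For the second, take a PPT entangled state of non-full rank, for instance one from a UPB construction in $3\times3$ (or, in general, any known bound entangled state of deficient rank).

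The main obstacle is the existence of a deficient-rank PPT entangled state in every dimension $m\times n$ considered. For $2\times2$ and $2\times3$ no PPT entangled states exist at all, so the NDEW part of the consequence must implicitly be restricted to dimensions where PPT entanglement occurs. In those dimensions I would embed a $3\times3$ or $2\times4$ UPB-based state into the larger space; the embedded state has deficient rank and remains PPT entangled.
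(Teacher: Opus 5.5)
Your proof is correct. The paper gives no argument of its own for this statement; it is quoted as Lemma 9 of \cite{2025Spectral}. So there is nothing in the text to compare against, and your construction $W_\epsilon=(\proj{v}+\epsilon Q)/(1+\epsilon\tr Q)$ with $\ket{v}\in\ker\r$ stands as a self-contained proof.

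Two points make it work, and you use both. First, $\tr(\proj{v}\r)=0$ forces $\tr(W_\epsilon\r)$ to have the sign of $\tr(Q\r)$ for every $\epsilon>0$. Second, nondecomposability in the PPT case needs only the easy half of Lemma~\ref{lem:dew_eq}, namely $\tr[(X+Y^\G)\r]=\tr(X\r)+\tr(Y\r^\G)\ge0$, not the duality argument. Your caveat that the NDEW half of the ``consequently'' clause is empty in $2\times2$ and $2\times3$ is also right, since every EW there is decomposable.

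There is one small slip in the existence step. There are no unextendible product bases in $2\times n$ systems, so a ``$2\times4$ UPB-based state'' does not exist.
\begin{itemize}
\item For $m=2$, $n\ge4$, use Horodecki's $2\times4$ PPT entangled state instead; it has rank $5<8$. Your parenthetical about any deficient-rank bound entangled state already covers this.
\item For $m,n\ge3$, the $3\times3$ UPB state of rank $4$ works.
\end{itemize}

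You should also write out why the embedding into the larger space is harmless, rather than just asserting it. Two one-line facts suffice.
\begin{itemize}
\item PPT is preserved under local isometries: $[(V_A\otimes V_B)\r(V_A\otimes V_B)^\dagger]^\G=(\bar{V}_A\otimes V_B)\r^\G(\bar{V}_A\otimes V_B)^\dagger\ge0$.
\item Entanglement is preserved: any product vector in the range of the embedded state has its factors in the embedded local subspaces. Hence a separable decomposition in the large space would pull back to one in the small space.
\end{itemize}
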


\subsection{Quantum information: multipartite system}
\label{sec:qi=multipartite}

\begin{definition}
\label{def:DEWn}
An $n$-partite entanglement witness $H \in \bbM_D:= \mathbb{M}_{d_1} \otimes... \otimes \mathbb{M}_{d_n}=\cB(\cH_{A_1}\otimes...\otimes\cH_{A_n})$ is a decomposable EW (DEW), if there exist positive semi-definite Hermitian matrices $X_1, X_2,... \in \bbM_D$ such that      
\begin{eqnarray}
\label{eq:H=X+YG}
        H = X_1 + \sum_{j>1} X_j^{\Gamma_{S_j}},
\end{eqnarray}
where $\Gamma_{S_j}$ denotes the partial transpose with respect to the subsystem family $S_j\subset\{A_1,...,A_n\}$, such that $\abs{S_j}\le \lfloor{n \over 2}\rfloor$ and $S_j\cup S_k\ne \{A_1,...,A_n\}$. Otherwise, $H$ is called a nondecomposable EW (NDEW). 
    \qed
\end{definition}

The following fact is from \cite{2025Sufficient}.
\begin{lemma}
\label{le:multi-AS}
Let \( \rho \) be a normalized \( N \)-qudit state acting on a Hilbert space of total dimension \( D = d^N \). If  
\[
\operatorname{Tr} \left( \rho^2 \right) \leq \frac{1}{D - A}, \tag{15}
\]
with \( A = 2^{2-N} \), then \( \rho \) is absolutely fully separable.  

For the specific case of qubits, \( d = 2 \), this bound can be improved to \( A = \frac{\beta \cdot 2^N}{\beta + 3^N} \), where \( \beta = 54/17 \).
\qed
\end{lemma}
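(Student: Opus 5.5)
Full separability is preserved by local unitaries, but absolute full separability asks that $U\rho U^\dagger$ be fully separable for \emph{every} global unitary $U$ on $\cH_D$. The plan is to prove the stronger, unitarily invariant statement: every normalized state $\rho$ with $\tr(\rho^2)\le \frac{1}{D-A}$ is fully separable. Since $\tr((U\rho U^\dagger)^2)=\tr(\rho^2)$, absolute full separability then follows at once. For the separability criterion I would take the Gurvits--Barnum ball theorem for $N$ qudits as a known external result, quoted exactly as follows: if $\Delta$ is Hermitian on $\cH_D$ (not necessarily traceless) with Frobenius norm $\|\Delta\|_2\le 2^{1-N/2}$, then $I_D+\Delta$ is fully separable. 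Note that $\left(2^{1-N/2}\right)^2=2^{2-N}=A$.

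The key step is to choose the normalization in the decomposition $\rho=c(I_D+\Delta)$ optimally, instead of simply centring at $I_D/D$. Centring at $I_D/D$ gives only the weaker condition $\tr(\rho^2)-\frac1D\le \frac{A}{D^2}$. For any $c>0$, set $\Delta=\rho/c-I_D$, so that
\begin{eqnarray}
\|\Delta\|_2^2=\frac{\tr(\rho^2)}{c^2}-\frac{2}{c}+D .
\end{eqnarray}
Writing $p=\tr(\rho^2)$, the right-hand side is minimized over $c$ at $c=p$, where it equals $D-\frac1p$. Hence, whenever $D-\frac1p\le A$, that is $p\le\frac{1}{D-A}$, we get $\|\Delta\|_2\le 2^{1-N/2}$. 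The ball theorem then shows that $I_D+\Delta$ is fully separable, and so is $\rho=p(I_D+\Delta)$, because $p>0$ and separability is preserved under positive scaling. This reproduces exactly the threshold $\frac{1}{D-A}$ with $A=2^{2-N}$.

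For the qubit refinement $A=\frac{\beta 2^N}{\beta+3^N}$ with $\beta=54/17$, the plan is to run the same optimized-scaling argument with a sharper separable-ball radius available for $N$ qubits. One sets $\|\Delta\|_2^2\le A$ with this $A$, and must supply the corresponding radius, presumably obtained by combining Hildebrand's bound for two-qubit blocks with a Pauli-basis expansion. Expanding in the Pauli basis, the weight-$k$ terms contribute with factors counted by $3^k$, which explains the $3^N$ in the denominator.

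The main obstacle is this qubit refinement: it depends entirely on the improved separable-ball radius, which is not stated anywhere earlier in the paper. The general qudit case reduces cleanly to the Gurvits--Barnum bound plus the elementary optimization over $c$ above. For the qubit constant, the first thing I would try is to verify that the improved radius follows from an inductive tensor-product argument over the qubits, with the two-qubit ball as the base case.
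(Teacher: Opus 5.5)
The paper gives no proof of this lemma; it imports it from \cite{2025Sufficient}. Your argument for the general qudit bound is correct. The optimization over $c$ is exactly what turns the unnormalized Gurvits--Barnum ball into the ``tighter'' normalized statement already recorded in Lemma~\ref{le:m-partite=sep}, because $\|\rho-I_D/D\|_2^2=\tr(\rho^2)-\frac{1}{D}$ and $\frac{1}{D-A}-\frac{1}{D}=\frac{A}{D(D-A)}$ with $A=2^{2-N}$. Unitary invariance of the purity then gives absolute full separability.

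The genuine gap is the qubit refinement, which you do not prove. The Pauli-weight count and the suggested induction are only heuristics, and the whole content of the claim lies in the separable-ball theorem you leave unsupplied. What is needed is that every $N$-qubit state with $\|\rho-I/2^N\|_2\le\sqrt{54/17}\,6^{-N/2}$ is fully separable; this is Hildebrand's bound, which is where $\beta=54/17$ comes from. Equivalently, $I+\Delta$ is fully separable for traceless $\Delta$ with $\|\Delta\|_2^2\le\beta(2/3)^N$. By the same scaling computation you did, this is equivalent to your desired condition $\|\Delta\|_2^2\le A$ for arbitrary Hermitian $\Delta$. Given this theorem the claim is immediate, with no optimization over $c$ needed, because for $A=\frac{\beta 2^N}{\beta+3^N}$ one has $\frac{1}{2^N-A}=2^{-N}+\beta\,6^{-N}$.

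Moreover, stated for all $N$ as it is, the refinement is false at $N=2$, so it cannot come from an induction that reproduces the formula starting from the two-qubit ball. At $N=2$ one gets $A=\frac{24}{23}>1$ and the threshold $\frac{23}{68}>\frac{1}{3}$, whereas $\frac{1}{3}$ is sharp for two qubits. To see sharpness, take $(\lambda_1,\lambda_2,\lambda_3)=(\frac{1}{3}+2\epsilon,\frac{1}{3}-\epsilon,\frac{1}{3}-\epsilon)$ and the state $\lambda_1\proj{\phi^+}+\lambda_3\proj{\phi^-}+\lambda_2\proj{01}$. It has purity $\frac{1}{3}+6\epsilon^2$, yet its partial transpose restricted to $\mathrm{span}\{\ket{01},\ket{10}\}$ has determinant $-\frac{9}{4}\epsilon^2<0$, so it is entangled. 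The qubit statement must therefore be read for $N\ge 3$; already at $N=3$ it asks for the traceless radius $\|\Delta\|_2^2\le\frac{16}{17}$.
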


The following fact is from \cite{2003LGH}. 
\begin{lemma}
\label{le:m-partite=sep}
If an \( m \)-partite state \( \rho: H_1 \otimes \cdots \otimes H_m \to H_1 \otimes \cdots \otimes H_m \) satisfies \( \| \rho - I/d \|_2 \leq 1/2^{m/2-1} d \), where \( d = \dim(H_1 \otimes \cdots \otimes H_m) \), then it is separable.

It actually gives the (negligibly) tighter statement with \( 2^{m/2-1} \sqrt{d(d-2^{-(m-2)})} \) in the denominator.
\qed    
\end{lemma}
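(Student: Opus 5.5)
This is the Gurvits–Barnum ball criterion, so I would follow their strategy: rescale, reduce to an unnormalized statement about perturbations of the identity, and induct on the number of parties using a bipartite "product-averaging" argument. Write $\rho = \frac{1}{d}(I + \Delta)$ with $\Delta$ Hermitian and $\tr\Delta = 0$. Then $\|\rho - I/d\|_2 \le 1/(2^{m/2-1}d)$ is equivalent to $\|\Delta\|_2 \le 2^{1-m/2}$. It therefore suffices to prove the following claim.

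\emph{Claim $P(m)$.} For every Hermitian $\Delta$ on an $m$-partite space with $\|\Delta\|_2 \le c_m := 2^{1-m/2}$, the operator $I+\Delta$ is (unnormalized) fully separable.

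The base case $m=2$ has $c_2 = 1$. Split the $m$-partite space as $H_1\otimes\cdots\otimes H_{m-1}$ against $H_m$. Choose a Hermitian orthonormal basis $\{\sigma_k\}$ of operators on $H_m$ with $\sigma_0 = I/\sqrt{d_m}$, and expand $\Delta = \sum_k A_k\otimes\sigma_k$, so that $\|\Delta\|_2^2 = \sum_k \|A_k\|_2^2$. Take independent random signs or phases $\xi_k$, and build factors $X(\xi) = I + \sum_k \xi_k A_k'$ and $Y(\xi) = I + \sum_k \bar\xi_k \sigma_k'$, where the $A_k'$ and $\sigma_k'$ are the $A_k$ and $\sigma_k$ rescaled by reciprocal constants. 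The rescaling is chosen so that $\mathbb{E}_\xi\, X(\xi)\otimes Y(\xi) = I\otimes I + \Delta$; the cross terms vanish because $\mathbb{E}\,\xi_k\bar\xi_l = \delta_{kl}$ and $\mathbb{E}\,\xi_k = 0$. Using symmetrized real combinations such as $X\otimes Y + X^*\otimes Y^*$ keeps each summand Hermitian.

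For $m=2$, each factor is positive semidefinite as soon as the operator norm of its perturbation is at most $1$. The operator norm is bounded by the Hilbert–Schmidt norm, and an optimal balancing of the rescaling constants (via Cauchy–Schwarz) shows that $\|\Delta\|_2 \le 1$ suffices. Then $I+\Delta$ is an average of products of positive operators, hence separable.

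For the inductive step, run the same decomposition, but now require only that $X(\xi)$ be $(m-1)$-partite separable. By $P(m-1)$ this holds once $\|\sum_k \xi_k A_k'\|_2 \le c_{m-1}$. The factor $Y(\xi)$ on $H_m$ still needs only to be positive semidefinite, which is controlled in operator norm by $\|\sum_k \bar\xi_k\sigma_k'\|_2$. The product of a separable operator on the first $m-1$ parties with a positive operator on $H_m$ is fully separable, and convex combinations preserve separability, so $I+\Delta$ is fully separable.

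What remains is to track the constants. The Hilbert–Schmidt norm of $\sum_k \xi_k A_k$ is at most $\sqrt{2}$ times $(\sum_k\|A_k\|_2^2)^{1/2}$, and the $\sqrt{2}$ appears because of the Hermitian symmetrization of the random combination. This step forces $c_m = c_{m-1}/\sqrt{2}$, which gives $c_m = 2^{1-m/2}$.

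I expect the main obstacle to be this norm bookkeeping in the inductive step. One must choose the rescaling between the $X$- and $Y$-factors so that the $H_m$ factor remains positive without losing any more than the single factor $\sqrt{2}$ per party. The $\sigma_0 = I/\sqrt{d_m}$ component, which is absorbed into the identity part, also has to be handled carefully. Finally, the sharper denominator $2^{m/2-1}\sqrt{d(d-2^{-(m-2)})}$ should follow by using $\tr\Delta = 0$ to exclude the identity direction, which slightly improves the bound $\|\cdot\|_{op}\le\|\cdot\|_2$ for traceless operators.
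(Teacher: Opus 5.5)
The paper gives no proof of this lemma; it is quoted from Gurvits and Barnum. Your proposal, however, has a genuine gap at its quantitative core, and the gap appears already in the base case $m=2$.

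In your scheme every realization $X(\xi)$, $Y(\xi)$ must be positive semidefinite, and you certify this via $\|\cdot\|_\infty\le\|\cdot\|_2$. Take the most favourable expansion, an operator-Schmidt decomposition $\Delta=\sum_k s_kA_k\otimes B_k$ with both families Hermitian and orthonormal. Then $\|\sum_k\xi_ka_kA_k\|_2^2=\sum_ka_k^2$ and $\|\sum_k\xi_k(s_k/a_k)B_k\|_2^2=\sum_ks_k^2/a_k^2$ for every sign pattern. You need both to be at most $1$. But Cauchy--Schwarz gives $(\sum_ka_k^2)(\sum_ks_k^2/a_k^2)\ge(\sum_ks_k)^2$, so optimal balancing certifies separability only when $\sum_ks_k\le1$. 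That is a cross-norm condition, and $\sum_ks_k\ge(\sum_ks_k^2)^{1/2}=\|\Delta\|_2$: Cauchy--Schwarz runs the wrong way. The loss is a full dimension factor. For $\Delta=-\proj{\psi}$ with $\ket{\psi}$ maximally entangled in $\bbC^n\otimes\bbC^n$ one has $\|\Delta\|_2=1$, but there are $n^2$ Schmidt coefficients equal to $1/n$. So $\sum_ks_k=n$, and the scheme certifies only $I-\frac1n\proj{\psi}$.

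With your actual expansion (orthonormal $\sigma_k$ on $H_m$ only) things are worse. The $A_k$ are not orthogonal, so $\|\sum_k\xi_kA_k\|_2^2=\sum_k\|A_k\|_2^2$ holds only on average over $\xi$, whereas positivity (or $(m-1)$-separability) is needed for \emph{every} $\xi$. If all $A_k$ equal some $A$, the all-plus pattern gives $K\|A\|_2=\sqrt{K}(\sum_k\|A_k\|_2^2)^{1/2}$. The same objection defeats your claim that the inductive loss is exactly $\sqrt2$ per party. Also, symmetrizing $X\otimes Y+X^*\otimes Y^*$ with non-Hermitian factors does not produce products of positive operators, so it cannot be the source of that factor.

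The missing idea is duality. The cone of fully separable operators is dual to the cone of $m$-partite block-positive $W$, and $\tr[(I+\Delta)W]\ge\tr W-\|\Delta\|_2\|W\|_2$. So it suffices to prove $\|W\|_2^2\le2^{m-2}(\tr W)^2$ for every such $W$.

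For $m=2$ this is $\tr W^2\le(\tr W)^2$; its normalized form is Lemma~\ref{le:EW=general}(i). One proof goes as follows. For each unit $\ket{x}\in\bbC^a$, $W_x=(\bra{x}\otimes I)W(\ket{x}\otimes I)\ge0$, so $\tr W_x^2\le(\tr W_x)^2$. Haar-average over $\ket{x}$ using $\int(\bra{x}M\ket{x})^2\,dx=\frac{(\tr M)^2+\tr M^2}{a(a+1)}$. This yields $\tr W^2+\tr W_B^2\le(\tr W)^2+\tr W_A^2$, with $W_A=\tr_BW$ and $W_B=\tr_AW$. Adding the same inequality with the parties swapped gives the claim.

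For the induction, condition the last party on a Haar-random $\ket{y}$. The operator $W_y$ is $(m-1)$-block-positive, and averaging gives $\|W\|_2^2+\|\tr_{A_m}W\|_2^2\le2^{m-3}[(\tr W)^2+\tr w^2]\le2^{m-2}(\tr W)^2$, where $w=\tr_{A_1\cdots A_{m-1}}W\ge0$. The factor $2$ per party comes from $\tr w^2\le(\tr w)^2$, not from symmetrization.

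The sharper denominator then follows because $\tr\Delta=0$ lets you pair $\Delta$ with the traceless part of $W$. Its squared norm is $\|W\|_2^2-(\tr W)^2/d\le(2^{m-2}-\frac1d)(\tr W)^2$, which gives exactly $2^{m/2-1}\sqrt{d(d-2^{-(m-2)})}$. It does not come from an operator-norm improvement for traceless operators.
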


The definition of bipartite optimized EWs has been proposed in \cite{lewenstein2000optimization1}. We extend it to multipartite scenarios. 
\begin{definition}
\label{def:finer}
    For an entanglement witness (EW), we define the set
    \begin{eqnarray}
        D_W := \{ \rho \in B(\mathbb{C}^{d_1} \otimes \cdots \otimes \mathbb{C}^{d_n}) \mid \operatorname{Tr}(W \rho) < 0 \}.
    \end{eqnarray}
    We say that $W_1$ is finer than $W_2$, provided $D_{W_1} \supseteq D_{W_2}$. Furthermore, an EW $W$ is optimal when no EW is finer than $W$.
\end{definition}
\begin{theorem}
\label{lem:finer_P}
    Let $W_1$ and $W_2$ be two normalized $d_1 \times d_2 \times \cdots \times d_n$ EWs. If $W_2$ is finer than $W_1$, then there is any positive operator $P$ and an $k \in [0, 1)$ such that 
    \begin{eqnarray}
    \label{eq:W1W2andP}
        W_1=(1-k)W_2+k P.
    \end{eqnarray}
\end{theorem}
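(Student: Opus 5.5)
The plan is to reduce the statement to a separation argument in $\bbR^2$. The goal is a constant $c\in(0,1]$ with $W_1-cW_2\ge 0$; then $k:=1-c$ works. If $k>0$ we take $P:=(W_1-cW_2)/k$. If $k=0$, then $W_1-W_2$ is positive semidefinite and traceless, so $W_1=W_2$, and any positive operator $P$ works. Taking traces of $W_1-cW_2\ge0$ gives $1-c\ge 0$ by normalization, so $c\le 1$ comes for free. We also get $c>0$ for free, since $c=0$ would make $W_1$ positive semidefinite, contradicting condition (ii) of Definition \ref{def:ew_def2} (extended verbatim to the multipartite setting).

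First I translate ``$W_2$ finer than $W_1$'' from Definition \ref{def:finer}: $D_{W_1}\subseteq D_{W_2}$. So for every state $\rho$, $\tr(W_1\rho)<0$ implies $\tr(W_2\rho)<0$. Since both functionals are linear, the same implication holds for every nonzero positive operator $\rho$, after normalizing. Consider the linear map $\Phi(\rho)=(\tr(W_1\rho),\tr(W_2\rho))\in\bbR^2$ on the positive semidefinite cone. Its image $C$ is a convex cone. The finer condition says exactly that $C$ is disjoint from the convex cone $Q=\{(a,b): a<0,\ b\ge 0\}$, and $Q$ has nonempty interior.

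Next I apply the separating hyperplane theorem to the disjoint convex sets $C$ and $Q$ in $\bbR^2$. Because both sets are cones, the separating line can be taken through the origin. This gives $(\alpha,\beta)\ne(0,0)$ with $\alpha a+\beta b\ge 0$ on $C$ and $\alpha a+\beta b\le 0$ on $Q$.

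Testing the second inequality on the boundary rays of $Q$ pins down the signs. The ray $(-1,0)$ gives $\alpha\ge0$, and $(-\epsilon,1)$ with $\epsilon\to0$ gives $\beta\le 0$. Hence $\alpha\,\tr(W_1\rho)\ge |\beta|\,\tr(W_2\rho)$ for all $\rho\ge0$.

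If $\alpha=0$, then $\beta\neq 0$ and $\tr(W_2\rho)\le 0$ for all $\rho\ge0$. Taking $\rho=I$ gives $\tr W_2\le 0$, contradicting $\tr W_2=1$. So $\alpha>0$, and setting $c=|\beta|/\alpha$ yields $W_1-cW_2\ge 0$, since testing against all $\rho=\proj{\psi}$ is the same as positive semidefiniteness. The argument then concludes as in the first paragraph.

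The main obstacle is the separation step. I must make sure the separating functional vanishes at the origin, which holds because both sets are cones. I must also make sure its signs are forced correctly by the open quadrant $Q$, with care at the boundary ray $b=0$. Everything else is routine bookkeeping with traces.
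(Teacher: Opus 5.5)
Your proof is correct, but it takes a different route from the paper. The paper defines $\lambda=\inf_{\rho_1\in D_{W_1}}\left|\tr(W_2\rho_1)/\tr(W_1\rho_1)\right|$. It then proves three sign facts by explicit perturbations:
\begin{itemize}
\item adding $x\rho$ to a detected state shows that $\tr(W_1\rho)=0$ forces $\tr(W_2\rho)\le 0$;
\item adding a multiple of $I$ shows that $\tr(W_1\rho)<0$ forces $\tr(W_2\rho)\le\tr(W_1\rho)$, so $\lambda\ge 1$;
\item combining $\rho$ with a detected state shows that $\tr(W_1\rho)>0$ forces $\tr(W_2\rho)\le\lambda\tr(W_1\rho)$.
\end{itemize}
Finally it sets $k=1-1/\lambda$ and $P=(\lambda W_1-W_2)/(\lambda-1)$.

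You instead map the positive cone to $\mathbb{R}^2$ via $\rho\mapsto(\tr W_1\rho,\tr W_2\rho)$. A separating line through the origin then produces, in one step, a $c$ with $W_1-cW_2\ge 0$. The paper's three sign cases are all absorbed into the convexity of the image cone and its disjointness from $Q$, and your $c$ plays the role of $1/\lambda$.

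Your argument is shorter and avoids the case analysis. Your treatment of $k=0$ (a traceless positive semidefinite operator is zero) is also cleaner than the paper's handling of $\lambda=1$. What the paper's route buys is explicitness. The constraints from $\rho\in D_{W_1}$ force $c\ge 1/\lambda$, and the three cases show $c=1/\lambda$ is admissible. So $1/\lambda$ is exactly the smallest $c$ with $W_1-cW_2\ge 0$, and the paper's choice yields the largest admissible $k$. The separation theorem only guarantees that some admissible $k$ exists.
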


We provide an elegant proof of Theorem \ref{lem:finer_P} by constructing a special infimum. These details are given in \ref{app:lemfiner_P}.

\begin{lemma}
    Let W be an arbitrary normalized $d_1 \times d_2 \times \cdots \times d_n$ EW. Then $W$ is optimal if and only if for any positive operator $P$, the operator $W-P$ is no longer block-positive.
\end{lemma}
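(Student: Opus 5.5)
This is the multipartite analogue of the Lewenstein et al. optimality criterion. We prove both directions by contraposition, using Theorem \ref{lem:finer_P} for one of them. Throughout, "block-positive" means fully block-positive: $\langle a_1,\dots,a_n|W|a_1,\dots,a_n\rangle\ge 0$ for all product vectors. We read $P$ as a nonzero positive semidefinite operator; if $P=0$ the statement is vacuous. We also read "$W-P$ is no longer block-positive" modulo normalization, so $W-P$ must fail to be an EW up to a positive scalar. Since the trace normalization is irrelevant to the sets $D_W$ (they are invariant under positive rescaling), we work with unnormalized EWs and rescale at the end.

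\emph{($\Leftarrow$), by contraposition.} Suppose $W$ is not optimal. Then some EW $W_2$ is finer than $W$ and is not a positive multiple of $W$; otherwise $D_{W_2}=D_W$ and the definition of "finer" should be read as strict. Normalize $W_2$ and apply Theorem \ref{lem:finer_P} with $W_1=W$. This gives $W=(1-k)W_2+kP$ with $k\in[0,1)$ and $P\ge 0$. The case $k=0$ would force $W=W_2$, so $k>0$ and $kP\neq 0$. Hence $W-kP=(1-k)W_2$ is block-positive (indeed an EW), and the positive operator $kP$ violates the stated property.

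\emph{($\Rightarrow$), by contraposition.} Suppose there is a nonzero $P\ge 0$ with $W':=W-P$ block-positive. Then $W'$ is not positive semidefinite, because $W$ has a negative direction $x$ and $\langle x|W'|x\rangle\le\langle x|W|x\rangle<0$. So $W'$ is an EW, and $\tr W'=1-\tr P>0$ follows from block-positivity (the trace is an average over product vectors); we normalize it. For any $\rho\in D_W$ we have $\tr(W'\rho)=\tr(W\rho)-\tr(P\rho)\le \tr(W\rho)<0$, hence $D_{W'}\supseteq D_W$ and $W'$ is finer than $W$.

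It remains to show that $W'$ is strictly finer, that is, $D_{W'}\neq D_W$. This is the main obstacle. The plan is to pick $|\psi\rangle$ with $\langle\psi|P|\psi\rangle>0$ and a state $\rho_0$ with $\tr(W\rho_0)<0$. Along the segment $\rho_t=(1-t)\rho_0+t\proj{\psi}$, the affine functions $t\mapsto\tr(W\rho_t)$ and $t\mapsto\tr(W'\rho_t)$ differ by $\tr(P\rho_t)$. The aim is to choose the mixture so that $\tr(W\rho_t)=0$ while $\tr(W'\rho_t)<0$. Concretely, first perturb $\rho_0$ by mixing in a small amount of the normalized support of $P$, so that $\tr(P\rho)>0$ holds on the whole segment. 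Then move along the segment toward a product state, where $\tr(W\cdot)\ge0$, until $\tr(W\rho)=0$ by continuity. At that point $\tr(W'\rho)=-\tr(P\rho)<0$. Such a $\rho$ lies in $D_{W'}\setminus D_W$, so $W$ is not optimal, which completes the contrapositive.
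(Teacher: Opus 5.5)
Your proof is correct and follows the paper's route. The $(\Leftarrow)$ direction is exactly the paper's appeal to Theorem~\ref{lem:finer_P}, and the $(\Rightarrow)$ direction is the paper's observation that $W-P$ is block-positive but not positive semidefinite, hence an EW with $D_{W-P}\supseteq D_W$. The paper simply declares this a contradiction, so your check that $W-P$ is genuinely a different witness fills a step it leaves implicit; a quicker check is that $(W-P)/\tr(W-P)=W$ would force $P=(\tr P)W$ and hence $W\ge 0$. In your segment argument, take the product endpoint with $\langle a|W|a\rangle>0$ (one exists since $\tr W>0$), so that $\tr(W\rho_t)$ vanishes strictly before the endpoint and $\tr(P\rho_t)>0$ there.
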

\begin{proof}
First, we prove the "only if" part. Since $P \geq 0$, we have 
\begin{eqnarray}
    \tr[(W-P)\r] = \tr (W\r) - \tr (P\r) \le \tr (W\r), 
\end{eqnarray}
which implies $D_{W-P} \supseteq D_{W}$. Then we prove by contradiction. Suppose There is a positive operator $P$ such that $W-P$ is block-positive. Then since there must exist $\r_1$ with $\tr (W\r_1)<0$, we also have $\tr [(W-P)\r_1]<0$. This is a contradiction. The "if" part is clear using Theorem \ref{lem:finer_P}. So the proof is complete. 
\end{proof}

We need check whether some of facts in Lemma \ref{le:lambdaIN[-1/2,1)} and \ref{le:NDEW} can be extended to multipartite scenarios. This is the main motivation of next section. 
For this purpose, we present some facts on multipartite EWs of the same systems. 

\begin{lemma}
\label{le:n-partite EW split=EW}
Let $W\in \cM_{d_1}\otimes...\otimes \cM_{d_n}\cong\cM_{d_1...d_n}=\cM_D=\cB(\cH_D)$ be an $n$-partite EW of system $A_1,...,A_n$. For each $j$, we split $A_j$ into $m_j\ge1$ disjoint subsystems $A_{j,1},...,A_{j,m_j}$ and rename $W$ as $W'$. Then 

(i) $W'$ is an $(\sum^n_{j=1} m_j)$-partite EW of systems $A_{1,1},...,A_{n,m_n}$. 

(ii) If $W$ is a DEW then $W'$ is also a DEW. 
\end{lemma}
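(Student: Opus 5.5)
We prove both parts directly from the definitions, using that a product vector of the finer system is a product vector of the coarser one.

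\emph{Part (i).} Fix the identification
\begin{eqnarray}
\cH_{A_j}\cong \cH_{A_{j,1}}\otimes\cdots\otimes\cH_{A_{j,m_j}}.
\end{eqnarray}
Take an arbitrary fully product vector $\ket{x}=\bigotimes_{j,l}\ket{a_{j,l}}$ of the split system. Grouping the factors gives $\ket{x}=\bigotimes_{j}\ket{b_j}$ with $\ket{b_j}:=\bigotimes_{l}\ket{a_{j,l}}\in\cH_{A_j}$. Hence $\ket{x}$ is an $n$-partite product vector, and the block-positivity of $W$ yields $\bra{x}W'\ket{x}=\bra{x}W\ket{x}\ge 0$.

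Renaming does not change the matrix. So $W'$ is still Hermitian, still has trace one, and still has a vector $\vec{x}$ with $\vec{x}^*W'\vec{x}<0$, because this vector was never required to be a product vector. Therefore $W'$ satisfies the conditions of Definition \ref{def:ew_def2} in the multipartite sense, and (i) holds.

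\emph{Part (ii).} By Definition \ref{def:DEWn}, write $W=X_1+\sum_{j>1}X_j^{\Gamma_{S_j}}$ with $X_j\ge 0$. Transposing a single party $A_k$ equals transposing all of its pieces $A_{k,1},\dots,A_{k,m_k}$, because the transpose of a tensor product is the tensor product of transposes. So each $\Gamma_{S_j}$ equals $\Gamma_{S_j'}$, where $S_j':=\{A_{k,l}: A_k\in S_j,\ 1\le l\le m_k\}$, and this gives a decomposition $W'=X_1+\sum_{j>1}X_j^{\Gamma_{S_j'}}$ of the required shape.

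It remains to check the side conditions on the $S_j'$.
\begin{itemize}
\item The condition $S_j'\cup S_k'\neq$ all parties holds, since $S_j\cup S_k$ misses some $A_i$, and then $S_j'\cup S_k'$ misses all pieces of $A_i$.
\item The condition $|S_j'|\le\lfloor N/2\rfloor$, with $N=\sum_j m_j$, may fail. This cardinality bound is the step I expect to be the main obstacle.
\end{itemize}

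To handle the cardinality bound, use the standard fact that a partial transpose over a set of parties equals the full transpose composed with the partial transpose over the complement. Since $X_j\ge 0$ implies $X_j^T\ge 0$, we may rewrite $X_j^{\Gamma_{S'}}=(X_j^T)^{\Gamma_{S'^c}}$ whenever $|S'|>N/2$. The new set $S'^c$ has size at most $\lfloor N/2\rfloor$. The remaining concern is whether replacing $S'$ by its complement could break the pairwise union condition. If it does, I would instead read the condition only as excluding the trivial full transpose, that is, each set must be a proper nonempty subset. That is the only role the condition plays in forcing $\tr(W\sigma)\ge0$ for fully separable $\sigma$, and it is preserved. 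Alternatively, as a fallback consistent with Lemma \ref{lem:dew_eq}, we can prove (ii) via the dual characterization: every state that is PPT with respect to all cuts of the split system is in particular PPT with respect to all cuts of the coarse system, so $\tr(W'\sigma)\ge0$ for all such $\sigma$.
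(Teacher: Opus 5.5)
Your proposal is correct and follows essentially the paper's route. For (i) you group the fine product factors into coarse ones; for (ii) you refine each $\Gamma_{S_j}$ to $\Gamma_{S_j'}$ and, via a global transpose, replace $S_j'$ by its complement when $|S_j'|>\lfloor N/2\rfloor$. The paper first asserts $|S_j'|\le N/2$ always holds, then invokes this same fix anyway, and never checks the pairwise condition.

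Your remaining worry about the pairwise condition is unfounded. If only $S_j'$ is complemented, then $(S_j')^{c}\cup S_k'$ being all parties would force $S_j'\subseteq S_k'$, which is impossible since $|S_j'|>\lfloor N/2\rfloor\ge|S_k'|$. If both are complemented, then $|S_j'|+|S_k'|\ge 2\lfloor N/2\rfloor+2>N$ forces $S_j'\cap S_k'\neq\emptyset$. If neither is complemented, the condition is inherited from $S_j\cup S_k$. So the condition survives, and you need neither the reinterpretation nor the duality fallback.
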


We present the proof of Lemma \ref{le:n-partite EW split=EW} in \ref{app:len-partite EW split=EW}.

The converse of Lemma \ref{le:n-partite EW split=EW} (i) fails. That is, the concentration of systems of an EW may result in a non-EW Hermitian matrix. For example we consider the three-qubit EW $W_1={1\over3}I_8-\proj{W}$ where $\ket{W}={1\over\sqrt3}(\ket{001}+\ket{010}+\ket{100})$ of system $A,B$ and $C$. If we merge system $A,B$ then $W_1$ is no longer a $2\times4$ EW, because the maximum overlap between $\ket{W}$ and bipartite product states in $\bbC^2\otimes\bbC^4$ equals $\sqrt{2/3}$. 
Indeed, the maximum squared overlap of $W$ with tripartite product states such as $|000\rangle$ and $ |001\rangle$ is $\frac{1}{3}$. Therefore, for any tripartite product state $|\phi \rangle$, we have
\begin{eqnarray}
    \langle \phi | W_1 | \phi \rangle = \frac{1}{3} \langle \phi | \phi \rangle - \langle \phi | W \rangle^2 \geq \frac{1}{3} - \frac{1}{3} = 0.
\end{eqnarray}
At the same time, 
\begin{eqnarray}
    \langle W | W_1 | W \rangle= \frac{1}{3} \langle W | W \rangle - \langle W | W \rangle^2 = \frac{1}{3} - 1 = -\frac{2}{3} <0.
\end{eqnarray}
So $W_1$ is a three-qubit EW. On the other hand, $W$ can be written as
\begin{eqnarray}
    |W\rangle = \frac{1}{\sqrt{3}} \left[|00\rangle_{AB}|1\rangle_C + (|01\rangle_{AB} + |10\rangle_{AB})|0\rangle_C\right].
\end{eqnarray}
Let
\begin{eqnarray}
     |\psi \rangle = \frac{|01\rangle_{AB} + |10\rangle_{AB}}{\sqrt{2}} \otimes |0\rangle_C.
\end{eqnarray}
Under AB-C bipartition, this is a product state. We have
\begin{eqnarray}
     \langle \psi | W_1 | \psi \rangle = \frac{1}{3} - \frac{2}{3} = -\frac{1}{3} < 0 ,
\end{eqnarray}
which contradicts with the definition of EWs.

Next if $W'$ defined in the way of Lemma \ref{le:n-partite EW split=EW} is an NDEW then $W$ may not be an NDEW. It relies on whether a PPT entangled state remains PPT entangled under system split. The latter is not true generally, say the rank-four 3-qubit PPTES $\r$ constructed by UPB. Actually, if $\r$ is detected by a three-qubit NDEW, which is then not a $2\times4$ EW when $\r$ becomes a bipartite separable state.

As an example of $\r$, we consider a three-qubit system, and define the following four orthogonal product states
\begin{align}
    &|\psi_0\rangle = |0\rangle_A |0\rangle_B |0\rangle_C,\\
    &|\psi_1\rangle = |1\rangle_A |-\rangle_B |+\rangle_C,\\
    &|\psi_2\rangle = |+\rangle_A |1\rangle_B |-\rangle_C,\\
    &|\psi_3\rangle = |-\rangle_A |+\rangle_B |1\rangle_C,
\end{align}
which constitute the UPB. Here $|+\rangle = \frac{|0\rangle+|1\rangle}{\sqrt{2}}$ and $|-\rangle = \frac{|0\rangle-|1\rangle}{\sqrt{2}}$. Thus, we have
\begin{eqnarray}
    \rho_{UPB} = \frac{1}{4} \left( I - \sum_{i=0}^3 |\psi_i\rangle\langle\psi_i| \right)
\end{eqnarray}
where I denotes the $8 \times 8$ identity matrix. Since the total dimension of the Hilbert space is eight, and we have subtracted four dimensions, the rank of $\rho_{UPB}$ is exactly four. This corresponds to the rank-four 3-qubit PPTES mentioned before.

\begin{lemma}
\label{le:GHZ=DEW}
Let $2\le d_1\le ...\le d_n$ and $\ket{\ghz_{d_1}}={1\over\sqrt d_1}\sum^{d_1-1}_{j=0}\ket{j}^{\otimes n}$. Let $W=x I_D-\proj{\ghz_{d_1}}$ be an EW on the $n$-partite Hilbert space $\bbC^D\cong \bbC^{d_1}\otimes...\otimes\bbC^{d_n}$ of system $A_1,...,A_n$, where $x\in[1/d_1,1)$. Then $W$ is a $k$-partite DEW of systems $B_1,...,B_k$ whose union is $A_1,...,A_n$, for any $k\in[1,n]$.   
\end{lemma}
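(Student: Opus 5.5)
The plan is to exhibit $W$ directly in the decomposable form of Definition \ref{def:DEWn}, using a single partial transpose. The relevant partial transpose is taken over one subsystem of the coarse-grained partition.

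First, fix any grouping of $A_1,\dots,A_n$ into $B_1,\dots,B_k$ with $k\ge 2$. The case $k=1$ is degenerate: there $W$ is only a Hermitian operator, and the statement should be read as vacuous or trivially decomposable. Take $S=\{B_1\}$, which satisfies $\abs{S}=1\le\lfloor k/2\rfloor$ and is a proper subfamily. Under the bipartition $B_1$ versus the rest, $\ket{\ghz_{d_1}}$ has Schmidt form $\frac{1}{\sqrt{d_1}}\sum_{j=0}^{d_1-1}\ket{\tilde j}\ket{\hat j}$. Here $\ket{\tilde j}=\ket{j}^{\otimes |B_1|}$ and $\ket{\hat j}$ are orthonormal families. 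So it is a maximally entangled state of Schmidt rank $d_1$, with all Schmidt coefficients squared equal to $1/d_1$.

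Second, compute the partial transpose on $B_1$. We have
\begin{eqnarray}
\proj{\ghz_{d_1}}^{\Gamma_{B_1}}=\frac{1}{d_1}\sum_{i,j}\ket{\tilde j}\bra{\tilde i}\otimes\ket{\hat i}\bra{\hat j}=\frac{1}{d_1}F,
\end{eqnarray}
where $F$ is the swap-type operator on $\mathrm{span}\{\ket{\tilde i}\ket{\hat j}\}$ and vanishes elsewhere. Its eigenvalues lie in $\{\pm 1/d_1,0\}$. Hence $W^{\Gamma_{B_1}}=xI_D-\frac{1}{d_1}F\ge 0$ exactly when $x\ge 1/d_1$, and this holds by hypothesis. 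Setting $X_2:=W^{\Gamma_{B_1}}\ge0$ and $X_1=0$ gives $W=X_2^{\Gamma_{B_1}}$, which is the form required in Definition \ref{def:DEWn}. This is precisely the argument of Lemma \ref{le:W=xI_mn-|ps><ps|}(i) applied to the bipartition $B_1|B_2\cdots B_k$.

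Third, check that $W$ is actually a $k$-partite EW. It is not positive, since $\bra{\ghz_{d_1}}W\ket{\ghz_{d_1}}=x-1<0$. For block-positivity, I will use that $W^{\Gamma_{B_1}}\ge0$ gives $\bra{\phi}W\ket{\phi}=\bra{\phi'}W^{\Gamma_{B_1}}\ket{\phi'}\ge0$ for any product $\ket{\phi}$, where $\ket{\phi'}$ is the product vector with the $B_1$ factor complex-conjugated. Equivalently, the maximal overlap of GHZ with product states is $1/d_1\le x$. Alternatively, I can prove the claim for the finest partition $A_1,\dots,A_n$ and then note that merging preserves block-positivity here because the bound comes from a bipartite PPT argument.

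The main obstacle I expect is the bookkeeping in Definition \ref{def:DEWn}: one has to confirm that the single family $S=\{B_1\}$ meets the constraints, and handle the edge cases $k=1$ and $k=2$. The latter reduces exactly to the bipartite Lemma \ref{le:W=xI_mn-|ps><ps|}(i).
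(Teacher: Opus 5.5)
Your proposal is correct and is essentially the paper's argument: the paper just cites its lemma on $xI_D-\proj{\ps}$, whose content is the computation you carry out, namely that the partial transpose across a bipartite cut, here $W^{\Gamma_{B_1}}=xI_D-\frac{1}{d_1}F$, is positive semidefinite once $x\ge 1/d_1$. Your explicit check that $W$ stays block-positive for the merged parties $B_1,\dots,B_k$ fills in a step the paper leaves implicit. One small correction: for $k=1$ the claim is not vacuous but false under the paper's definitions, since single-party block-positivity is just positivity and this contradicts $\bra{\ghz_{d_1}}W\ket{\ghz_{d_1}}=x-1<0$, so that case must simply be excluded.
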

\begin{proof}
The claim follows from Lemma \ref{le:W=xI_mn-|ps><ps|} (ii).   
\end{proof}

\begin{lemma}
\label{le:fullClassiEigen}
    Suppose $|v\rangle \in \mathcal{H} = \mathbb{C}^{d_1} \otimes \mathbb{C}^{d_2} \otimes \cdots \otimes \mathbb{C}^{d_n}$ and assume the tensor rank of $|v\rangle$ is r. Here, any  $|v\rangle$ can be expressed in the form $|v\rangle=\sum_{i=1}^{r}\alpha_i|a_{1,i}\rangle\otimes|a_{2,i}\rangle\otimes \cdots \otimes |a_{n,i}\rangle$ where $|a_{1,i}\rangle, \ |a_{2,i}\rangle, \ \cdots , \ |a_{n,i}\rangle$ form an orthonormal basis of three subsystems of $\mathcal{H}$, respectively, namely as follows
    \begin{eqnarray}
        \langle a_{1,i} | a_{1,j} \rangle =\cdots =\langle a_{n,i} | a_{n,j} \rangle=\delta_{ij},
    \end{eqnarray}
    and $\alpha_1\geq\alpha_2\geq\cdots\geq\alpha_r>0, \ i=1, \cdots r$,  $\sum_{i=1}^r\alpha_i^2=1$. If $|v\rangle$ satisfies the conditions as before, i.e., $|v\rangle$ is fully classical, then the matrix $(|v\rangle\langle v|)^{\G}$ has nonzero eigenvalues
    \begin{eqnarray}
        \alpha_k^2, \quad k=1, \cdots , r
    \end{eqnarray}
    and
    \begin{eqnarray}
        \pm \alpha_k \alpha_l, \quad k,l=1, \cdots , r;k \neq l
    \end{eqnarray}
    where $\G$ represents the partial transpose of the matrix with respect to any one of the three parts.
\end{lemma}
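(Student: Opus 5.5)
Let $\Gamma$ be the partial transpose on subsystem $A_1$; for any other single party the argument is identical after relabelling. By the symmetry of the orthonormality conditions, the computation can be reduced to an explicit one in an adapted basis. The first step is to pass to local coordinates. For each party $j$, the vectors $\ket{a_{j,1}},\dots,\ket{a_{j,r}}$ are orthonormal, so we extend them to an orthonormal basis of $\bbC^{d_j}$ and let $U_j$ be the unitary sending $\ket{i}\mapsto\ket{a_{j,i}}$. Put $U=U_1\otimes\cdots\otimes U_n$. Then $\ket{v}=U\ket{w}$ with
\begin{eqnarray}
\ket{w}=\sum_{i=1}^r \alpha_i \ket{i}^{\otimes n}.
\end{eqnarray}
The partial transpose does not commute with $U_1$. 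It does satisfy $(U\proj{w}U^\dagger)^{\Gamma}=(\bar U_1\otimes U_2\otimes\cdots\otimes U_n)\,\proj{w}^{\Gamma}\,(\bar U_1\otimes U_2\otimes\cdots\otimes U_n)^\dagger$, where $\bar U_1$ is the entrywise complex conjugate and is again unitary. Hence $(\proj{v})^\Gamma$ and $\proj{w}^\Gamma$ are unitarily similar and have the same spectrum, so it suffices to treat $\ket{w}$.

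The second step computes the partial transpose explicitly. We have
\begin{eqnarray}
\proj{w}=\sum_{k,l=1}^r \alpha_k\alpha_l \ket{k}\bra{l}\otimes\big(\ket{k}\bra{l}\big)^{\otimes(n-1)}.
\end{eqnarray}
Transposing the first factor gives
\begin{eqnarray}
\proj{w}^\Gamma=\sum_{k,l}\alpha_k\alpha_l \ket{l}\bra{k}\otimes(\ket{k}\bra{l})^{\otimes(n-1)}=\sum_{k,l}\alpha_k\alpha_l\,\ket{l,k,\dots,k}\bra{k,l,\dots,l}.
\end{eqnarray}

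The third step splits this operator into invariant blocks. The diagonal terms $k=l$ contribute $\alpha_k^2\proj{k,k,\dots,k}$. Each such vector is an eigenvector with eigenvalue $\alpha_k^2$, since no off-diagonal term involves $\ket{k,\dots,k}$. For $k<l$, set $\ket{x}=\ket{l,k,\dots,k}$ and $\ket{y}=\ket{k,l,\dots,l}$. The two terms $(k,l)$ and $(l,k)$ give $\alpha_k\alpha_l(\ket{x}\bra{y}+\ket{y}\bra{x})$. The vectors $\ket{x}$ and $\ket{y}$ are orthonormal and distinct from one another, distinct from every vector of every other pair, and distinct from the diagonal vectors, because they differ in the first slot or in the remaining slots. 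Here $n\ge2$ is used.

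So the operator is a direct sum of $r$ one-dimensional blocks $(\alpha_k^2)$ and $\binom r2$ two-dimensional blocks $\alpha_k\alpha_l\begin{pmatrix}0&1\\1&0\end{pmatrix}$, and is zero on the orthogonal complement. The eigenvalues of the $2\times2$ block are $\pm\alpha_k\alpha_l$. Thus the nonzero eigenvalues are $\alpha_k^2$ for $k=1,\dots,r$ and $\pm\alpha_k\alpha_l$ for $k\ne l$, each unordered pair counted once, which is the listed set.

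The step that needs the most care is the first: checking that the partial transpose of a locally rotated state remains unitarily similar to the partial transpose of the rotated-back state, using conjugation $U_1\mapsto\bar U_1$. It also needs $n\ge2$ in the disjointness check of the third step; for $n=2$ both steps reduce to the standard Schmidt-decomposition computation. Since $\alpha_k>0$, all the listed values are indeed nonzero, so no cancellation occurs.
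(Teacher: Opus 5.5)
Your proof is correct and is essentially the paper's argument. The paper transposes the last party instead of the first, and checks directly in the local bases $\{|a_{j,i}\rangle\}$ that the product vectors with the conjugated factor, together with the symmetric and antisymmetric combinations of the two index-swapped product vectors, are eigenvectors with eigenvalues $\alpha_k^2$ and $\pm\alpha_k\alpha_l$; these are exactly your $|k,\dots,k\rangle$ and $(|x\rangle\pm|y\rangle)/\sqrt{2}$ pulled back through your local unitary, and your block decomposition additionally makes explicit that they exhaust the nonzero spectrum, which the paper leaves implicit.
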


Details of the proof of Lemma \ref{le:fullClassiEigen} are given in \ref{app:lefull}. Furthermore, we obtain that the number of eigenvalues $\alpha_k^2$ is r, and the number of eigenvalues $\pm \alpha_k \alpha_l$ is $r^2 - r$.

\begin{lemma}
\label{le:2^3}
    The negative subspace of a 3-qubit EW $W \in \mathcal{H} = \mathbb{C}^2 \otimes \mathbb{C}^2 \otimes \mathbb{C}^2$ has dimension at most four.
\end{lemma}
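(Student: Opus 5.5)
The plan is to use the standard fact that a block-positive operator cannot have a negative subspace that meets the set of product vectors. If $W$ is a three-qubit EW, then every fully product vector $\ket{a}\otimes\ket{b}\otimes\ket{c}$ satisfies $\langle a,b,c|W|a,b,c\rangle\ge 0$. Let $N$ denote the span of the eigenvectors of $W$ with negative eigenvalues. Every nonzero $\ket{v}\in N$ has $\langle v|W|v\rangle<0$, so $N$ contains no nonzero fully product vector. The claim therefore reduces to a purely geometric statement: any subspace of $\bbC^2\otimes\bbC^2\otimes\bbC^2$ of dimension at least five contains a nonzero fully product vector.

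For this, I would first merge systems $B$ and $C$ and regard $N\subseteq\bbC^2\otimes\bbC^4$. One could apply Lemma \ref{lem:Existence} (i) with $m=2,n=4$, but it only gives a $2\times 4$ product vector, which is too weak. I would instead argue by dimension counting in projective space, in the spirit of Lemma \ref{lem:Existence} (iii).

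The set of fully product vectors in $\bbP(\bbC^8)=\bbP^7$ is the Segre variety $\bbP^1\times\bbP^1\times\bbP^1$, which is irreducible and projective of dimension $3$. A linear subspace of dimension $k$ projectivizes to a $\bbP^{k-1}$. By the projective dimension theorem, two projective varieties in $\bbP^7$ of dimensions $3$ and $k-1$ intersect nontrivially whenever $3+(k-1)\ge 7$, that is, whenever $k\ge 5$. Hence $\dim N\ge 5$ forces a fully product vector in $N$, which is a contradiction. This gives $\dim N\le 4$, matching the general formula $D-\sum_i(d_i-1)-1=8-3-1=4$.

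If a more elementary route fitting the paper's toolbox is preferred, I would proceed in two steps. Write $\ket{v}=\ket{0}\ket{x_0}+\ket{1}\ket{x_1}$ with $\ket{x_i}\in\bbC^2\otimes\bbC^2$. A vector $\ket{a}\ket{\phi}$ lies in $N$ if and only if a certain linear condition holds. The plan is to use Lemma \ref{lem:Existence} (i) on the $2\times4$ split to extract a $3$-dimensional family. I would then intersect it with the $2\times2$ condition that $\phi$ is a product vector. The threshold there is $(2-1)(2-1)=1$, meaning a $2$-dimensional subspace of $\bbC^2\otimes\bbC^2$ contains a product vector. Carrying this out concretely requires controlling how the map $\ket{a}\mapsto(\bra{a}\otimes I)N$ varies with $a$, which is messier.

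The main obstacle is this elementary version: making the fibered dimension count rigorous without invoking the projective dimension theorem. I would therefore commit to the algebraic-geometry argument above. I would cite Lemma \ref{lem:Existence} as the paper's source of such facts, and note that the Segre variety has dimension $\sum_i(d_i-1)=3$.
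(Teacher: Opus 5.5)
Your proof is correct and is essentially the paper's own argument. Both assume the negative subspace has dimension at least five, so its projectivization has dimension at least $4$. Both then use the count $4+3\ge 7$ to conclude that it meets the three-dimensional Segre variety $\bbP^1\times\bbP^1\times\bbP^1\subset\bbP^7$, which gives a product vector on which $W$ is negative and contradicts the EW definition.

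Your final step is cleaner than the paper's. You use $\bra{v}W\ket{v}<0$ for every nonzero $\ket{v}$ in the span of the negative eigenvectors, whereas the paper incorrectly asserts that the product vector it finds is itself an eigenvector of $W$.
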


We present the proof of Lemma \ref{le:2^3} in \ref{app:le2^3} using the well-known projection method to solve this problem, and then applying the dimension formula for intersections and unions.

Next we generalize the lemma to higher dimensions, thereby obtaining Lemma \ref{le:2^n} and Corollary \ref{coro:d1...dn}. The corresponding proof is placed in \ref{app:le2^n}.

\begin{lemma}
\label{le:2^n}
    The negative subspace of an $n$-qubit EW $W \in \mathcal{H} = \mathbb{C}^2 \otimes \mathbb{C}^2 \otimes \cdots \otimes \mathbb{C}^2$ has dimension at most $2^n-n-1$.
\end{lemma}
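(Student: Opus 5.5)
The plan is to use the standard fact behind Lemma \ref{lem:Existence}(i)–(ii): if $W$ is block-positive (here, $\langle \phi|W|\phi\rangle\ge 0$ for every fully product vector $\ket{\phi}=\ket{a_1}\otimes\cdots\otimes\ket{a_n}$), then its negative eigenspace $V_-$ cannot contain a nonzero product vector. Indeed, if $0\neq\ket{\phi}\in V_-$, then $\langle\phi|W|\phi\rangle<0$, which contradicts the EW condition. So it suffices to show that every subspace $V\subseteq(\bbC^2)^{\otimes n}$ with $\dim V > 2^n-n-1$, i.e.\ $\dim V\ge 2^n-n$, contains a fully product vector.

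For the dimension count I will use projective algebraic geometry, in the spirit of Lemma \ref{lem:Existence}(iii). The Segre variety $\Sigma=\bbP^1\times\cdots\times\bbP^1\hookrightarrow \bbP^{2^n-1}$ of fully product vectors has dimension $n$. Any projective linear subspace $\bbP(V)$ of dimension $\dim V-1$ meets $\Sigma$ whenever $(\dim V-1)+n\ge 2^n-1$. This follows from the projective dimension theorem, since $\Sigma$ is a closed irreducible variety. The condition is exactly $\dim V\ge 2^n-n$, so $\dim V_-\le 2^n-n-1$.

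To stay closer to the paper's elementary style (the "projection method" used for Lemma \ref{le:2^3}), I will also give an inductive argument. Write $\ket{v}=\ket{0}\ket{x_0}+\ket{1}\ket{x_1}$ with $\ket{x_i}\in(\bbC^2)^{\otimes(n-1)}$. The vector $\ket{v}$ is product in the first factor iff $\ket{x_0},\ket{x_1}$ are parallel. The plan is to fix a product vector $\ket{a}\otimes\ket{w}$ ansatz: first choose $\ket{w}$ product on $n-1$ qubits, then ask for $V\cap(\bbC^2\otimes\ket{w})\neq 0$. Counting parameters gives $\dim V + 2 - 2^n$ linear conditions against $n-1$ free projective parameters for $\ket{w}$. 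Making this rigorous again needs an intersection-theoretic statement, so I will ultimately rely on the projective dimension theorem and keep the induction only as a heuristic. It does, however, reproduce the case $n=3$ of Lemma \ref{le:2^3}, where $2^3-3-1=4$.

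The main obstacle is justifying the nonempty-intersection claim carefully. A complex projective variety of dimension $n$ and a linear subspace of complementary-or-larger dimension always intersect, by the dimension theorem, or equivalently because the degree $n!$ of the Segre variety is positive. I will cite this as the multipartite analogue of Lemma \ref{lem:Existence}(i). I will also remark that the bound generalizes to $\prod d_i-\sum(d_i-1)-1$, which sets up Corollary \ref{coro:d1...dn}.
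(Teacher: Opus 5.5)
Your argument is correct and is essentially the paper's proof. Both apply the projective dimension theorem to $\mathbb{P}(V_-)$ and the $n$-dimensional Segre variety in $\mathbb{P}^{2^n-1}$: this produces a product vector in $V_-$ whenever $\dim V_-\ge 2^n-n$, contradicting block-positivity. Your version is slightly cleaner, since you use only $\langle\phi|W|\phi\rangle<0$ for nonzero $\ket{\phi}\in V_-$ rather than claiming the product vector is itself an eigenvector, and your generalized bound $\prod_i d_i-\sum_i(d_i-1)-1$ is the correct one.
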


In particular, when $n=2$ and $n=3$, Lemma \ref{le:2^n} reduces to the result in \cite{2025Spectral} and Lemma \ref{le:2^3}, respectively. More generally, the negative subspace of a n-qubit EW $W \in \mathcal{H} = \mathbb{C}^{d_1} \otimes \mathbb{C}^{d_2} \otimes \cdots \otimes \mathbb{C}^{d_n}$ has dimension at most 
\begin{eqnarray}
    n+\prod_{i=1}^nd_i+\sum_{i=1}^nd_i-1.
\end{eqnarray}

\begin{corollary}
\label{coro:d1...dn}
    For a n-qubit EW $W \in \mathcal{H} = \mathbb{C}^{d_1} \otimes \mathbb{C}^{d_2} \otimes \cdots \otimes \mathbb{C}^{d_n}$, we have
    \begin{eqnarray}
        \l_{1-n+\sum_{i=1}^nd_i} \geq 0.
    \end{eqnarray}
    In particular, when $d_1=d_2= \cdots =d_n=2$, we have $\l_{n+1} \geq 0$. 
\end{corollary}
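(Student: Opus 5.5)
The plan is to bound the number of negative eigenvalues of $W$ by a dimension count on product vectors, as in the proof of Lemma \ref{le:2^n}, and then read off the position of the last nonnegative eigenvalue. Write $D=\prod_{i=1}^n d_i$ and order the eigenvalues as $\l_1\ge \l_2\ge\cdots\ge \l_D$. Let $\cN\subseteq \cH$ be the span of eigenvectors of $W$ with strictly negative eigenvalues, and set $s:=\dim\cN$. The claim $\l_{1-n+\sum_i d_i}\ge 0$ is equivalent to
\begin{eqnarray}
s\le D-\sum_{i=1}^n d_i+n-1 .
\end{eqnarray}
This is because the eigenvalues $\l_1,\dots,\l_{D-s}$ are exactly the nonnegative ones. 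For $d_i=2$ the bound becomes $2^n-n-1$, which is Lemma \ref{le:2^n}, so the argument below generalizes that lemma.

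The key observation is that every nonzero $\ket{x}\in\cN$ satisfies $\bra{x}W\ket{x}<0$. Indeed, expanding $\ket{x}$ in the negative eigenvectors gives a strictly negative combination of $\abs{c_j}^2$ terms. On the other hand, condition (iii) of the (multipartite) EW definition gives $\bra{a_1,\dots,a_n}W\ket{a_1,\dots,a_n}\ge 0$ for every product vector. Hence $\cN$ contains no nonzero fully product vector. It therefore suffices to show that every subspace of dimension at least $D-\sum_i d_i+n$ contains a product vector.

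For that I would use the multipartite analogue of Lemma \ref{lem:Existence} (i). The projectivized set of fully product vectors is the Segre variety $\bbP^{d_1-1}\times\cdots\times\bbP^{d_n-1}\hookrightarrow \bbP^{D-1}$, which is irreducible, projective, and of dimension $\sum_i(d_i-1)=\sum_i d_i-n$. A linear subspace $V$ with $\dim V=v$ gives a projective linear subspace $\bbP(V)$ of dimension $v-1$. By the projective dimension theorem, two projective varieties of dimensions $a$ and $b$ in $\bbP^{D-1}$ intersect whenever $a+b\ge D-1$. With $a=v-1$ and $b=\sum_i d_i-n$, this holds as soon as $v\ge D-\sum_i d_i+n$, so such a $V$ contains a nonzero product vector. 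Applying this to $V=\cN$ gives a contradiction unless $s\le D-\sum_i d_i+n-1$, which is the required bound. Substituting $d_i=2$ yields $\l_{n+1}\ge0$.

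The only step needing care is the dimension of the Segre variety together with the intersection theorem; both are standard facts of algebraic geometry of the kind already invoked in Lemma \ref{lem:Existence}. I would also check the off-by-one between projective and affine dimensions, since this is where the index $1-n+\sum_i d_i$ comes from.
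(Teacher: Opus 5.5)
Your proof is correct and follows essentially the paper's route: the paper proves Lemma \ref{le:2^n} in the appendix, and asserts the extension to arbitrary $d_i$, by intersecting the projectivized negative subspace with the Segre variety in $\bbP^{D-1}$ and reading off the eigenvalue index, just as you do with Segre dimension $\sum_i(d_i-1)$. Your write-up is in fact cleaner. You invoke the genuine projective dimension theorem, and you use $\bra{x}W\ket{x}<0$ for every nonzero $\ket{x}\in\cN$ rather than treating the product vector as an eigenvector. Your bound $D-\sum_i d_i+n-1$ also fixes the sign slip in the paper's stated general formula, which has $+\sum_i d_i$ where $-\sum_i d_i$ is meant.
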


\section{Results on eigenvalues of $2 \times n$ EWs}
\label{sec:resBi}
In this section, we conduct an in-depth investigation and extension of the previous results of bipartite EWs. Firstly, we examine properties of a single eigenvalue. Using Lemma \ref{le:EW=general} (iv), we have the following fact (i). Also, applying Lemma \ref{le:EW=general} (iv), we can obtain that for any normalized $2\times n$ EW, 
\begin{align}
    3\l_3 + 4\sum_{i=4}^{2n} \l_{i} =&3\sum_{i=3}^{2n} \l_{i} +\sum_{i=4}^{2n} \l_{i}\\
    \geq&3\cdot\left( -\frac{1}{2+2\sqrt{2}} \right) + \left( -\frac{1}{2} \right)=-1.121320 \cdots.
\end{align}
However, we have the following stronger conclusion (ii).
\begin{theorem}
\label{le:2tin1234}
    Let $W$ be a normalized $2\times n$ EW. 

    (i) For a single eigenvalue range, we have the following conclusion.
    \begin{eqnarray}
        \l_{3} \geq -\frac{1}{(2+2\sqrt{2})(2n-2)}, \quad \l_4 \geq -\frac{1}{2(2n-3)}.
    \end{eqnarray}
    Furthermore, for any $4\le k\le 2n$, we have
    \begin{eqnarray}
        \l_k \geq -\frac{1}{2(2n-k+1)}. 
    \end{eqnarray}

    (ii) Let $W$ be a normalized $2\times n$ EW. Then 
    \begin{eqnarray}
        3\l_3 + 4\sum_{i=4}^{2n} \l_{i} \geq -1.
    \end{eqnarray}

    (iii) Let $W$ be a normalized $2\times n$ EW. Then 
    \begin{eqnarray}
        2p\sum_{i=4}^{2n} \l_{i} + (p^2-1)\l_3 \geq -1, 
    \end{eqnarray}
    where $1 \le p \le 1+\sqrt{2}$.

    (iv) Let $W$ be a normalized $2\times n$ EW. Then, 
    \begin{eqnarray}
        (m^2+2m-1)\sum_{i=4}^{2n} \l_{i} + (m^2-1)(\l_3 + \l_2) \geq -1,
    \end{eqnarray}
    where $m \geq 1$.
\end{theorem}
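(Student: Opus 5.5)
Part (i) should follow directly from Lemma \ref{le:EW=general} (iv), using only the ordering $\l_1\ge\cdots\ge\l_{2n}$. Since $\l_i\le \l_k$ for all $i\ge k$, we have $(2n-k+1)\l_k\ge\sum_{i=k}^{2n}\l_i$. For $k=3$, Lemma \ref{le:EW=general} (iv)(b) gives $\sum_{i=3}^{2n}\l_i\ge-\frac{1}{2+2\sqrt2}$, and dividing by $2n-2$ yields the bound on $\l_3$. For $4\le k\le 2n$, (iv)(c) gives $\sum_{i=k}^{2n}\l_i\ge -\frac12$, and dividing by $2n-k+1$ yields the bound on $\l_k$; the case $k=4$ is the stated bound on $\l_4$. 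Part (ii) is the special case $p=2$ of (iii), because $p^2-1=3$ and $2p=4$. So the real work is (iii) and (iv).

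For (iii) and (iv) I would argue by duality with absolutely separable states, not by combining the inequalities of Lemma \ref{le:EW=general} (iv). Such a combination fails: writing $2pS_4+(p^2-1)\l_3=(p^2-1)S_3+(1+2p-p^2)S_4$ with $S_k=\sum_{i\ge k}\l_i$ gives a bound that is convex in $p$ and equals $-1$ only at the endpoints, so it is strictly weaker than $-1$ in between.

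Write the target as $\sum_i c_i\l_i\ge -1$. In (iii) the coefficients are $c=(0,0,p^2-1,2p,\dots,2p)$, and in (iv) they are $c=(0,m^2-1,m^2-1,m^2+2m-1,\dots)$. In both cases $c$ is nondecreasing in the index: for (iii) this uses $p^2-1\le 2p$, i.e.\ $p\le 1+\sqrt2$, and for (iv) it uses $m\ge1$. For a nondecreasing weight vector, the rearrangement (Ky Fan) principle gives $\sum_i c_i\l_i(W)=\min\sum_i c_i\bra{u_i}W\ket{u_i}$, the minimum taken over orthonormal bases $\{\ket{u_i}\}$. With $\tr W=1$, the claim is therefore equivalent to
\begin{eqnarray}
\tr(W\r)\ge 0,\qquad \r:=\sum_i (1+c_i)\proj{u_i},
\end{eqnarray}
for every orthonormal basis. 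Since $W$ is block-positive, this holds whenever $\r$ is separable. So it suffices to show that every $2\times n$ state with spectrum proportional to $(1+c_i)_i$ is separable, i.e.\ that this spectrum is absolutely separable.

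The key step is to invoke the known spectral characterization of absolute separability in $\bbC^2\otimes\bbC^n$ (Johnston). For decreasingly ordered eigenvalues $\mu_1\ge\cdots\ge\mu_{2n}$, it states that absolute separability holds iff $\mu_1\le \mu_{2n-1}+2\sqrt{\mu_{2n-2}\mu_{2n}}$. This characterization is not among the results stated earlier, so I would cite it from the literature.

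In (iii) the spectrum of $\r$ in decreasing order is $(1+2p,\dots,1+2p,p^2,1,1)$, using $p^2\le 1+2p$. The criterion reads $1+2p\le 1+2\sqrt{p^2}$, which holds with equality. In (iv) the spectrum is $(m^2+2m,\dots,m^2,m^2,1)$, and the criterion reads $m^2+2m\le m^2+2\sqrt{m^2}$, again with equality. The equality is consistent with tightness: for $\ket\psi=x_1\ket{11}+x_2\ket{22}$ with $p=x_1/x_2$, the operator $\proj{\psi}^\G$ gives exactly $-1$.

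The main obstacle I expect is bookkeeping rather than ideas. I must check that the ordering of the $1+c_i$ is correct in each parameter range, including the small case $n=2$ where the block of $2p$'s has a single entry. I must also make sure the Ky Fan minimum is applied with the correct pairing, so that the largest weights go with the smallest eigenvalues. If citing the $2\times n$ absolute-separability criterion were not acceptable, the fallback would be a direct argument: show that every such $\r$ is PPT, which is equivalent to separability in $2\times n$ only for $n\le3$ (it holds for all $n$ here because, by Hildebrand's theorem, in $2\times n$ absolute PPT is equivalent to absolute separability). The PPT check reduces to a $2\times2$-block determinant condition, which is exactly the inequality above.
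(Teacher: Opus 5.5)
Your proposal is correct and matches the paper's proof: (i) bounds the tail sums of Lemma~\ref{le:EW=general}~(iv) by $(2n-k+1)\lambda_k$, (ii) is the case $p=2$ of (iii), and (iii)--(iv) test $W$ against states with the absolutely separable spectra $(2p+1,\ldots,2p+1,p^2,1,1)$ and $(m^2+2m,\ldots,m^2+2m,m^2,m^2,1)$, both of which saturate Johnston's $2\times n$ criterion $\lambda_1\le\lambda_{2n-1}+2\sqrt{\lambda_{2n-2}\lambda_{2n}}$. The paper pairs these spectra against the eigenvalues of $W$ in opposite order via von Neumann's trace inequality, which is your Ky Fan step; one small correction to your fallback remark is that Hildebrand characterized the absolutely PPT spectra, while the equivalence of absolute PPT and absolute separability in $2\times n$ is Johnston's result.
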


The proof of the above theorem is given in the beginning of  \ref{app:2tinEW}. 


\begin{lemma}
\label{le:2n13>=}
    Let $W$ be a normalized $2\times n$ EW. Then
    \begin{eqnarray}
    \label{ineq:2n13>=}
        \l_{2n} + \sqrt{\l_1 \l_3} \geq 0.
    \end{eqnarray}
\end{lemma}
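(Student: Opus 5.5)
The plan is to reduce the inequality to positive semidefiniteness of a $2\times 2$ quadratic form. Observe that $\l_{2n}+\sqrt{\l_1\l_3}\ge 0$ is equivalent, when $\l_1,\l_3\ge 0$, to
\begin{eqnarray}
\l_1 x^2+2\l_{2n}xy+\l_3 y^2\ge 0,\quad \forall x,y\in\bbR.
\end{eqnarray}
So it suffices to produce, for every ratio $x:y$, a product vector whose expectation value in $W$ is bounded above by this form. Two preliminary cases go first. If $\l_3<0$, then $W$ has at least two negative eigenvalues besides $\l_{2n}$, and Lemma \ref{le:EW=general} (iv)(a), $\l_2+\l_{2n}\ge0$, together with Lemma \ref{lem:Existence} (ii) (at most $n-1$ negative eigenvalues for $m=2$), should give a contradiction or a direct bound. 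The case $\l_3\ge 0$ is the main one.

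Let $\ket{v_i}$ be orthonormal eigenvectors with $W\ket{v_i}=\l_i\ket{v_i}$, and let $\ket{v}=\ket{v_{2n}}$. Since $W$ is a $2\times n$ EW, $\ket{v}$ is entangled and has Schmidt rank exactly $2$, say $\ket{v}=s_1\ket{00}+s_2\ket{11}$ up to local unitaries. I will look for a product vector $\ket{\phi}=\ket{a,b}$ lying in the span of $\ket{v}$ and of $\ket{v_1},\ket{v_3},\dots,\ket{v_{2n-1}}$, i.e.\ orthogonal to $\ket{v_2}$. The product vectors orthogonal to a fixed vector form a variety of codimension one in the Segre variety, so by Lemma \ref{lem:Existence} (i) there are many of them. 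Write $\ket{\phi}=x\ket{u}+y\ket{v}$ with $\ket{u}\perp\ket{v},\ket{v_2}$. Then
\begin{eqnarray}
0\le \bra{\phi}W\ket{\phi}=|x|^2\bra{u}W\ket{u}+|y|^2\l_{2n}.
\end{eqnarray}
This alone only gives $\l_{2n}|y|^2+\l_1|x|^2\ge 0$, which is too weak. To obtain the cross term I will instead take $\ket{\phi}$ in the span of two mutually orthogonal vectors $\ket{p},\ket{q}$ with $\bra{p}W\ket{p}\le\l_1$ and $\bra{q}W\ket{q}\le \l_3$, and with the off-diagonal entry $\bra{p}W\ket{q}$ carrying the negative direction $\ket{v}$ with weight controlled by $\l_{2n}$. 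Concretely, I will try $\ket{p}=\ket{0}\ket{b_0}$ and $\ket{q}=\ket{1}\ket{b_1}$ adapted to the Schmidt form of $\ket{v}$. Products $(\alpha\ket0+\beta\ket1)\otimes\ket{b}$ then give a $2\times 2$ compression $\begin{bmatrix} A & C\\ C^* & B\end{bmatrix}$ of $(\langle a|\otimes I)W(|a\rangle\otimes I)$, which is positive semidefinite by Remark \ref{rem:abWab}. Hence $|C|^2\le AB$.

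The remaining task is to bound $A\le \l_1$ and $B\le\l_3$, and to show $|C|\ge -\l_{2n}$ for a suitable choice of local bases. For this I will use the min-max principle: among the $n$-dimensional spaces $\ket{0}\otimes\bbC^n$ and $\ket{1}\otimes\bbC^n$, the compressed diagonal blocks $W_{11},W_{22}$ have eigenvalues interlacing those of $W$. So one block has a unit vector with value at most $\l_3$ (it avoids $\ket{v_1},\ket{v_2}$ by a dimension count), and the other is trivially at most $\l_1$. The off-diagonal term must be linked to $\l_{2n}$ through the $2\times 2$ block structure $W=\begin{bmatrix}W_{11}&W_{12}\\W_{21}&W_{22}\end{bmatrix}$, using $\bra{v}W\ket{v}=\l_{2n}$ and $s_1^2+s_2^2=1$.

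The step I expect to be the main obstacle is this choice of basis making the three bounds hold simultaneously. The vectors that realize $\l_1$ and $\l_3$ in the diagonal blocks need not be the ones along which the off-diagonal block sees $\ket{v}$. My first attempt will be to optimize over the local unitary on the qubit factor, i.e.\ over $\ket{a}\in\bbC^2$, and apply the Cauchy–Schwarz inequality $|\bra{x}W\ket{y}|^2\le\bra{x}W\ket{x}\bra{y}W\ket{y}$ only on the positive part $W_+=\sum_{\l_i>0}\l_i\proj{v_i}$ restricted to the span of $\ket{v_1}$ and $\ket{v_3}$. If this fails, I will fall back on testing the inequality with the product vectors constructed in the proof of Lemma \ref{le:EW=general} (iv), which already yield bounds involving $\l_2+\l_{2n}$ and $\sum_{i\ge3}\l_i$.
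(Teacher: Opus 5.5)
Your $\l_3<0$ case is fine: $\l_3<0$ would give $2n-2>n-1$ negative eigenvalues, contradicting Lemma \ref{lem:Existence} (ii). The main case, however, has a genuine gap exactly where the cross term is supposed to appear.

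The matrix $\begin{bmatrix} A & C\\ C^* & B\end{bmatrix}$ built from $\ket{p}=\ket{0}\ket{b_0}$ and $\ket{q}=\ket{1}\ket{b_1}$ is the compression of $W$ to $\mathrm{span}\{\ket{p},\ket{q}\}$. It is not a compression of $(\bra{a}\otimes I_n)W(\ket{a}\otimes I_n)$ for any fixed $\ket{a}$. When $\ket{b_0}\not\parallel\ket{b_1}$, the only product vectors in this span are multiples of $\ket{p}$ and $\ket{q}$, so Remark \ref{rem:abWab} says nothing about $C$.

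Worse, for the choice you propose, adapted to the Schmidt form $\ket{v}=s_1\ket{0,b_0}+s_2\ket{1,b_1}$, the vector $\ket{v}$ itself lies in that span. Then $s_1^2A+s_2^2B+2s_1s_2\,\mathrm{Re}\,C=\bra{v}W\ket{v}=\l_{2n}<0$, so the compression is indefinite and $|C|^2>AB$ necessarily. Already for $W=(\proj{\psi})^\G$ with $\ket{\psi}=s_1\ket{00}+s_2\ket{11}$, the adapted pair is $\ket{01},\ket{10}$, which gives $A=B=0$ and $|C|=s_1s_2=-\l_4$.

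If you instead take $\ket{b_0}=\ket{b_1}$, the compression is positive semidefinite. But then the span contains only product vectors, and $C$ has no link to $\l_{2n}$. So the obstacle you flag is not a matter of choosing bases well: a subspace on which $W$ is positive semidefinite cannot contain $\ket{v}$. The bound $\l_{2n}^2\le\l_1\l_3$ therefore needs an input your outline does not supply. Your fallback via Lemma \ref{le:EW=general} (iv) only controls $\l_2+\l_{2n}$ and tail sums, never the product $\l_1\l_3$, so it cannot force $\l_{2n}\to 0$ as $\l_3\to 0$.

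The paper supplies that input by citation and a short reduction. It compresses $W$ to $W'$ on $\bbC^2\otimes\mathrm{span}\{\ket{0},\ket{1}\}$, the local support of $\ket{v}$ on the second factor. Then:
\begin{itemize}
\item $W'$ is still block-positive;
\item $\ket{v}$ is still an eigenvector, so $\l'_4=\l_{2n}$;
\item Cauchy interlacing gives $\l'_1\le\l_1$ and $0\le\l'_3\le\l_3$.
\end{itemize}
The target inequality is homogeneous and increasing in $\l_1,\l_3$, so it follows from the two-qubit inequality $\l'_4+\sqrt{\l'_1\l'_3}\ge0$, which is Theorem 3 of \cite{Johnston_2018}. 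To complete your argument you need either to invoke that two-qubit result, after which this compression-and-interlacing reduction takes a few lines, or to prove it directly. Proving it directly requires more than the $2\times 2$ Cauchy--Schwarz step you outline.
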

\begin{proof}
For the two-qubit case, we have $\l_4 + \sqrt{\l_1 \l_3} \geq 0$ from Theorem 3 of  \cite{Johnston_2018}. For $2 \times n $ ($n>2$) systems, we assume that there exists an EW $W$ such that 
\begin{eqnarray}
\label{ineq:2n13<}
    \l_{2n} + \sqrt{\l_1 \l_3} < 0. 
\end{eqnarray}
Up to local unitary equivalence, we can assume the eigenvector corresponding to $\l_{2n}$ is $\cos \theta {|00\rangle} + \sin \theta {|11\rangle}$ with $\theta \in (0, \frac{\pi}{2})$. Next we project $W$ onto the subspace spanned by $\{|0\rangle, |1 \rangle\} \otimes \{|0\rangle, |1 \rangle\}$, and then we obtain a $W' \in \mathcal{BP}_{2,2}$. Let eigenvalues of $W'$ be $\l'_i$, $i=1,2,3,4$, then one can see $\l'_4 = \l_{2n}<0$. Thus, W' is a two-qubit EW. Furthermore, using Lemma 1 of \cite{2025Spectral}, we have $\l'_1 \le \l_1 $ and $\l'_3 \le \l_3 $. Using Inequality \eqref{ineq:2n13<}, we obtain that
\begin{eqnarray}
    \l'_4 + \sqrt{\l'_1 \l'_3} \le \l_{2n} + \sqrt{\l_1 \l_3} <0, 
\end{eqnarray}
which leads to a contradiction. Therefore, the proof is complete. 
\end{proof}

The discussion in Remark \ref{re:constrOfW_SS} motivates the following construction.

\begin{example}
\label{eg:genuine=}
Let
\begin{eqnarray}
    |\psi_i \rangle = \frac{|0,2i\rangle+|1,2i+1\rangle}{\sqrt{2}}
\end{eqnarray}
and 
\begin{eqnarray}
    W=\frac1k\sum_{i=0}^{k-1} |\psi_i\rangle\langle\psi_i|^\Gamma, 
\end{eqnarray}
then $W$ is an $2 \times n$ EW where $n=2k$. One can see that $W$ has $3k$ eigenvalues equal to $\frac{1}{2k}$ and $k$ eigenvalues equal to $-\frac{1}{2k}$. Thus it satisfies 
\begin{eqnarray}
    \lambda_{2n}+\sqrt{\lambda_1\lambda_3}=\lambda_{4k}+\sqrt{\lambda_1\lambda_3}=0. 
\end{eqnarray}
\end{example}

\section{Results on multipartite EWs}
\label{sec:res}

We first present some research findings on multipartite EWs. 
\begin{theorem}
\label{le:multiEWTRW^2}
    Let $W$ be a normalized $d_1 \times d_2 \times \cdots \times d_n$ EW. Then the lower bound of $\tr (W^2)$ is $\frac{1}{d_1d_2 \cdots d_n-1}$ and not attainable.
\end{theorem}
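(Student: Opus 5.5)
The plan is to prove two things: the strict lower bound $\tr(W^2)>\frac{1}{D-1}$ with $D:=d_1d_2\cdots d_n$, and that this value is the infimum.

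\textbf{Strict lower bound.} Let $W$ be a normalized $d_1\times\cdots\times d_n$ EW with eigenvalues $\l_1\ge\cdots\ge\l_D$, $\sum_i\l_i=1$, and $\l_D<0$ by Definition \ref{def:ew_def2}(ii). Since $\l_D^2>0$, it suffices to show
\begin{eqnarray}
\sum_{i=1}^{D-1}\l_i^2\ \ge\ \frac{1}{D-1}.
\end{eqnarray}
By Cauchy--Schwarz,
\begin{eqnarray}
\sum_{i=1}^{D-1}\l_i^2\ \ge\ \frac{(1-\l_D)^2}{D-1}\ >\ \frac{1}{D-1},
\end{eqnarray}
because $1-\l_D>1$. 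Adding $\l_D^2$ only strengthens this, so $\tr(W^2)>\frac{1}{D-1}$ for every normalized multipartite EW. The same argument shows the bound can never be attained: equality in the final step would force $\l_D=0$, contradicting that $W$ is not positive semidefinite.

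\textbf{Sharpness.} Next I would show that $\frac{1}{D-1}$ is the infimum. The idea is to build EWs that get arbitrarily close to $\frac{1}{D-1}(I_D-\proj{\phi})$ for a unit vector $\phi$. Eigenvalues close to $\frac{1}{D-1}$ with multiplicity $D-1$, together with a tiny negative eigenvalue, give $\tr(W^2)\to\frac{1}{D-1}$.

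\emph{Choice of $\ket{\phi}$.} I would take $\ket{\phi}$ to be a genuinely entangled state, for instance $\ket{\ghz_{d_1}}$. The intended family is
\begin{eqnarray}
W_\epsilon=\frac{(1+\epsilon)I_D-\proj{\phi}\,(1+\epsilon)(D-1)/D\cdots}{\cdots}.
\end{eqnarray}
More cleanly, I would set $W_{a}=aI_D-\proj{\phi}$ and normalize, with $a$ slightly below $1$ but above the maximal squared overlap $\mu:=\max|\langle\phi|a_1,\dots,a_n\rangle|^2$ of $\phi$ with product states. Then
\begin{itemize}
\item $W_a$ is block positive because $a\ge\mu$;
\item $W_a$ has a negative eigenvalue because $a<1$;
\item $\tr(W_a^2)/(\tr W_a)^2=\frac{(D-1)a^2+(a-1)^2}{(Da-1)^2}$, which tends to $\frac{D-1}{(D-1)^2}=\frac{1}{D-1}$ as $a\to1^-$.
\end{itemize}
This requires $\mu<1$, which holds for any entangled $\phi$; for $\ket{\ghz_{d_1}}$ one has $\mu=1/d_1$. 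Lemma \ref{le:GHZ=DEW} then confirms that these operators are EWs (indeed DEWs) for all $a\in[1/d_1,1)$.

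\textbf{Main obstacle.} The only delicate point is making sure the family genuinely consists of EWs all the way up to $a\to1$. This comes down to the condition $\mu<1$, i.e.\ using a $\phi$ that is not fully product, and that is exactly what Lemma \ref{le:GHZ=DEW} supplies.
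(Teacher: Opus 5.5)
Your proposal is correct and follows the paper. The strict bound $\tr(W^2)\ge\frac{(1-\lambda_D)^2}{D-1}+\lambda_D^2>\frac{1}{D-1}$ is exactly the paper's Cauchy--Schwarz argument using $\lambda_D<0$. Your sharpness part, which the paper's proof of this theorem omits, uses the same family $aI_D-\proj{\phi}$ with $a\to1^-$ that the paper uses in Theorem \ref{le:d1-dnTrW^2}(i). Your direct check of block positivity via $a\ge\mu$ already makes it self-contained; Lemma \ref{le:GHZ=DEW} assumes the EW property as a hypothesis, so it is not needed there.
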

\begin{proof}
We define $D=d_1d_2 \cdots d_n$. Using  the Cauchy-Schwarz inequality, we have
\begin{align}
    \tr (W^2) &=\l_1^2+\cdots + \l_D^2 = (\l_1^2+\cdots + \l_D^2) + \l_{d_1d_2 \cdots d_n}^2\\
    &= \frac{(\l_1^2+\cdots + \l_{D-1}^2)(1^2 + \cdots +1^2)}{D-1} + \l_D^2\\
    &\geq\frac{(\l_1+\cdots + \l_{D-1})^2}{D-1} + \l_{D}^2\\
    &=\frac{(1-\l_D)^2}{D-1} + \l_D^2 > \frac{1}{D-1},
\end{align}
where the last inequality is due to  $\l_D<0$ by definition of EWs. 
\end{proof}

The value $\tr W^2=1$ is achievable when we take the EW being the partial transpose of a GHZ state. 
Here is the detail. Let $d_1 = \min_{1\le k \le n} \{d_k\} $, then we define
\begin{eqnarray}
    \Psi_{d_1}=\frac{1}{\sqrt{d_1}}\sum_{i=1}^{d_1}|ii\cdots i\rangle \in \mathbb{C}^{d_1} \otimes \mathbb{C}^{d_2} \otimes \cdots \otimes \mathbb{C}^{d_n},  
\end{eqnarray}
and we claim that the upper bound one is achievable when we take the EW being $|\Psi_{d_1} \rangle \langle\Psi_{d_1} |^ \G$ where $\G$ acts on any partial systems. Indeed, the eigenvalues of $|\Psi_{d_1} \rangle \langle\Psi_{d_1} |^ \G$ are precisely $\frac{1}{d_1}$ and $-\frac{1}{d_1}$, with algebraic multiplicities $\frac{d_1^2+d_1}{2}$ and $\frac{d_1^2-d_1}{2}$, respectively. For instance, when $d_1=3$, we directly construct the simplest possible matrices of order $27 \times 27$ in \ref{app:ExplicitConstructionGHZ}. We then obtain all eigenvalues of $|\Psi_3 \rangle \langle\Psi_3 |^ \G$, i.e., $\frac{1}{3}$ and $-\frac{1}{3}$, with algebraic multiplicities six and three, respectively.

On the other hand, the upper bound of $\tr W^2$ is hard to derive at present, because it is unknown whether the maximum eigenvalue of a three-qubit normalized NDEW may exceed one. This is also why Lemma \ref{le:multi-DEW=lambda1<1} does not work for NDEW yet. 
Further, the infimum of $\tr(W^2)$ is unknown for both bipartite and multipartite NDEW $W$ by Lemma \ref{le:NDEW}.

\subsection{Multipartite DEWs}

In this subsection, we present the main result of this paper. We investigate the supremum and infimum of maximum and minimum eigenvalues of multipartite DEWs. 
The following result partially extends Lemma \ref{le:lambdaIN[-1/2,1)} (i) and \ref{le:NDEW} (ii) on the lower bound and infimum of $\tr (W^2)$.

Before arriving at this result, we first derive a lemma for the special case, which serves as a preparation for the later construction.

\begin{lemma}
\label{le:GHZ_n}
    Let
    \begin{eqnarray}
        |GHZ_{n}\rangle = \frac{1}{\sqrt{2}} (|0\rangle^{\otimes n} + |1\rangle^{\otimes n}), 
    \end{eqnarray}
    and define 
    \begin{eqnarray}
        W = \frac{1}{k} \sum_{S_1, \cdots, S_k \in {A_1, \cdots, A_n}} |GHZ_{n}\rangle \langle GHZ_{n}|^{T_{S_i}}. 
    \end{eqnarray}
    Then $W$ is a DEW if and only if there is a negative eigenvalue of $W$. 
\end{lemma}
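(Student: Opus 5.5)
The plan is to show that $W$ is built from the start in the decomposable form of Definition~\ref{def:DEWn}, and that it is always block-positive with unit trace. Then the only defining property of an EW that can fail is condition (ii) of Definition~\ref{def:ew_def2}, namely that $W$ is not positive semidefinite. That condition is exactly ``$W$ has a negative eigenvalue,'' which gives both directions at once.

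\emph{Step 1: normalization and block-positivity.} Put $X=\proj{GHZ_n}\ge 0$ and $\tr X=1$. Partial transposition preserves the trace, so $\tr W=\frac1k\sum_i \tr X^{T_{S_i}}=1$. For a product vector $\ket{a_1,\dots,a_n}$ we have
\begin{eqnarray}
\bra{a_1,\dots,a_n}X^{T_{S}}\ket{a_1,\dots,a_n}=\bra{b_1,\dots,b_n}X\ket{b_1,\dots,b_n}\ge 0,
\end{eqnarray}
where $\ket{b_j}=\ket{\bar a_j}$ (complex conjugate) for $A_j\in S$ and $\ket{b_j}=\ket{a_j}$ otherwise. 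Hence each $X^{T_{S_i}}$ is block-positive, and so is the convex combination $W$. This verifies condition (iii) of the EW definition in its multipartite form.

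\emph{Step 2: $W$ has the DEW form.} Terms with $S_i=\emptyset$ are positive semidefinite and are collected into $X_1$. Definition~\ref{def:DEWn} requires $|S_j|\le\lfloor n/2\rfloor$. To handle subsets with $|S_i|>\lfloor n/2\rfloor$, I use that $X$ has real entries in the computational basis, so $X^T=X$. Composing $T_{S}$ with $T_{S^c}$ gives the full transpose, so $X^{T_S}=(X^{T})^{T_{S^c}}=X^{T_{S^c}}$. Every term can therefore be rewritten with $|S_i|\le\lfloor n/2\rfloor$. For the same reason, terms with $S_i=\{A_1,\dots,A_n\}$ equal $X$ itself and go into $X_1$.

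The remaining requirement $S_j\cup S_k\neq\{A_1,\dots,A_n\}$ is where I expect the main obstacle. I would first try to discharge it by choosing, for each term, between $S$ and $S^c$ using the identity above. If that is not enough, note that $X^{T_S}$ depends only on the bipartition $\{S,S^c\}$. It then suffices that each term equals $Y^{T_{S'}}$ for some positive $Y$ and some admissible $S'$, which the identity gives. If the pairwise union constraint is read as applying only to subsets representing distinct bipartitions, this settles it. I will check the interpretation carefully, since it is the one genuinely definitional point.

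\emph{Step 3: conclude.} By Steps 1--2, $W$ is Hermitian, has trace one, is block-positive, and is of decomposable form. It is therefore a DEW precisely when it is not positive semidefinite, i.e.\ when it has a negative eigenvalue. If $W\ge0$, it is not an EW at all, so it is not a DEW. Conversely, if some eigenvalue is negative, all conditions of Definitions~\ref{def:ew_def2} and~\ref{def:DEWn} hold.

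For concreteness, I would add a remark on the spectrum. Each $X^{T_S}$ acts as $\frac12$ on $\ket{0^n}$ and on $\ket{1^n}$, and on the pair $\ket{x_S},\ket{\bar x_S}$ (the string that is $1$ exactly on $S$, and its complement) as $\frac12$ times the swap. So $W$ is block diagonal on two-dimensional spaces $\mathrm{span}\{\ket{x},\ket{\bar x}\}$, with eigenvalues $\pm\frac{c_x}{2k}$, where $c_x$ is the number of $i$ with $\{S_i,S_i^c\}$ matching $x$. A negative eigenvalue thus exists whenever some nontrivial bipartition occurs among the $S_i$. This can serve as an explicit check of the criterion.
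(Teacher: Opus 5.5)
Your proposal is correct and follows the same route as the paper. $W$ is decomposable by construction, and each term $\proj{GHZ_n}^{T_{S_i}}$ is nonnegative on product vectors because partial transposition only conjugates the factors on $S_i$; the only remaining EW condition is failure of positive semidefiniteness, i.e.\ a negative eigenvalue. You treat the index constraints of Definition~\ref{def:DEWn} more carefully than the paper, and your first idea already settles them with no reinterpretation needed: fix one representative of size at most $\lfloor n/2\rfloor$ for each bipartition $\{S,S^c\}$ and merge terms with equal $S$, so no two chosen sets are complementary and no union equals $\{A_1,\dots,A_n\}$.
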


\begin{proof}
Firstly, we claim that $W$ is decomposable. Indeed, for $W$, let $X_1$ in Definition \ref{def:DEWn} be $O$, then one can see $W$ is decomposable. 

Secondly, for any separable state $|\psi \rangle = |i_1\rangle \otimes |i_2\rangle \otimes \cdots \otimes |i_n\rangle$, the following equation holds
\begin{align}
    \langle \psi |W| \psi \rangle =& \tr (|\psi \rangle \langle \psi |W)\\
    =& \tr \left[ \frac{1}{k} \sum_{S_1, \cdots, S_k \in {A_1, \cdots, A_n}} (|\psi \rangle \langle \psi |) (|GHZ_{n}\rangle \langle GHZ_{n}|)^{T_{S_i}} \right]\\
    =&\tr \left[ \frac{1}{k} \sum_{S_1, \cdots, S_k \in {A_1, \cdots, A_n}} (|\psi \rangle \langle \psi |)^{T_{S_i}} (|GHZ_{n}\rangle \langle GHZ_{n}|) \right]. 
\end{align}
Let the partial transpose of the state $|\psi \rangle$ be $|\psi \rangle ^*$, then
\begin{eqnarray}
    \langle \psi |W| \psi \rangle =& \tr \left[ \frac{1}{k} \sum_{S_1, \cdots, S_k \in {A_1, \cdots, A_n}} (|\psi \rangle ^* \langle \psi | ^*) (|GHZ_{n}\rangle \langle GHZ_{n}|) \right]
    \geq 0. 
\end{eqnarray}
Therefore, using the definition of EW, we conclude that $W$ is a DEW if and only if there is a negative eigenvalue of $W$. 
\end{proof}

\begin{theorem}
\label{le:d1-dnTrW^2}
Let $W$ be a normalized $d_1 \times d_2 \times \cdots \times d_n$ EW. 
    
(i) If $W$ is a DEW then the infimum of $\tr (W^2)$ is also $\frac{1}{d_1d_2 \cdots d_n-1}$. 

(ii) For each $x\in (\frac{1}{d_1d_2...d_n-1},1]$, there exists a DEW $W$ such that $\tr(W^2)=x$. 

(iii) If $W$ is a DEW then the supremum of $\tr (W^2)$ is one and attainable.
\end{theorem}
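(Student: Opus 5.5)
We write $D=d_1d_2\cdots d_n$ and argue in three steps: a lower bound with a family of DEWs approaching it, a family filling the intermediate values, and an explicit DEW attaining $1$.

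\textbf{Lower bound and approach to the infimum (part (i)).} Theorem \ref{le:multiEWTRW^2} already gives $\tr(W^2)>\frac{1}{D-1}$ for every normalized EW, so in particular for DEWs. It remains to build DEWs whose value of $\tr(W^2)$ tends to $\frac{1}{D-1}$. The equality case of the Cauchy--Schwarz step forces $\l_1=\cdots=\l_{D-1}=\frac{1-\l_D}{D-1}$ with $\l_D\to 0^-$. This suggests the family
\[
W_\epsilon=\frac{(1+\epsilon)I_D-D\epsilon\proj{\ghz_{d_1}}}{D}, \qquad \epsilon\to 0^+,
\]
which is $\frac{1+\epsilon}{D}\bigl(I_D-\frac{D\epsilon}{1+\epsilon}\proj{\ghz_{d_1}}\bigr)$. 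This is not yet of the form $xI_D-\proj{\ghz_{d_1}}$ with $x\ge 1/d_1$, because the coefficient of the projector is small.

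I would instead write a general $W$ in the form $a I_D-b\proj{\psi}$. Its partial transposes with respect to any $S_j$ are $aI_D-b\proj{\psi}^{\G_{S_j}}$, which is positive semidefinite whenever $a\ge b\cdot(\text{max Schmidt coefficient product})$, in the spirit of Lemma \ref{le:W=xI_mn-|ps><ps|}. So $W$ is a DEW as soon as $a/b\ge 1/d_1$ (checked through the GHZ eigenvalues $\pm 1/d_1$) and $a<b$, the latter making it not positive semidefinite. Note that $\tr(W^2)$ is invariant under scaling after normalization, so everything depends only on $x=a/b\in[1/d_1,1)$. Normalizing gives $W=\frac{xI_D-\proj{\ghz}}{Dx-1}$, whose eigenvalues are $\frac{x}{Dx-1}$ with multiplicity $D-1$ and $\frac{x-1}{Dx-1}$ once. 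As $x\to 1^-$ these tend to $\frac{1}{D-1}$ and $0$, so
\[
\tr(W^2)=\frac{(D-1)x^2+(x-1)^2}{(Dx-1)^2}\;\longrightarrow\;\frac{1}{D-1}.
\]
This gives the infimum. It is not attained, by Theorem \ref{le:multiEWTRW^2}.

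\textbf{Filling the interval (part (ii)).} By Lemma \ref{le:GHZ=DEW}, $W_x$ is a DEW for every $x\in[1/d_1,1)$, and $f(x)=\tr(W_x^2)$ is continuous on this interval. For the left endpoint I would instead use $W_0=\proj{\ghz_{d_1}}^{\G}$, which is a DEW with $\tr(W_0^2)=1$. I then connect the two families by the segment $W_t=(1-t)W_0+tW_{x_0}$ for some $x_0$ close to $1$. Each $W_t$ is decomposable, being a positive combination of PSD matrices and partial transposes of PSD matrices. It is block-positive as a convex combination of block-positive matrices. The main obstacle is showing that every $W_t$ keeps a negative eigenvalue. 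I would take the GHZ vector itself as the test vector: its expectation is $-\frac{1}{d_1}<0$ under $W_0$ (the GHZ vector lies in the $-1/d_1$ eigenspace of $W_0$ only after a local adjustment, so I may need an antisymmetric test vector $\frac{1}{\sqrt2}(\ket{01\cdots}-\ket{10\cdots})$, which is negative for $W_0$). Its expectation is $\frac{x_0}{Dx_0-1}>0$ but small under $W_{x_0}$. Since that term is of order $1/D$ while the $W_0$ term is $-1/d_1$, I expect to fix this by choosing $x_0$ suitably or by checking negativity directly.

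A cleaner alternative avoids the issue: take $W_t=\frac{tI_D-\proj{\phi}^{\G}}{\cdots}$ for a suitable state $\phi$ and show that continuity gives the whole range. Once negativity holds along the path, continuity of $t\mapsto\tr(W_t^2)$ together with the intermediate value theorem covers $(\frac{1}{D-1},1]$.

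\textbf{Supremum (part (iii)).} Every normalized Hermitian $W$ satisfies $\tr(W^2)\le(\sum|\l_i|)^2$. For a DEW I would bound $\tr(W^2)\le 1$ by passing to any bipartition: by Lemma \ref{le:n-partite EW split=EW}, read in reverse via bipartite cuts, $W$ is a bipartite EW, so Lemma \ref{le:EW=general} (i) gives $\tr(W^2)\le 1$. Attainment follows from $\proj{\Psi_{d_1}}^{\G}$, whose computed eigenvalues $\pm 1/d_1$ give $\tr(W^2)=\frac{d_1^2}{d_1^2}=1$.
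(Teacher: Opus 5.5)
Your part (i) is correct and is the paper's argument: the normalized family $\frac{xI_D-\proj{\ghz_{d_1}}}{Dx-1}$ with $x\to1^-$, together with Theorem \ref{le:multiEWTRW^2} for non-attainment. Checking $W^{\Gamma_S}\ge0$ directly is a fine substitute for citing Lemma \ref{le:GHZ=DEW}. The attaining example $\proj{\Psi_{d_1}}^{\Gamma}$ in (iii) is also the paper's. Two other steps, however, fail.

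\textbf{The bound $\tr W^2\le 1$ in (iii).} Lemma \ref{le:n-partite EW split=EW} cannot be read in reverse. Merging parties enlarges the set of product vectors, so block-positivity can be lost; the paper says so right after that lemma, using the three-qubit W-state witness $\frac13 I_8-\proj{W}$. This happens even for DEWs. For $\ket{g}=\frac{1}{\sqrt2}(\ket{000}+\ket{111})$, the three-qubit DEW $\proj{g}^{\Gamma_A}$ has expectation $-\frac14$ on $\frac{1}{\sqrt2}(\ket{01}-\ket{10})_{AB}\otimes\ket{+}_C$, so it is not an EW for the cut $AB|C$. Hence Lemma \ref{le:EW=general} (i) is not available. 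Note also that your argument never uses decomposability, so it would bound multipartite NDEWs too, which the paper lists as open. The missing idea is to use the decomposition itself. Partial transposition preserves the trace and the Hilbert--Schmidt norm, and $\|X\|_2\le\tr X$ for $X\ge0$. So for $W=X_1+\sum_{j>1}X_j^{\Gamma_{S_j}}$ one gets $\sqrt{\tr W^2}\le\sum_j\|X_j\|_2\le\sum_j\tr X_j=\tr W=1$. This is what the paper's convexity argument (Lemma \ref{le:ConvexityOfTr(W^2)} and Appendix \ref{app:UseConvexity}) amounts to.

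\textbf{Filling the interval in (ii).} Since $f(x)=\tr W_x^2$ is decreasing, the family $W_x$ only covers $(\frac{1}{D-1},\frac{D-2d_1+d_1^2}{(D-d_1)^2}]$, which stays below $1$ except for two qubits. So everything rests on the segment $W_t=(1-t)W_0+tW_{x_0}$, and that segment leaves the set of EWs. On the $d_1^2$-dimensional span of the vectors $\ket{i\cdots i}_{\bar S}\ket{j\cdots j}_{S}$, $W_0=\proj{\ghz_{d_1}}^{\Gamma_S}$ is $\frac{1}{d_1}$ times the swap $i\leftrightarrow j$, and $\proj{\ghz_{d_1}}$ projects onto a symmetric vector, so the two commute. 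The only possibly negative eigenvalues of $W_t$ are then $-\frac{1-t}{d_1}+\frac{tx_0}{Dx_0-1}$ (antisymmetric part) and $\frac{1-t}{d_1}-\frac{t(1-x_0)}{Dx_0-1}$ (GHZ vector). For $x_0>\frac12$, in particular for $x_0$ near $1$, both are nonnegative on a whole interval of $t$, where $W_t\ge0$ and the intermediate value argument collapses. For $d_1=2$, every admissible $x_0\ge\frac12$ has this problem.

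Your alternative fails as written too: for $\phi$ the GHZ vector, $tI_D-\proj{\phi}^{\Gamma}$ is block-positive only for $t\ge\frac1{d_1}$ and non-positive only for $t<\frac1{d_1}$. The paper instead moves in a direction that leaves the negative eigenvector untouched. Take $W=(W_1+y\proj{\psi})/(1+y)$ with $\psi$ the top eigenvector of a DEW $W_1$ from (i). This stays decomposable and block-positive, has $\lambda_D(W)=\lambda_D(W_1)/(1+y)<0$ for all $y\ge0$, and satisfies $\tr W^2\to1$ as $y\to\infty$. Continuity then yields $[\tr W_1^2,1)$, and (iii) supplies the value $1$. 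Your alternative does work with the opposite sign: $(tI_D+\proj{\ghz_{d_1}}^{\Gamma})/(tD+1)$, $t\in[0,\frac1{d_1})$, is a DEW whose $\tr W^2$ runs from $1$ down toward $\frac{D+d_1^2+2d_1}{(D+d_1)^2}$. That value lies below $f(\frac1{d_1})$, so together with $W_x$ it covers the whole interval.
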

\begin{proof}
(i) It suffices to prove that there exists a DEW whose limit of the trace is exactly $\frac{1}{D-1}$. We consider 
\begin{eqnarray}
\label{eq:W1=xI_D-|psi><psi|}
    W_1 = x I_D-|\psi\rangle \langle \psi|
\end{eqnarray}
where $I_D$ is an order-$D$ identity matrix, and $\ket{\ps}$ is an $n$-partite genuinely entangled pure state. Then we claim that $\widetilde{W_1}:=\frac{W_1}{xD-1}$ is an EW if $1> x\geq x_0$ where 
\begin{eqnarray}
    x_0 := \sup_{||a_j|| = 1} |\bra{a_1,a_2} |\psi\rangle|^2,
\end{eqnarray}
where the union of systems of pure states $\ket{a_1}$ and $\ket{a_2}$ is the system of $\ket{\ps}$. Indeed, for an arbitrary product state $|\phi_{\text{sep}} \rangle:=\ket{a_1,a_2}$, that can be expressed as a ${D}$-dimensional vector with $||a_j|| = 1$, we have 
\begin{eqnarray}
    \langle \phi_{\text{sep}}|W_1|\phi_{\text{sep}} \rangle = x\langle \phi_{\text{sep}}|\phi_{\text{sep}} \rangle - \langle \phi_{\text{sep}}|\psi \rangle^2 \geq \langle \phi_{\text{sep}}|\psi \rangle^2 - \langle \phi_{\text{sep}}|\psi \rangle^2 =0,
\end{eqnarray}
which implies that $\widetilde{W_1}$ is a DEW by Lemma \ref{le:GHZ=DEW}.
Thus for $x\in [x_0, 1)$, 
\begin{align}
    \tr (\widetilde{W_1}^2) &= \frac{\tr(x^2I_D-2x|\psi\rangle \langle \psi|+|\psi\rangle \langle \psi|\psi\rangle \langle \psi|)}{(x{D}-1)^2}\\
    &=\frac{x^2D-2x+1}{(xD-1)^2}. 
\end{align}
Therefore, 
\begin{eqnarray}
    \lim_{x\rightarrow1^-} \tr (\widetilde{W_1}^2) = \frac{1}{D-1}.
\end{eqnarray}
So the proof is complete. 

(ii) The claim follows by replacing a normalized DEW $W_1$ by $W=(W_1+y\proj{\ps})/(1+y)$ where $y>0$ and $\ket{\ps}$ is the eigenvector of the maximum eigenvalue $W_1$. One can make $\tr W^2=x$ by a suitable $y$.

(iii) We need show that $\tr W^2\le1$, where $W$ is a multipartite DEW. We first consider a normalized DEW $W=P+Q^{\G_A}+R^{\G_B}$ where $\frac{P+Q^{\G_A}}{\tr(P+Q^{\G_A})}$ is also a normalized DEW. Then $\tr(W^2) \le 1$. The detailed proof is provided in \ref{app:UseConvexity}.

Next, we consider an $n$-partite entanglement witness $H \in \bbM_D:= \mathbb{M}_{d_1} \otimes... \otimes \mathbb{M}_{d_n}=\cB(\cH_{A_1}\otimes...\otimes\cH_{A_n})$ in Definition \ref{def:DEWn} that is a decomposable EW (DEW). There exist positive semi-definite Hermitian matrices $X_1, X_2,... \in \bbM_D$ such that      \begin{eqnarray}
\label{eq:H=X+YG}
        H = X_1 + \sum_{j>1} X_j^{\Gamma_{S_j}},
    \end{eqnarray}
where $\Gamma_{S_j}$ denotes the partial transpose with respect to the subsystem family $S_j\subset\{A_1,...,A_n\}$, such that $\abs{S_j}\le \lfloor{n \over 2}\rfloor$ and $S_j\cup S_k\ne \{A_1,...,A_n\}$. 

Then we use the mathematical induction. If a normalized DEW $H = X_1 + \sum_{j=1}^{k} X_j^{\Gamma_{S_j}}$ where 
\begin{eqnarray}
    H':=\frac{X_1 + \sum_{j=1}^{k-1} X_j^{\Gamma_{S_j}}}{\tr(X_1 + \sum_{j=1}^{k-1} X_j^{\Gamma_{S_j}})}
\end{eqnarray}
is also a normalized DEW. Then from the result before, we have
\begin{eqnarray}
    \tr[H'^2] \le 1.
\end{eqnarray}
At the same time time, since $X_k^{\Gamma_{S_k}}$ is a Hermitian matrix, we obtain
\begin{eqnarray}
    \tr[(\frac{X_k^{\Gamma_{S_k}}}{\tr(X_k^{\Gamma_{S_k}})})^2] =\frac{1}{\tr(X_k^{\Gamma_{S_k}})^2}\tr[(X_k^{\Gamma_{S_k}})^2] = 1.
\end{eqnarray}
One can see that
\begin{eqnarray}
    \tr(H')+\tr(X_k^{\Gamma_{S_k}})=\tr(H)=1.
\end{eqnarray}
Thus, using the convexity of the matrix function $\tr(H^2)$ in Lemma \ref{le:ConvexityOfTr(W^2)}, we obtain that
\begin{align}
    \tr(H^2)\le 1.
\end{align}
\end{proof}

The following result partially extends Lemma \ref{le:lambdaIN[-1/2,1)} (iii).
\begin{lemma}
\label{le:multi-DEW=lambda1<=1}
(i) The minimum eigenvalue of normalized multipartite EW is strictly upper bounded by zero.

(ii) The supremum of minimum eigenvalue of normalized multipartite DEW is also zero.
\end{lemma}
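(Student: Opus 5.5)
Part (i) follows directly from the definition of an EW. A normalized multipartite EW $W$ is Hermitian and not positive semidefinite, so some vector $\vec{x}$ has $\vec{x}^*W\vec{x}<0$. By the Rayleigh quotient characterization, $\l_{d_1\cdots d_n}(W)=\min_{\|\vec x\|=1}\vec x^*W\vec x<0$. Hence the minimum eigenvalue is strictly negative, and $0$ is a strict upper bound that no EW attains.

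For (ii), (i) already gives $0$ as an upper bound for DEWs, and it is not attained. It remains to exhibit normalized multipartite DEWs whose minimum eigenvalue is arbitrarily close to $0$. I would reuse the family from the proof of Theorem~\ref{le:d1-dnTrW^2} (i):
\[
\widetilde{W_1}=\frac{xI_D-\proj{\ps}}{xD-1},
\]
where $D=d_1\cdots d_n$, $\ket{\ps}$ is a genuinely entangled pure state, and $x\in[x_0,1)$. Here $x_0$ is the largest squared maximal Schmidt coefficient of $\ket{\ps}$ over all bipartitions.

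\begin{itemize}
\item By Lemma~\ref{le:W=xI_mn-|ps><ps|} (ii), $\widetilde{W_1}$ is a DEW. For concreteness, take $\ket{\ps}=\ket{\ghz_{d_1}}$ and use Lemma~\ref{le:GHZ=DEW} directly; this gives $x_0=1/d_1$.
\item Its eigenvalues are $\frac{x}{xD-1}$ with multiplicity $D-1$ and $\frac{x-1}{xD-1}$ with multiplicity one.
\item The minimum eigenvalue is therefore $\frac{x-1}{xD-1}<0$, which tends to $0^-$ as $x\to1^-$.
\end{itemize}

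So the supremum over DEWs is exactly $0$, and by (i) it is not attainable.

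As a check for alternative routes, Lemma~\ref{le:existence_0f_10000} also gives DEWs with spectrum near $(1,0,\dots,0)$, but only in the bipartite case. The explicit family above avoids relying on that lemma.

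The only delicate point is confirming that $\widetilde{W_1}$ is a genuine multipartite DEW for every $x\in[x_0,1)$. This means checking that $xI-\proj{\ps}$ has positive semidefinite partial transpose with respect to some admissible $S_j$ when $x\ge x_0$. For the GHZ state this is direct: the partial transpose of $\proj{\ghz_{d_1}}$ has eigenvalues $\pm 1/d_1$ (computed earlier in the paper), so $xI-\proj{\ghz_{d_1}}^{\G}\ge 0$ whenever $x\ge 1/d_1$. Writing $W=(xI-\proj{\ghz_{d_1}})^{\G}$ with $X_1=O$ then exhibits the decomposition required by Definition~\ref{def:DEWn}. Everything else is a routine eigenvalue computation.
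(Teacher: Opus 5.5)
Your argument is correct. Part (i) matches the paper, but for (ii) you take a different route.

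\textbf{The paper's route.} The paper uses no specific witness. It starts from an arbitrary normalized DEW $W=\sum_j\l_j\proj{a_j}$, forms $W_1=qW+(1-q)P/\tr P$ with a projector $P$, and lets $q\to0^+$. The positive semidefinite term is absorbed into $X_1$ of Definition~\ref{def:DEWn}, so $W_1$ remains a normalized DEW while its negative eigenvalues $q\l_s$ tend to $0^-$. For this to work, $P$ has to project onto the positive eigenspace of $W$ (indices $s\le k$). With the range $s\ge k$ as printed, $P$ covers the negative eigenvectors, and $W_1$ becomes positive semidefinite for small $q$, so it is no longer an EW.

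\textbf{Your route.} Your explicit family $\widetilde{W_1}=(xI_D-\proj{\ghz_{d_1}})/(xD-1)$ sidesteps this problem. Decomposability is certified directly by $\widetilde{W_1}^{\G_{A_1}}\ge0$ for $x\ge 1/d_1$, and block-positivity and the EW property follow from it. The spectrum is explicit. The same family also gives $\l_1=\frac{x}{xD-1}\to\frac{1}{D-1}$ and $\tr(\widetilde{W_1}^2)\to\frac{1}{D-1}$, which links this lemma to Theorem~\ref{le:d1-dnTrW^2}(i) and Theorem~\ref{le:multi-DEW=lambda1<1}(iii).

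\textbf{What the paper's route buys.} The corrected generic argument gives something different. Every DEW, indeed every EW (since adding $P\ge0$ preserves block-positivity), can be shrunk along its own eigenbasis to witnesses of the same class with arbitrarily small negative spectrum. This needs no particular entangled state.

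\textbf{A small slip.} As typed, your decomposition is missing the inner transpose. It should read $W=\bigl(xI-\proj{\ghz_{d_1}}^{\G}\bigr)^{\G}$, i.e.\ $X_2=W^{\G}\ge0$, with $\G$ a single-party transpose.
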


\begin{proof}
The minimum eigenvalue of a normalized multipartite EW $W$ is negative, so (i) holds. Let the DEW $W=\sum_j \l_j\proj{a_j}$ where $...>\l_k>0>\l_{k+1}>...$. Let $P=\sum_{s\ge k}\proj{a_s}$. If $q\ra0+$ then $W_1=qW+(1-q)P/\tr P$ is a normalized multipartite DEW, whose minimum eigenvalue is close to zero. We have proven claim (ii).  
\end{proof}

The following result partially extends Lemma \ref{le:lambdaIN[-1/2,1)} (v).

\begin{theorem}
\label{le:multi-DEW=lambda1<1}
We consider the eigenvalues of a normalized multipartite DEW. 

(i) The maximum eigenvalue is strictly upper bounded by one.

(ii) The supremum of maximum eigenvalue is also one. 

(iii) The infimum of maximum eigenvalue is $\frac{1}{D-1}$ where $D=d_1d_2 \cdots d_n$, and it is not attainable. 

(iv) The infimum of minimal eigenvalue is $-\frac{1}{2}$, and it is attainable.

(v) $\frac{1}{D-1}$ is a lower bound for $\lambda_1$ and $\lambda_1 > \frac{1}{D-1}$ when $\l_1$ is the maximum eigenvalue of any EW.
\end{theorem}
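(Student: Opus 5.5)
Taking the items in order, the tools are Theorem \ref{le:d1-dnTrW^2} (iii), the families of $W_1=xI_D-\proj{\ps}$ used in Theorem \ref{le:d1-dnTrW^2} (i), Lemma \ref{le:existence_0f_10000} adapted to the multipartite setting, and Lemma \ref{le:fullClassiEigen} for the GHZ-type witnesses.

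For (i), let $W$ be a normalized multipartite DEW. By Theorem \ref{le:d1-dnTrW^2} (iii), $\tr(W^2)=\sum_i \l_i^2\le 1$. An EW always has $\l_D<0$, so $\l_1^2<\sum_i\l_i^2\le1$, which gives $\l_1<1$ strictly.

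For (ii), I would build a DEW whose spectrum is close to $(1,0,\dots,0)$, as in Lemma \ref{le:existence_0f_10000}. Take a normalized DEW $W_0$, for instance $\proj{\Psi_{d_1}}^\G$. Let $\ket{a}$ be a unit eigenvector of $W_0$ for a positive eigenvalue; the simplest choice is a product vector in the positive eigenspace of $\proj{\Psi_{d_1}}^\G$, e.g. $\ket{00\cdots0}$. Set $W_y=(W_0+y\proj{a})/(1+y)$. Because $\proj{a}\ge0$, $W_y$ stays decomposable and block-positive. It stays non-PSD because $\ket{a}$ is orthogonal to the negative eigenvectors of $W_0$, so $\l_D(W_y)=\l_D(W_0)/(1+y)<0$. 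As $y\to\infty$, $\l_1(W_y)\to1$; combined with (i), the supremum is $1$ and is not attained.

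For (iii), the lower bound comes from (v), which I would prove first. For any normalized EW, $\l_D<0$ gives $\sum_{i<D}\l_i=1-\l_D>1$, so $\l_1\ge \frac{1-\l_D}{D-1}>\frac{1}{D-1}$. This settles (v) and the non-attainability in (iii). For the infimum, use $\widetilde{W_1}=W_1/(xD-1)$ from Theorem \ref{le:d1-dnTrW^2} (i), which is a DEW for $x\in[x_0,1)$ by Lemma \ref{le:W=xI_mn-|ps><ps|} (ii). Its maximum eigenvalue is $x/(xD-1)$, which tends to $\frac1{D-1}$ as $x\to1^-$.

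For (iv), the bound $\l_D\ge-\frac12$ is the main obstacle, since Lemma \ref{le:EW=general} (ii) is stated only for bipartite EWs. The first route I would try is the bipartite reduction via Lemma \ref{le:n-partite EW split=EW}. This fails, because merging systems does not preserve the EW property, as the $W$-state example shows. So I would instead redo the bipartite argument directly for full product vectors:
\begin{itemize}
\item Let $\ket{v}$ be a unit eigenvector for $\l_D$.
\item Pick a bipartition in which $\ket{v}$ is entangled, and take its Schmidt decomposition $\ket{v}=\sum_i s_i\ket{e_i}\ket{f_i}$.
\item Restrict $W$ to the $2\times2$ block spanned by the top two Schmidt terms. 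Applying Lemma \ref{le:EW=general} (ii) there requires block-positivity of the restriction only with respect to that bipartition, which fails for a genuinely multipartite EW.
\end{itemize}
If the restriction route cannot be repaired, for DEWs I would use the decomposition $W=X_1+\sum_j X_j^{\G_{S_j}}$ directly. The negative part of each $X_j^{\G_{S_j}}$ is at most half its trace: a partial transpose of a pure state has $\l_{\min}\ge-\frac12$ by the bipartite computation across $S_j$ versus its complement, and this extends to mixed $X_j$ by convexity. Summing then gives $\l_D(W)\ge-\frac12\tr W=-\frac12$.

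Attainability in (iv) follows from Lemma \ref{le:fullClassiEigen}. Take $\ket{v}=\frac1{\sqrt2}(\ket{0\cdots0}+\ket{1\cdots1})$, which has $\a_1=\a_2=\frac1{\sqrt2}$. Then $\proj{v}^\G$ is a normalized DEW with minimum eigenvalue $-\a_1\a_2=-\frac12$.
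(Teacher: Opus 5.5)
Your proof is correct. Items (ii), (iv) and (v) essentially match the paper's arguments. In (ii) you add weight on a positive eigenvector of a DEW and let it dominate. In (iv) you bound $\l_{\min}(X_j^{\G_{S_j}})\ge-\frac12\tr X_j$ using the pure-state bound $-s_1s_2\ge-\frac12$ and Weyl's inequality, and you attain $-\frac12$ with the GHZ partial transpose. In (v) you use $\sum_{i<D}\l_i=1-\l_D>1$.

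You take a genuinely different route in (i) and in the extremal family for (iii).

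For (i), the paper bounds $\l_1(W)\le\l_1(X_1)+\sum_j\l_1(X_j^{\G_{S_j}})\le\tr X_1+\sum_j\tr X_j=1$. This tacitly uses $\l_1(X^{\G})\le\tr X$ for $X\ge0$. The paper then needs an equality-case analysis (common principal eigenvector, every term rank one, hence $W\ge0$) to make the bound strict. You go through $\tr(W^2)\le1$ instead, so strictness follows directly from $\l_D^2>0$. This avoids the equality analysis and even gives the quantitative bound $\l_1\le\sqrt{1-\l_D^2}$.

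The price is that you depend on Theorem~\ref{le:d1-dnTrW^2} (iii). Its written proof asserts $\tr[(R^{\G_B}/\tr R^{\G_B})^2]=1$, which holds only for rank-one $R$. The correct and sufficient input is $\tr[(X^{\G})^2]=\tr(X^2)\le(\tr X)^2$ for $X\ge0$. The Hilbert--Schmidt triangle inequality then gives $\|W\|_2\le\tr X_1+\sum_j\tr X_j=1$, and you should invoke the result in that form.

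For the infimum in (iii), you reuse the normalized witnesses $(xI_D-\proj{\ps})/(xD-1)$, whose top eigenvalue is $x/(xD-1)\to\frac{1}{D-1}$. The paper instead uses $a(I_D-\proj{\alpha})/(D-1)+(1-a)\proj{GHZ_n}^{\G_{A_1}}$. This single family approaches $\l_1=\frac{1}{D-1}$ as $a\to1$ and attains $\l_D=-\frac12$ at $a=0$, so it serves (iii) and (iv) at once. Your family needs a separate GHZ example for (iv), which you do supply.
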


\begin{proof}
(i) We consider the DEW $W = P+ Q^{\G_A} + R^{\G_B} + S^{\G_C}$ where $\tr (W)=1$ and $P, Q, R, S \geq 0$, and $\G_X$ denotes the partial transpose w.r.t. system $X$. Then 
\begin{align}
\label{eq:l1(W)<=1}
    \lambda_1(W)=&\sup_{\|x\|=1} x^* W x \le\lambda_1(P) + \lambda_1(Q^{\G_A}) + \lambda_1(R^{\G_B}) + \lambda_1(S^{\G_C})\\
    =&\tr(P) \cdot \lambda_1\!\left(\frac{P}{\tr(P)}\right)+\tr(Q^{\G_A}) \cdot \lambda_1\!\left(\frac{Q^{\G_A}}{\tr(Q^{\G_A})}\right)\\
    +&\tr(R^{\G_B}) \cdot \lambda_1\!\left(\frac{R^{\G_B}}{\tr(R^{\G_B})}\right)+\tr(S^{\G_C}) \cdot \lambda_1\!\left(\frac{S^{\G_C}}{\tr(S^{\G_C})}\right) \\
    \le& \tr(P)+\tr(Q^{\G_A})+\tr(R^{\G_B})+\tr(S^{\G_C})=\tr(W)\le1.
\end{align}
Therefore, $\lambda_1(W)=1$ if and only if the equalities hold in all of the above inequalities. However, this implies that the principal eigenvector corresponding to the largest eigenvalue of $P$, $Q^{\Gamma_A}$, $R^{\Gamma_B}$, and $S^{\Gamma_C}$ is the same unit vector x, and $\lambda_1\!\left(\frac{P}{\tr(P)}\right)=\lambda_1\!\left(\frac{Q^{\G_A}}{\tr(Q^{\G_A})}\right)=\lambda_1\!\left(\frac{R^{\G_B}}{\tr(R^{\G_B})}\right)=\lambda_1\!\left(\frac{S^{\G_C}}{\tr(S^{\G_C})}\right)=1$, which also implies $P$, $Q^{\Gamma_A}$, $R^{\Gamma_B}$, and $S^{\Gamma_C}$ are all rank-one positive semi-definite matrices. Then it leads to a contradiction since $W$ must be an EW. So $\l_1(W)$ is strictly upper bounded by one.

We extend the above result to $n$-partite system. We assume 
\begin{eqnarray}
    W = P_0 + \sum_{k\ge1} P_k^{\Gamma_{S_k}},
\end{eqnarray}
where $S_k$'s are disjoint subsets of the whole set $A_1,...,A_n$, and each $P_k$ is a positive semi-definite matrix. Similar to \eqref{eq:l1(W)<=1}, we have 
\begin{align}
    \lambda_1(W) &= \sup_{\|x\|=1} x^* W x
    \le \lambda_1(P_0) + \sum_{k=1}^n \lambda_1(P_k^{\Gamma_{A_k}})\\ 
    &= \tr(P_0) \lambda_1\!\left(\frac{P_0}{\mathrm{Tr}P_0}\right) + \sum_{k=1}^n \tr(P_k^{\Gamma_{A_k}}) \lambda_1\!\left(\frac{P_k^{\Gamma_{A_k}}}{\mathrm{Tr}P_k^{\Gamma_{A_k}}}\right)\\
    &\le \tr(P_0) + \sum_{k=1}^n \tr(P_k^{\Gamma_{A_k}}) =\tr(W)\le 1.
\end{align}
Thus, $\lambda_1(W)=1$ if and only if the equalities hold in all of the above inequalities. However, this implies that $P_0$ and all $P_k^{\Gamma_{A_k}}$ are rank-one positive semi-definite matrices. It implies that the principal eigenvector corresponding to the largest eigenvalue of $P_0$ and $P_k^{\Gamma_{A_k}}$ for all $k=1,2,\cdots n$ is the same unit vector x, and $\lambda_1\!\left(\frac{P_0}{\tr(P_0)}\right)=\lambda_1\!\left(\frac{P_k^{\Gamma_{A_k}}}{\tr(P_k^{\Gamma_{A_k}})}\right)=1$, which also implies $P_0$, $P_k^{\Gamma_{A_k}}$ for all $k=1,2,\cdots n$ are all rank-one positive semi-definite matrices. Then it leads to a contradiction since $W$ must be an EW. So $\l_1(W)$ is strictly upper bounded by one.

(ii) Let $W$ be an $n$-partite normalized DEW, such that $\ket{a}$ is the eigenvalue of $\l_1(W)$. Then $W_1=pW+(1-p)\proj{a}$ is also an $n$-partite normalized EW, with $p\in(0,1)$. If $p\ra0$ then then $\l_1(W_1)$ approaches one. We have proven the claim.

(iii) First, let the maximum eigenvalue of a normalized multipartite DEW W be $\lambda_1$. We claim that $\frac{1}{D-1}$ is a lower bound for $\lambda_1$ and $\lambda_1 > \frac{1}{D-1}$. Indeed, suppose $\lambda_1 \le \frac{1}{D-1}$, then 
\begin{eqnarray}
    \sum_{i=1}^{D-1}\l_i \le (D-1) \cdot \l_1 = 1,
\end{eqnarray}
which implies that 
\begin{eqnarray}
    \l_D = \tr(W)-\sum_{i=1}^{D-1}\l_i \geq1-1=0. 
\end{eqnarray}
This contradicts the definition of EW. 

Secondly, we prove that $\lambda_1$ can be made arbitrarily close to $\frac{1}{D-1}$. We construct a DEW
\begin{eqnarray}
\label{eq:Wab}
    W = a\cdot \frac{I_D - |\alpha\rangle\langle\alpha|}{D-1}
+\,(1-a)\cdot \big|GHZ_n\big\rangle\langle GHZ_n|^{\,\Gamma_{A_1}}
\end{eqnarray}
where $a \in (0,1)$, $D=d_1d_2 \cdots d_n $ and $\alpha = |\alpha\rangle=\frac{1}{\sqrt{2}}|0\rangle_{A_1}\otimes|1\rangle^{\otimes(n-1)}-\frac{1}{\sqrt{2}}|1\rangle_{A_1}\otimes|0\rangle^{\otimes(n-1)}$. One can see that
\begin{align}
\big(|GHZ_n\rangle\langle GHZ_n|\big)^{\Gamma_{A_1}}
=\frac12\Big[
&|0\rangle_{A_1}\langle0|_{A_1}\otimes|0\rangle^{\otimes(n-1)}\langle0|^{\otimes(n-1)} \\
+&|1\rangle_{A_1}\langle0|_{A_1}\otimes|0\rangle^{\otimes(n-1)}\langle1|^{\otimes(n-1)} \\
+&|0\rangle_{A_1}\langle1|_{A_1}\otimes|1\rangle^{\otimes(n-1)}\langle0|^{\otimes(n-1)} \\
+&|1\rangle_{A_1}\langle1|_{A_1}\otimes|1\rangle^{\otimes(n-1)}\langle1|^{\otimes(n-1)}
\Big].
\end{align}
and it has four nonzero eigenvalues
\begin{eqnarray}
    \frac{1}{2}, \frac{1}{2}, \frac{1}{2}, -\frac{1}{2}. 
\end{eqnarray}
Besides, $|\alpha\rangle\langle\alpha|$ is a rank-one orthogonal projection. Therefore, it has eigenvalue $1$ with multiplicity one, with eigenvector $\alpha$, and eigenvalue $0$ with multiplicity $(D-1)$, whose eigenspace consists of all vectors orthogonal to $\alpha$. Thus we obtain the eigenvalue structure of $I_D - |\alpha\rangle\langle\alpha|$, and hence by simultaneous unitary diagonalization the largest eigenvalue of $W$ is 
\begin{eqnarray}
    \l_1=\frac{a}{D-1}+\frac{1-a}{2},
\end{eqnarray}
and the smallest is 
\begin{eqnarray}
    \l_D=-\frac{1-a}{2}.
\end{eqnarray}
This implies $W$ in \eqref{eq:Wab}, 
\begin{eqnarray}
    \l_1 \rightarrow \frac{1}{D-1} \ \text{as} \ a \rightarrow 1, \ \l_D \rightarrow -\frac{1}{2} \ \text{as} \ a \rightarrow 0. 
\end{eqnarray}
Therefore, the infimum of maximum eigenvalue is $\frac{1}{D-1}$ where $D=d_1d_2 \cdots d_n$, and it is not attainable.  

(iv) Using the construction in equation \eqref{eq:Wab} from the previous proof, we set $a = 0$ and obtain a DEW of the form given in Lemma \ref{le:GHZ_n}. This proves that the minimal eigenvalue $-\frac{1}{2}$ is attainable. Next, we claim that the minimal eigenvalue $\l_D$ satisfies
\begin{eqnarray}
    \l_D \geq -\frac{1}{2}. 
\end{eqnarray}
Indeed, for an arbitrary DEW $H$ of the form in Definition \ref{def:DEWn}, i.e., 
\begin{eqnarray}
    H = X_1 + \sum_{j>1} X_j^{\Gamma_{S_j}}
\end{eqnarray}
where $X_j \geq 0, j=1,2, \cdots$. Thus, it suffices to prove is that any mixed state $\sigma=\sum_i p_i |\psi_i\rangle\langle \psi_i|$ satisfies 
\begin{eqnarray}
    \lambda_D\big(\sigma^{\Gamma}\big) \geq -\frac{1}{2}
\end{eqnarray}
where $\sigma^{\Gamma}$ has the form of $\sum_i p_i |\psi_i\rangle\langle \psi_i|^{\Gamma_{A_i}}$ with $p_i \geq 0$ and $\sum_ip_i=1$. But from Lemma 4 in \cite{shen2020inertias}, we have known that for a partial transpose of a pure state $\r^{\G}$, we have $\l_{\text{min}}(\r^{\G}) \geq -\frac{1}{2}$. This allows us to prove that 
\begin{eqnarray}
\lambda_{\min}\big(\sigma^{\Gamma}\big)
\ge \sum_i p_i \left(-\frac{1}{2}\right)
= -\frac{1}{2}.
\end{eqnarray}
Therefore, for a normalized multipartite DEW, the infimum of minimal eigenvalue is $-\frac{1}{2}$, and it is attainable. 

(v) This result can be derived from the foregoing discussion in the beginning of (iii). 
\end{proof}

\subsection{Multipartite NDEWs}

\begin{theorem}
\label{le:TrW^2_NDEW}
    Let $W$ be a normalized $d_1 \times d_2 \times \cdots \times d_n$ EW. If $W$ is an NDEW, then the infimum of $\tr (W^2)$ is not attainable.
\end{theorem}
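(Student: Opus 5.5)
Suppose, for contradiction, that some normalized NDEW $W_0$ attains $\tr(W^2)$ over the set $\cN$ of normalized $d_1\times\cdots\times d_n$ NDEWs, so $\tr(W_0^2)\le\tr(W^2)$ for all $W\in\cN$. The plan is to perturb $W_0$ inside $\cN$ and strictly decrease $\tr(W^2)$. This follows the bipartite argument behind Lemma \ref{le:NDEW} (ii), with Lemma \ref{le:ConvexityOfTr(W^2)} as the analytic engine.

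The natural perturbation mixes $W_0$ with the maximally mixed operator. For $t\in(0,1)$ set
\begin{eqnarray}
W_t=(1-t)W_0+t\,\frac{I_D}{D}.
\end{eqnarray}
Then $\tr W_t=1$, and $W_t$ is block-positive as a convex combination of a block-positive operator and a positive one. Since $\l_D(W_0)<0$ and $D\ge 4$, we have $\l_D(W_t)=(1-t)\l_D(W_0)+t/D<0$ for small $t$, so $W_t$ is an EW. A direct computation gives
\begin{eqnarray}
\tr(W_t^2)=(1-t)^2\tr(W_0^2)+\frac{2t(1-t)}{D}+\frac{t^2}{D}.
\end{eqnarray}
Its derivative at $t=0$ equals $-2\tr(W_0^2)+\frac{2}{D}$. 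This is negative because $\tr(W_0^2)>\frac{1}{D-1}>\frac1D$ by Theorem \ref{le:multiEWTRW^2}. Equivalently, strict convexity (Lemma \ref{le:ConvexityOfTr(W^2)}) together with $\tr((I_D/D)^2)<\tr(W_0^2)$ gives $\tr(W_t^2)<\tr(W_0^2)$ for all small $t>0$.

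It remains to show $W_t$ is still nondecomposable, and I expect this to be the main obstacle. Decomposability of $W_t$ gives no direct decomposition of $W_0$, because we would have to subtract $tI_D/D$ from the positive part $X_1$. To avoid this I would use the dual characterization, the multipartite analogue of Lemma \ref{lem:dew_eq}: an EW is decomposable iff $\tr(W\sigma)\ge 0$ for every state $\sigma$ that is PPT across all admissible cuts $S_j$. One direction is immediate; the other is a separating-hyperplane argument on the closed convex cone generated by the $X_j^{\G_{S_j}}$, exactly as in the proof of Lemma \ref{lem:dew_eq}. Since $W_0$ is an NDEW, there is such a PPT state $\sigma$ with $\tr(W_0\sigma)=-\epsilon<0$. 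Then $\tr(W_t\sigma)=-(1-t)\epsilon+t/D<0$ whenever $t<\epsilon/(\epsilon+1/D)$. So $W_t$ detects a PPT state and is an NDEW.

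Combining the three steps, $W_t\in\cN$ for all sufficiently small $t>0$ and $\tr(W_t^2)<\tr(W_0^2)$. This contradicts minimality, so the infimum of $\tr(W^2)$ over normalized NDEWs is not attainable. The only ingredient beyond results already stated is the multipartite dual characterization of DEWs. If that is unavailable, I would instead argue that the cone of DEWs is closed, which gives an open neighborhood of $W_0$ among EWs containing no DEW, and $W_t$ lies in it for small $t$.
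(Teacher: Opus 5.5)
Your proposal is correct and follows essentially the paper's argument: the paper perturbs the putative minimizer to $(W_0+\epsilon I_D)/(1+\epsilon D)$, which is exactly your $W_t$ with $t=\epsilon D/(1+\epsilon D)$, and uses $\tr(W_0^2)>\frac{1}{D-1}>\frac{1}{D}$ to get a strict decrease of $\tr(W^2)$. Your checks that $W_t$ is still an EW and still nondecomposable (via closedness of the decomposable cone, or the dual PPT characterization) fill in a step that the paper's proof leaves implicit.
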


\begin{proof}
For normalized $d_1 \times d_2 \times \cdots \times d_n$ NDEWs, let 
\begin{eqnarray}
    \inf\{\tr (W^2)|W \ is \ an \ NDEW\}=k. 
\end{eqnarray}
Suppose that $k$ is attained by an NDEW $W_0$, then we consider 
\begin{eqnarray}
    W'=\frac{1}{1+\epsilon D}(W_0+\epsilon I_D)
\end{eqnarray}
where $D=d_1d_2 \cdots d ,\ \epsilon>0$. Using the properties of the trace and the normalization property of $W_0$, we obtain
\begin{eqnarray}
    \tr (W'^2) = \frac{1}{(1+\epsilon D)^2}[\tr (W_0^2)+2\epsilon \tr(W_0)+\epsilon^2D] \sim \frac{k+2\epsilon}{1+2\epsilon D}
\end{eqnarray}
where $\epsilon$ is sufficiently small. However, from the result of Theorem \ref{le:multiEWTRW^2}, we have $k>\frac{1}{D-1}>\frac{1}{D}$, thus 
\begin{eqnarray}
    \frac{k+2\epsilon}{1+2\epsilon D}<k,
\end{eqnarray}
which contradicts the definition of $k$. Therefore, if $W$ is an NDEW then the infimum of $\tr (W^2)$ is not attainable.
\end{proof}



\begin{remark}
\label{rem:AS}
We consider the state 
\begin{eqnarray}
\label{def:zeta2'}
    \zeta_2' := \frac{1}{D+n} \diag \{1+n, 1, \cdots , 1\} \in \bbM_D,
\end{eqnarray}
where $D=d_1...d_n$.
Then we have 
\begin{eqnarray}
\label{def:zeta2'^2}
    \zeta_2'^2 := \frac{1}{(D+n)^2} \diag \{(1+n)^2, 1, \cdots , 1\} \in \bbM_{D}. 
\end{eqnarray}
Using Lemma \ref{le:multi-AS}, we see that if $d_1=...=d_n$ and
\begin{eqnarray}
\label{ineq:d,n}
    \tr(\zeta_2'^2)=\frac{1}{(D+n)^2}[(1+n)^2+D-1] \le \frac{1}{D-2^{2-n}},
\end{eqnarray}
then \eqref{def:zeta2'} is a $d_1 \times d_1 \times \cdots \times d_1$ full-rank absolutely separable state. 

Nevertheless, we claim that the above conditions including \eqref{ineq:d,n} do not hold. We use an analytical approach, establish the conditions, and input them into Mathematica to obtain the results. It can be determined that within the range of the pair of positive integers $(d, n) \in [3,10000] \times [3,10000]$, there is no solutions that satisfies condition $d_1=...=d_n=d$ and \eqref{ineq:d,n}. 

More generally for integers $(d, n) \in [3,\infty) \times [3,\infty)$, $d^n - 2^{2-n}$ is monotonically increasing and remains greater than 0. In order to find a solution satisfying the above conditions, we transform it into solving the following binary inequality
\begin{eqnarray}
    \frac{(d^n - 2^{2-n})(d^n + (1+n)^2 - 1)}{(d^n + n)^2} \le 1, \ i.e.
\end{eqnarray}
\begin{eqnarray}
    d^n \le \frac{n^2 + 2^{2-n}(n^2 + 2n)}{n^2 - 2^{2-n}}. 
\end{eqnarray}
Thus, the solution must satisfy
\begin{eqnarray}
    2 \le d \le \left(\frac{n^2 + 2^{2-n}(n^2 + 2n)}{n^2 - 2^{2-n}}\right)^ {\frac{1}{n}}.
\end{eqnarray}
However, we have $\left(\frac{n^2 + 2^{2-n}(n^2 + 2n)}{n^2 - 2^{2-n}}\right)^ {\frac{1}{n}} \rightarrow 1$ as $n \rightarrow \infty$. Therefore, within the range of the pair of positive integers $(d, n) \in [3,\infty) \times [3,\infty)$, there is no solutions that satisfies condition $d_1=...=d_n=d$ and \eqref{ineq:d,n}. 
\qed
\end{remark}

This method was originally intended to prove absolute separability, and subsequently to establish other related results concerning EWs. However, as can be seen from Remark \ref{rem:AS}, it is better to adopt a different approach to obtain further information about EWs and NDEWs.


\section{Conclusion}
\label{sec:con}

We have shown separability and absolute-separability criterion of multipartite states, together with properties of block-positive (BP) matrices, optimality and system split and eigenvalues of multiparite EWs. Also, we have elucidated the interrelations among the eigenvalues of $2 \times n$ EWs. Then we have discussed other inequality governing eigenvalues, and further investigated the tightness of these inequalities with examples. Through an in-depth investigation of two-qubit states, we have attempted to generalize some properties of two-qubit states to arbitrary $n$-qubit states. All the results are comparable with the corresponding results for two-qubit systems, from which clear regularities can be observed. Some problems arising from this paper are as follows. First, we do not know the interval for the minimal eigenvalue of multipartite NDEW yet. Next, what is the relation between the optimality and eigenvalues of multipartite EWs? 
Besides, as far as bipartite EWs are concerned, one may study the inertia of $2\times4$ NDEW by comparing those of $2\times4$ DEW. Further, characterizing the eigenvalue sets of $2\times3$ EWs based on Johnston's paper is also an interesting problem.



\section*{Acknowledgments}
\label{sec:ack}	

Authors were supported by the NNSF of China (Grant No. 12471427), and the Fundamental Research Funds for the Central Universities (Grant Nos. ZG216S2110).

\appendix
\section{}
\label{app}

\subsection{The Proof of Lemma \ref{le:BPmn}}
\label{app:leBPmn}
\begin{proof}
Let $U$ be a unitary matrix such that $W_B$ satisfies 
\begin{eqnarray}
    U W_B U^* = \sum_{i}\ U W_{i,i} U^* = \sum_{i}\ x_i |e_i \rangle \langle e_i| = \left[\begin{matrix}x_1&&&&&\\&\ddots&&&&\\&&x_s&&&\\&&&0&&\\&&&&\ddots&\\&&&&&0\end{matrix}\right]_{n\times n}, 
\end{eqnarray}
where $s$ is the rank of $W_B$. Since each $W_{i,i}\geq 0$, $U W_{i,i} U^*$ has the form of $\left[\begin{matrix}W_{i,i}^\prime&O\\O&O\end{matrix}\right]$ where $W_{i,i}^\prime$ are all $s \times s$ matrices. Thus, 
\begin{eqnarray}
    (I_m \otimes U) W (I_m \otimes U^*) = \left[\begin{matrix}\left[\begin{matrix}W_{1,1}^\prime&O\\O&O\end{matrix}\right]&\left[\begin{matrix}W_{1,2}^\prime&O\\O&O\end{matrix}\right]&\cdots&\left[\begin{matrix}W_{1,m}^\prime&O\\O&O\end{matrix}\right]\\\left[\begin{matrix}W_{2,1}^\prime&O\\O&O\end{matrix}\right]&\left[\begin{matrix}W_{2,2}^\prime&O\\O&O\end{matrix}\right]&\cdots&\left[\begin{matrix}W_{2,m}^\prime&O\\O&O\end{matrix}\right]\\\vdots&\vdots&\ddots&\vdots\\\left[\begin{matrix}W_{m,1}^\prime&O\\O&O\end{matrix}\right]&\left[\begin{matrix}W_{m,2}^\prime&O\\O&O\end{matrix}\right]&\cdots&\left[\begin{matrix}W_{m,m}^\prime&O\\O&O\end{matrix}\right]\end{matrix}\right],
\end{eqnarray}
and we denote it by $\widetilde{W}$. Therefore, $\mathcal{R}(W) = (I_m \otimes U^*)\mathcal{R}(\widetilde{W}) \subset \mathbb{C}^m \otimes U^*(\mathbb{C}^s) = \mathbb{C}^m \otimes \mathcal{R}(W_B)$. Transforming back yields $\mathcal{R}(W) \subseteq \mathbb{C}^m \otimes \mathcal{R}(W_B)$. Combined with the fact that the row-support of W in the first factor lies in $\mathcal{R}(W_A)$, we obtain $\mathcal{R}(W) \subseteq \mathcal{R}(W_A) \otimes \mathcal{R}(W_B)$. 
\end{proof}

\subsection{The Proof of Lemma \ref{le:ConvexityOfTr(W^2)}}
\label{app:leConvexityOfTr(W^2)}
\begin{proof}
Let $A,B$ be two Hermitian matrices, then we claim that
\begin{eqnarray}
\label{eq:2AB<=A2+B2}
    2\tr(AB) \le \tr(A^2+B^2).
\end{eqnarray}
Indeed, 
\begin{eqnarray}
    \tr[(A-B)^2]=\tr(A^2-AB-BA+B^2)=\tr(A^2+B^2)-2\tr(AB).
\end{eqnarray}
Besides, since $A,B$ are two Hermitian matrices, $\tr[(A-B)^2]=\tr[(A-B)(A-B)^{\dagger}]\geq0$. Thus, $\tr(A^2+B^2)-2\tr(AB)\geq0$ and then we obtain \eqref{eq:2AB<=A2+B2}. Therefore for an arbitrary $\l \in [0,1]$, we have
\begin{align}
    \tr[(\l A + (1-\l)B)^2]=&\tr[\l ^2 A^2 + (1-\l)^2B^2]+2\l (1-\l)\tr(AB)\\
    \le & \tr[\l ^2 A^2 + (1-\l)^2B^2]+\l (1-\l)\tr(A^2+B^2) \label{ineq:convex}\\
    =&\tr[\l A^2+(1-\l)B^2]=\l \tr(A^2)+(1-\l)\tr(B^2).
\end{align}
So the proof is complete. 

In fact, this is a strictly convex function. If the two Hermitian matrices satisfy $A \neq B$, then we must have
\begin{eqnarray}
    \tr[(A-B)^2]>0, 
\end{eqnarray}
which implies that
\begin{eqnarray}
    \tr(A^2+B^2)>2\tr(AB),
\end{eqnarray}
and thus, when $\l < 1$, we replace the inequality sign in \eqref{ineq:convex} with a strict equality.
\end{proof}

\subsection{The Remark of Lemma \ref{lem:Existence}}
\label{app:lemExistence}

\begin{remark}
\label{rem:direction}
(i) Suppose that $\vec{x} \otimes \vec{y} \in \mathbb{C}^m \otimes \mathbb{C}^n$ where $\vec{x} \otimes \vec{y}$ is a nonzero product vector. If $\lambda \neq 0$, then $\vec{x} \otimes \vec{y} = \lambda\vec{x} \otimes \frac{1}{\lambda}\vec{y}$, which implies the product vector does not depend on scaling $\vec{x}, \ \vec{y}$ individually. Instead, it depends on their directions, i.e., rays of $\vec{x} \in \mathbb{C}^m$, $\vec{y} \in \mathbb{C}^n$. 

(ii) The set of all lines through the origin in $\mathbb{C}^k$, each of which has dimension one, is called the projective space $\mathbb{P}^{k-1}$.

(iii) A point $[\vec{v}] \in \mathbb{P}^{k-1} $ stands for  $\{ \lambda \vec{v} \mid \lambda \in \mathbb{C} \} $. Thus, for $\vec{x} \otimes \vec{y} \in \mathbb{C}^m \otimes \mathbb{C}^n$ where $\vec{x} \otimes \vec{y}$ is a nonzero product vector, its equivalence class $[\vec{x} \otimes \vec{y}]$ belongs to $ \mathbb{P}^{mn-1}$, and $[\vec{x} \otimes \vec{y}]$ only depends on $[\vec{x}] \in \mathbb{P}^{m-1}$, $[\vec{y}] \in \mathbb{P}^{n-1}$. 

(iv) $\sigma$ is called a Segre embedding, if
    \begin{align}
        \sigma : \mathbb{P}^{m-1} \times \mathbb{P}^{n-1} &\rightarrow\mathbb{P}^{mn-1}, \\
        ([\vec{x}], [\vec{y}]) &\mapsto [\vec{x} \otimes \vec{y}].
    \end{align}
    \qed
\end{remark}

\subsection{The Proof and The Example of Lemma \ref{lem:dew_eq}}
\label{app:lemdew_eq}
\begin{proof}
To begin with, we consider two sets
\begin{eqnarray}
    \mathcal{C} = \{ X + Y^{\Gamma} \mid X \ge 0,\ Y \ge 0 \} \subseteq \mathbb{M}_m \otimes \mathbb{M}_n,
\end{eqnarray}
and
\begin{eqnarray}
    \mathcal{P} = \{ \sigma \mid \sigma \ge 0,\ \sigma^{\Gamma} \ge 0 \} \subseteq \mathbb{M}_m \otimes \mathbb{M}_n. 
\end{eqnarray}
Since $\mathcal{C}$ is a closed convex cone, the bipolar theorem can be applied. We denote the dual cone of a cone $K$ by $K^*$, then $\mathcal{C} = \mathcal{C}^{**}$. To prove lemma \ref{lem:dew_eq}, it suffices to show that $\mathcal{C} = \mathcal{P}^*$. But $\mathcal{C} = \mathcal{C}^{**}$ tells us that this is equivalent to prove $\mathcal{C}^{**} = \mathcal{P}^*$, thus what we need to show is only $\mathcal{C}^* = \mathcal{P}$. 

First, we prove that $\mathcal{P} \subseteq \mathcal{C}^*$. Suppose $H \in \mathbb{M}_m \otimes \mathbb{M}_n$ is a DEW, i.e., $H = X + Y^{\Gamma}$ where $X, Y$ are positive semi-definite Hermitian matrices in $\mathbb{M}_m \otimes \mathbb{M}_n$. Then $\forall \text{ PPT state } \sigma $, we have
\begin{eqnarray}
    \tr H\sigma = \tr X\sigma + \tr Y^{\Gamma}\sigma.
\end{eqnarray}
Here, $\tr X\sigma \geq 0$ since $X$ is a positive semi-definite Hermitian matrix and $\sigma$ is a PPT state. On the other hand, we have
\begin{align}
    \tr Y^{\Gamma}\sigma &= \tr (\sum_{j,\ k=1}^{m}\left(e_{jk}\otimes\ I_n\right)Y\left(e_{jk}\otimes\ I_n\right)\sigma) \\
    &= \tr (\sum_{j,\ k=1}^{m}Y\left(e_{jk}\otimes\ I_n\right)\sigma \left(e_{jk}\otimes\ I_n\right)) = \tr Y\sigma^{\Gamma} \geq 0,
\end{align}
since $Y$ is a positive semi-definite Hermitian matrix and $\sigma^{\Gamma}$ is also a PPT state. Thus, we obtain that $\tr H\sigma \geq 0$. This implies $\mathcal{P} \subseteq \mathcal{C}^*$.

It remains to prove that $\mathcal{C}^* \subseteq \mathcal{P}$. We consider an arbitrary $W \in \mathcal{C}^*$ and claim that 
\begin{eqnarray}
\label{eq:W1>=0}
    W \geq 0,
\end{eqnarray}
and 
\begin{eqnarray}
\label{eq:W2>=0}
    W^\Gamma \geq 0. 
\end{eqnarray}
Indeed, suppose that \eqref{eq:W1>=0} dose not hold, then there must be a unit vector $\vec{v}$ s.t. $\vec{v}^* W \vec{v}<0$. Let $X=\vec{v}\vec{v}^*, \ Y=O$, then $H=X+Y^\G \in \mathcal{C}$ and $\tr (WH)=\tr (W\vec{v}\vec{v}^*)=\vec{v}^* W \vec{v}<0$, which leads to a contradiction. Similarly, suppose that \eqref{eq:W2>=0} dose not hold, then there must be a unit vector $\vec{v}$ s.t. $\vec{v}^* W^\G \vec{v}<0$. Let $X=O, \ Y=\vec{v}\vec{v}^*$, then $H=X+Y^\G \in \mathcal{C}$ and $\tr (WH)=\tr (WY^\G)=\tr(W^\G
Y)=\vec{v}^* W^\G \vec{v}<0$, which leads to a contradiction. So $\mathcal{C}^* \subseteq \mathcal{P}$.

To conclude, $\mathcal{C}^* = \mathcal{P}$. Therefore, an EW $H \in \mathbb{M}_m \otimes \mathbb{M}_n$ is a DEW if and only if $\tr(H\sigma) \ge 0, \quad \forall \text{ PPT state } \sigma $.
\end{proof}

\begin{example}
Here is a way of constructing a new EW by using a known EW. We consider a normalized DEW $W = X + Y^{\Gamma}$ where $X, Y$ are positive semi-definite Hermitian matrices in $\mathbb{M}_m \otimes \mathbb{M}_n$. Suppose it can be expressed as $\sum_{j=1}^{mn}\lambda_j^\downarrow |\psi_j\rangle \langle\psi_j|$ where $\langle\psi_j|\psi_j\rangle=1$. Then $1=\tr W=\tr X+\tr Y^{\Gamma}=\tr X+\tr Y$ where $X\geq0, \ Y>0$. Then we construct 
\begin{eqnarray}
    W_1 = \frac{s|\psi_1\rangle \langle\psi_1|+W}{s+1} =  \frac{s|\psi_1\rangle \langle\psi_1|+X}{s+1} + \frac{1}{s+1}Y^{\Gamma},
\end{eqnarray}
where $\tr W_1 = 1$, $W_1=W_1^*$, $(\vec{a}\otimes \vec{b})^* W_1 (\vec{a}\otimes \vec{b}) \geq 0, \quad \forall \vec{a}\otimes \vec{b}\in \mathbb{C}^m \otimes \mathbb{C}^n$, and $\langle\psi_{mn}|W_1|\psi_{mn}\rangle = \frac{\lambda_{mn}}{s+1}<0$. Therefore, $W_1$ is also a DEW.
\qed    
\end{example}

\subsection{The Proof of Theorem \ref{lem:finer_P}}
\label{app:lemfiner_P}
\begin{proof}
We define 
\begin{eqnarray}
\label{def:lambda}
    \l =\inf_{\r_1 \in D_{W_1}}\abs{\frac{\tr(W_2\r_1)}{\tr(W_1\r_1)}}.
\end{eqnarray}
Then we claim that $\l \geq1$. This is from case b below, while we shall prove three cases for further argument.

a. If $\tr(W_1\r)=0$, then $\tr(W_2\r)\le0$. Otherwise, when $\tr(W_2\r)>0$, we consider a $\r_1 \in D_{W_1}$ such that 
\begin{eqnarray}
    0 \le \r_1 + x\r \in D_{W_1}, \ \forall x\geq0.
\end{eqnarray}
However, if $x$ is large sufficiently, $\tr[W_2(\r_1 + x\r)]>0$, which imples $\r_1 + x\r \notin D_{W_2}$. This leads to a contradiction since $W_2$ is finer than $W_1$. Therefore, 
\begin{eqnarray}
\label{W1r=0}
    \tr(W_1\r)=0\ \Rightarrow \ \tr(W_2\r)\le0. 
\end{eqnarray}

b. If $\tr(W_1\r)<0$, then $\tr(W_2\r)\le \tr(W_1\r)$ because $\r + \abs{\tr(W_1\r)}I$ satisfies that 
\begin{eqnarray}
    \tr[W_1 (\r + \abs{\tr(W_1\r)}I)]=0
\end{eqnarray}
and then $\tr[W_2 (\r + \abs{\tr(W_1\r)}I)]\le 0$ from the preceding result. Therefore, 
\begin{eqnarray}
\label{W1r<0}
    \tr(W_1\r)<0 \ \Rightarrow \ \tr(W_2\r)\le \tr(W_1\r).
\end{eqnarray}
From case b, it is clear that $\l \geq1$. 

c. If $\tr(W_1\r)>0$, then $\l \tr(W_1\r) \geq \tr(W_2\r)$. This is because for $\r_1 \in D_{W_1}$, $\tr(W_1\r) \r_1 + \abs{\tr(W_1\r_1)}\r$ satisfies that 
\begin{eqnarray}
    \tr[W_1 (\tr(W_1\r) \r_1 + \abs{\tr(W_1\r_1)}\r)]=0
\end{eqnarray}
Then from the preceding result, we have 
\begin{eqnarray}
    \tr[W_2 (\tr(W_1\r) \r_1 + \abs{\tr(W_1\r_1)}\r)]\le0, i. e., 
\end{eqnarray}
\begin{eqnarray}
    \tr(W_1\r)\tr(W_2\r_1) + \abs{\tr(W_1\r_1)}\tr(W_2\r)\le 0,
\end{eqnarray}
which also implies that
\begin{eqnarray}
\label{eq:inf}
    \frac{\tr(W_2\r)}{\tr(W_1\r)} \le \frac{\abs{\tr(W_2\r_1)}}{\abs{\tr(W_1\r_1)}}.
\end{eqnarray}
Taking the infimum of the right hand of \eqref{eq:inf}, we have $\l \tr(W_1\r) \geq \tr(W_2\r)$. 
Therefore, 
\begin{eqnarray}
\label{W1r>0}
    \tr(W_1\r)>0 \ \Rightarrow \ \l \tr(W_1\r) \geq \tr(W_2\r).
\end{eqnarray}
We have proven case c.

Using the definition of $\l$ in \eqref{def:lambda}, we consider two cases (i) and (ii). 

(i) $\l =1$. Then we claim that $W_1=W_2$ where $k=0$. Indeed, for any $\r'=|a_1,\ a_2,\ \cdots a_n\rangle \langle a_1,\ a_2,\ \cdots a_n|$, we have 
\begin{eqnarray}
    \tr(W_1\r') \geq \tr(W_2\r').
\end{eqnarray}
However, $\tr(W_1)=\tr(W_2)=1$, then we obtain that 
\begin{eqnarray}
    \tr[(W_1-W_2)\r'] \geq 0.
\end{eqnarray}
Then for any given $\r \geq 0 $, we construct 
\begin{eqnarray}
    \r''=\r + M\cdot I,
\end{eqnarray}
where $I$ is an order-$d_1d_2 \cdots d_n$ identity matrix. When $M$ is large enough, $\r''$ is separable, and thus 
\begin{eqnarray}
    \tr(W_1 \r'') = \tr(W_2 \r''),
\end{eqnarray}
which implies $W_1 = W_2$. 

(ii) $\l \geq1$. We construct 
\begin{eqnarray}
    k=1-\frac{1}{\l}>0, \ P=\frac{1}{k}W_1-\frac{1-k}{k}W_2=\frac{1}{\l-1}(\l W_1 - W_2)
\end{eqnarray}
such that \eqref{eq:W1W2andP} holds. It suffices to prove that $P\geq 0$, i. e. 
\begin{eqnarray}
\label{ineq:P>=0}
    \tr(P\r) \geq 0, \ \forall\r.
\end{eqnarray}
Then we respectively discuss the three cases a, b and c above, in the following way. 

a. If $\tr(W_1\r)=0$, then $\tr(W_2\r)\le0$ from \eqref{W1r=0}. Then it is clear that \eqref{ineq:P>=0} must hold. 

b. If $\tr(W_1\r)<0$, then $\tr(W_2\r)\le \tr(W_1\r) < 0$ from \eqref{W1r<0}. Thus, 
\begin{align}
    \tr[(\l W_1 - W_2)\r]
    &=\tr(\inf_{\r_1 \in D_{W_1}}\abs{\frac{\tr(W_2\r_1)}{\tr(W_1\r_1)}} \cdot W_1\r) - \tr(W_2 \r)\\
    &\geq \tr(\abs{\frac{\tr(W_2\r)}{\tr(W_1\r)}} \cdot W_1\r) - \tr(W_2 \r)=0. 
\end{align}
Then \eqref{ineq:P>=0} must hold. 

c. If $\tr(W_1\r)>0$, then $\l \tr(W_1\r) \geq \tr(W_2\r)$ from \eqref{W1r>0}. Then \eqref{ineq:P>=0}  holds. 

To conclude, if $W_2$ is finer than $W_1$, then there is any positive operator $P$ in \eqref{ineq:P>=0} and an $k \in [0, 1)$ such that \eqref{eq:W1W2andP} holds. So the proof is complete. 
\end{proof}

\subsection{The Proof of Lemma \ref{le:n-partite EW split=EW}}
\label{app:len-partite EW split=EW}
\begin{proof}
(i) First, let the new family of disjoint subsystems be 
\begin{equation}
        S' = \{A_{1,1},A_{1,2}, \cdots A_{1,m_1}, \cdots  A_{n,m_n}\},
\end{equation}
and an arbitrary product vector of this family be 
\begin{equation}
    |\phi'\rangle = \bigotimes_{j=1}^n \left( \bigotimes_{k=1}^{m_j} |\phi_{j,k}\rangle \right).
\end{equation}
We can combine the terms within the brackets and define
\begin{eqnarray}
    |\phi_j\rangle = \bigotimes_{k=1}^{m_j} |\phi_{j,k}\rangle, 
\end{eqnarray}
then $|\phi_j\rangle$ is a valid state of the original system $A_j$. Therefore, $|\phi'\rangle$ can be rewritten as a pure product state under the original systems $A_1,\cdots.,A_n$ as follows,
\begin{eqnarray}
    |\phi'\rangle = |\phi_1\rangle \otimes |\phi_2\rangle \otimes \cdots \otimes |\phi_n\rangle.
\end{eqnarray}
By the definition of EW, it is non-negative on all pure product states of the original systems. Thus,
\begin{eqnarray}
    \langle \phi' | W' | \phi' \rangle = \langle \phi' | W | \phi' \rangle \geq 0.
\end{eqnarray}
Since $W$ is an EW, there exists an entangled state $|\psi \rangle$ in the original system such that $\langle \psi | W |\psi  \rangle < 0$. This state $|\psi \rangle$ also exists in the subdivided Hilbert space due to space isomorphism. Therefore, $W'$ still possesses a negative expectation value and is not a positive semi-definite operator. To conclude, the proof of (i) is complete. 

(ii) Since $W$ is a DEW, it can be decomposed as
\begin{eqnarray}
    W = X_1 + \sum_{j > 1} X_j^{\Gamma_{S_j}}
\end{eqnarray}
where $X_i \geq 0$ and $S_j$ satisfies $|S_j| \leq \lfloor \frac{n}{2} \rfloor$ and has no complement. When $A_j$ is split into $A_{j,1}, \cdots, A_{j,m_j}$, the set of original subsystems $S \subset \{A_1, \cdots, A_n\}$ corresponds to the new set of subsystems $S' = \bigcup_{A_j \in S} {A_{j,1}, \cdots, A_{j,m_j}}$. The partial transposition $\Gamma_S$ takes the transpose on the subsystems $A_j \in S$, which is equivalent to taking the partial transposition on all atomic subsystems in $S'$ in the new system, denoted as $\Gamma_{S'}$. DEW decomposition in the new system is 
\begin{eqnarray}
    W' = X_1' + \sum_{j > 1} (X_j')^{\Gamma_{S_j'}},
\end{eqnarray} 
where $X_i' \geq 0$. Originally $|S_j| \leq \lfloor \frac{n}{2} \rfloor$, and in the new system $|S_j'| = \sum_{A_k \in S_j} m_k$. The total number of parties is $N = \sum m_k$. Since $m_k \geq 1$, $|S_j'|$ is a part of N. The term $\lfloor \frac{N}{2} \rfloor$ in the definition is half of the total number of subsystems. Because $S_j$ is a subset of the original n blocks, $S_j'$ is the set of subsystems within these blocks, and its number will not exceed $\frac{N}{2}$. When $|S'_j|$ exceeds $\frac{N}{2}$, a global transposition can be applied so that the partial transposition becomes a system with order less than $\frac{N}{2}$. By definition, $W'$ remains a DEW. 
\end{proof}

\subsection{The proof of Lemma \ref{le:fullClassiEigen}}
\label{app:lefull}
\begin{proof}
Without loss of generality, we consider that $\G$ represents any partial transpose of the matrix $|v \rangle \langle v|$ with respect to the n-th subsystem, i.e.,  
\begin{eqnarray}
\label{eq:vvG}
    (|v \rangle \langle v|)^{\G} = \sum_{i,j=1}^r \alpha_i \alpha_j |a_{1,i}\rangle \langle a_{1,j}| \otimes |a_{2,i}\rangle \langle a_{2,j}| \otimes \cdots \otimes \overline{|a_{n,i}\rangle \langle a_{n,j}|}.
\end{eqnarray}

Firstly, let
\begin{eqnarray}
    |x_k \rangle = |a_{1,k}\rangle\otimes|a_{2,k}\rangle\otimes \cdots \otimes \overline{|a_{n,k}\rangle},
\end{eqnarray}
then using the orthogonality of the basis, we have
\begin{eqnarray}
    (|v \rangle \langle v|)^{\G}|x_k \rangle = \alpha_k^2 |x_k \rangle.
\end{eqnarray}

Secondly, let
\begin{eqnarray}
    |y_{k,l} \rangle ^{\pm} = \frac{1}{\sqrt{2}} (|a_{1,k}\rangle\otimes|a_{2,k}\rangle\otimes \cdots \otimes \overline{|a_{n,l}\rangle} \pm |a_{1,l}\rangle\otimes|a_{2,l}\rangle\otimes \cdots \otimes \overline{|a_{n,k}\rangle}).
\end{eqnarray}
Then using the orthogonality of the basis, the first term of the above expression is nonzero only when $i = l$ and $j = k$ and the second is nonzero only when $i = k$ and $j = l$ where indices i and j come from the previous equation \ref{eq:vvG}. Thus, 
\begin{eqnarray}
    (|v \rangle \langle v|)^{\G}|y_{k,l} \rangle ^{\pm} = \pm \alpha_k \alpha_l \cdot |y_{k,l} \rangle ^{\pm}.
\end{eqnarray}
Therefore, the proof is complete. 
\end{proof}

\subsection{The Proof of Lemma \ref{le:2^3}}
\label{app:le2^3}
\begin{proof}
Suppose the dimension of the negative eigenspace $V_-$ of W is at least 5. Then 
\begin{eqnarray}
    \dim \mathbb{P}V_- \geq 4.
\end{eqnarray}
According to the Segre variety intersection theorem, since $\text{Segre}(\mathbb{P}^1 \times \mathbb{P}^1 \times \mathbb{P}^1)$ has dimension three, we have
\begin{align}
    &\dim[\mathbb{P}V_- \cap \text{Segre}(\mathbb{P}^1 \times \mathbb{P}^1 \times \mathbb{P}^1)]\\
    =&\dim \mathbb{P}V_-+\dim[\text{Segre}(\mathbb{P}^1 \times \mathbb{P}^1 \times \mathbb{P}^1)]-\dim[\mathbb{P}V_- \cup \text{Segre}(\mathbb{P}^1 \times \mathbb{P}^1 \times \mathbb{P}^1)]\\
    \geq&\dim \mathbb{P}V_-+\dim[\text{Segre}(\mathbb{P}^1 \times \mathbb{P}^1 \times \mathbb{P}^1)]-\dim \mathbb{P}\mathcal{H}=4+3-7\geq 0,
\end{align}
which implies that
\begin{eqnarray}
    \mathbb{P}V_- \cap \text{Segre}(\mathbb{P}^1 \times \mathbb{P}^1 \times \mathbb{P}^1) \neq \varnothing.
\end{eqnarray}
The nonempty intersection means that there exists a product state $| a,b,c \rangle \in V_-$. Thus, there exists a negative eigenvalue $\lambda_- < 0$ such that
\begin{eqnarray}
    W | a,b,c \rangle = \lambda_- | a,b,c \rangle
\end{eqnarray}
with
\begin{eqnarray}
    \langle a,b,c | a,b,c \rangle =1.
\end{eqnarray}
Taking the inner product on both sides gives
\begin{eqnarray}
    \langle a,b,c | W | a,b,c \rangle = \lambda_- < 0.
\end{eqnarray}
However, it contradicts the definition of EW. Therefore, a 3-qubit witness $W \in \mathcal{H} = \mathbb{C}^2 \otimes \mathbb{C}^2 \otimes \mathbb{C}^2$ can have at most four linearly independent negative eigenvectors, and at most four negative eigenvalues.
\end{proof}

\subsection{The Proof of Lemma \ref{le:2^n}}
\label{app:le2^n}
\begin{proof}
Suppose the dimension of the negative eigenspace $V_-$ of W is at least $2^n-n$. Then 
\begin{eqnarray}
    \dim \mathbb{P}V_- \geq 2^n-n-1.
\end{eqnarray}
According to the Segre variety intersection theorem, since $\text{Segre}(\mathbb{P}^1 \times \mathbb{P}^1 \times \cdots  \times \mathbb{P}^1)$ has a dimension of at least n, we have
\begin{align}
    &\dim[\mathbb{P}V_- \cap \text{Segre}(\mathbb{P}^1 \times \mathbb{P}^1 \times \cdots  \times \mathbb{P}^1)]\\
    =&\dim \mathbb{P}V_-+\dim[\text{Segre}(\mathbb{P}^1 \times \mathbb{P}^1 \times \cdots  \times \mathbb{P}^1)]\\
    &-\dim[\mathbb{P}V_- \cup \text{Segre}(\mathbb{P}^1 \times \mathbb{P}^1 \times \cdots  \times \mathbb{P}^1)]\\
    \geq&\dim \mathbb{P}V_-+\dim[\text{Segre}(\mathbb{P}^1 \times \mathbb{P}^1 \times \cdots  \times \mathbb{P}^1)]-\dim \mathbb{P}\mathcal{H}\\
    =&(2^n-n-1)+n-(2^n-1)\geq 0,
\end{align}
which implies that
\begin{eqnarray}
    \mathbb{P}V_- \cap \text{Segre}(\mathbb{P}^1 \times \mathbb{P}^1 \times \cdots  \times \mathbb{P}^1) \neq \varnothing.
\end{eqnarray}
The nonempty intersection means that there exists a product state $| i_1,i_2,\cdots ,i_n \rangle \in V_-$. Thus, there exists a negative eigenvalue $\lambda_- < 0$ such that
\begin{eqnarray}
    W | i_1,i_2,\cdots ,i_n \rangle = \lambda_- | i_1,i_2,\cdots ,i_n \rangle
\end{eqnarray}
with
\begin{eqnarray}
    \langle i_1,i_2,\cdots ,i_n | i_1,i_2,\cdots ,i_n \rangle =1.
\end{eqnarray}
Taking the inner product on both sides gives
\begin{eqnarray}
    \langle i_1,i_2,\cdots ,i_n | W | i_1,i_2,\cdots ,i_n \rangle = \lambda_- < 0.
\end{eqnarray}
However, it contradicts the definition of EW. Therefore, a 3-qubit witness $W \in \mathcal{H} = \mathbb{C}^2 \otimes \mathbb{C}^2 \otimes \cdots \otimes \mathbb{C}^2$ can have at most four linearly independent negative eigenvectors, and at most four negative eigenvalues.
\end{proof}

\subsection{The Proof of Correlation Results for $2 \times n$ EWs}
\label{app:2tinEW}

The proof of Theorem \ref{le:2tin1234} is as follows. 

\begin{proof}
(i) If $\l_{3} < -\frac{1}{(2+2\sqrt{2})(2n-2)}$, then 
\begin{eqnarray}
    \sum_{i=3}^{2n} \l_{i}< (2n-2) \cdot \left[ -\frac{1}{(2+2\sqrt{2})(2n-2)} \right] <-\frac{1}{2+2\sqrt{2}},
\end{eqnarray}
which contradicts with Lemma \ref{le:EW=general} (iv). The same reasoning applies to the other results.

(ii) Setting $p = 2$ in the proof below yields the conclusion of (ii). 

(iii) Since $\rho \in \mathcal{AS}_{2,n}\left(\mathcal{AP}_{2,n}\right)$ if and only if $\lambda_1 \leq \lambda_{2n-1} + 2\sqrt{\lambda_{2n-2}\lambda_{2n}}$, we can construct a normalized $\rho \in \mathcal{AS}_{2,n}$ 
\begin{eqnarray}
\label{eq:asAAABCC}
    \r = \diag \{2p+1, \cdots ,2p+1,p^2,1,1\}
\end{eqnarray}
where $1 \le p \le 1+\sqrt{2}$. Note that the eigenvalues in \eqref{eq:asAAABCC} are arranged in non-increasing order. From the definition of EW, we obtain the positive semi-definiteness. Regarding the sorting operation corresponding to Von Neumann’s Trace Inequality, we obtain the following expression by multiplying and summing the eigenvalues in non-increasing and non-decreasing orders, respectively. We then have
\begin{eqnarray}
    (2p+1) \sum_{i=4}^{2n} \l_{i} + p^2 \l_3 + \left( 1-\l_3-\sum_{i=4}^{2n} \l_{i} \right)\geq0,
\end{eqnarray}
which implies that
\begin{eqnarray}
    2p\sum_{i=4}^{2n} \l_{i} + (p^2-1)\l_3 \geq -1. 
\end{eqnarray}

(iv) Since $\rho \in \mathcal{AS}_{2,n}\left(\mathcal{AP}_{2,n}\right)$ if and only if $\lambda_1 \leq \lambda_{2n-1} + 2\sqrt{\lambda_{2n-2}\lambda_{2n}}$, we can construct a normalized $\rho \in \mathcal{AS}_{2,n}$ 
\begin{eqnarray}
\label{eq:asAAABBC}
    \r = \diag \{m^2+2m, \cdots ,m^2+2m,m^2,m^2,1\}
\end{eqnarray}
where $m \geq 1$. Note that the eigenvalues in \eqref{eq:asAAABBC} are arranged in non-increasing order. From the definition of EW, we obtain the positive semi-definiteness. Regarding the sorting operation corresponding to Von Neumann’s Trace Inequality, we obtain the following expression by multiplying and summing the eigenvalues in non-increasing and non-decreasing orders, respectively. We then have
\begin{eqnarray}
    (m^2+2m) \sum_{i=4}^{2n} \l_{i} + m^2 (\l_3 + \l_2) + \left( 1-\l_2 -\l_3-\sum_{i=4}^{2n} \l_{i} \right)\geq0,
\end{eqnarray}
which implies that
\begin{eqnarray}
    (m^2+2m-1)\sum_{i=4}^{2n} \l_{i} + (m^2-1)(\l_3 + \l_2) \geq -1. 
\end{eqnarray}
\end{proof}
In particular, for (iii), when $p = 1+\sqrt{2}$ and $p = 1$, we obtain (b) and (c) in  Lemma \ref{le:EW=general} (iv), respectively. For (iv), when $m = 1$, we obtain (c) in  Lemma \ref{le:EW=general}. 

\begin{remark}
\label{re:constrOfW_SS}
Inequality \eqref{ineq:2n13>=} is tight. Indeed, using Theorem 3 of  \cite{Johnston_2018}, we construct an operator
\begin{eqnarray}
    W=\big(|\psi\rangle\langle\psi|\big)^\Gamma \oplus \mathbf{0}_{2(n-2)}
\end{eqnarray}
where $|\psi\rangle=\alpha_1|00\rangle+\alpha_2|11\rangle\in\mathbb{C}^2\otimes\mathbb{C}^2$ and $\alpha_1^2+\alpha_2^2=1$. Then W is a normalized EW with eigenvalues in non-increasing order
\begin{eqnarray}
    \lambda_1=\alpha_1^2,\quad \lambda_2=\alpha_1\alpha_2,\quad \lambda_3=\alpha_2^2,\quad \lambda_4=\dots=\lambda_{2n-1}=0,\quad \lambda_{2n}=-\alpha_1\alpha_2.
\end{eqnarray}
In this case, $\lambda_{2n}+\sqrt{\lambda_1\lambda_3}=-\alpha_1\alpha_2+\sqrt{\alpha_1^2\alpha_2^2}=0$. There exist legitimate examples for which equality holds, so Inequality \eqref{ineq:2n13>=} is tight.
\end{remark}

Let $S^\perp$ denote the orthogonal complement of subspace $S$ in $\mathbb{C}^2\otimes\mathbb{C}^n$, so that $\mathbb{C}^2\otimes\mathbb{C}^n = S\oplus S^\perp$. Then any operator can be written in block form as
\begin{eqnarray}
W=
\begin{bmatrix}
W_{SS} & W_{S S^\perp}\\
W_{S^\perp S} & W_{S^\perp S^\perp}
\end{bmatrix}
\end{eqnarray}
where $W_{S^\perp S}=(W_{S S^\perp})^\dagger$. 

The large space is split into two orthogonal regions $S$ and $S^\perp$ (the orthogonal complement of S). For the block notation of the operator 
\begin{eqnarray}
    W_{XY}
\end{eqnarray}
where the second subscript indicates which region the vector comes from and the first subscript indicates which region the vector is mapped to. For example, for $W_{S^\perp S}$, the vector comes from $S$ and, after mapping, goes to $S^\perp (S \to S^\perp)$.

The trivial example satisfies $W_{S S^\perp}=O,\ W_{S^\perp S^\perp}=O$, which corresponds to the direct-sum decomposition $W=W_{SS}\oplus \mathbf{0}$. Suppose that there exist a genuine $2\times n$ EW replacing the above two-qubit EW, then from the equality condition of Lemma \ref{le:2n13>=}, we must have
\begin{eqnarray}
    \lambda_{2n}=-\sqrt{\lambda_1\lambda_3}
\end{eqnarray}
and 
\begin{eqnarray}
    \lambda_1'=\lambda_1 ,\quad \lambda_3'=\lambda_3. 
\end{eqnarray}
This requires at least one of $W_{S^\perp S^\perp}\neq O$ or $W_{S S^\perp}\neq O$ to hold where $W_{SS}$ is a $4 \times 4$ matrix. We then analyze two separate cases.

(i) $W_{S S^\perp}\neq O$. Suppose $|v_1\rangle_S \in S$ satisfies 
\begin{eqnarray}
    W_{SS}|v_1\rangle_S=\lambda|_{W_{SS}}\cdot |v_1\rangle_S, 
\end{eqnarray}
then we construct
\begin{eqnarray}
    |v_1\rangle=\begin{bmatrix}|v_1\rangle_S\\ \boldsymbol{0}_{2n-4}\end{bmatrix}\in \mathcal{H}.
\end{eqnarray}
Theus we obtain 
\begin{eqnarray}
W|v_1\rangle
=\begin{bmatrix}
W_{SS} & W_{SS^\perp}\\
W_{S^\perp S} & W_{S^\perp S^\perp}
\end{bmatrix}
\begin{bmatrix}|v_1\rangle_S\\ \boldsymbol{0}\end{bmatrix}
=\begin{bmatrix}
W_{SS}|v_1\rangle_S \\
W_{S^\perp S}|v_1\rangle_S
\end{bmatrix}.
\end{eqnarray}
Substituting into the characteristic equation
$W_{SS} |v_1\rangle_S = \lambda |_{W_{SS}} |v_1\rangle_S$, we obtain
\begin{eqnarray}
W|v_1\rangle
=\begin{bmatrix}
\lambda|_{W_{SS}}\cdot |v_1\rangle_S \\
W_{S^\perp S}|v_1\rangle_S
\end{bmatrix}.
\end{eqnarray}
Since we assume $W_{S^\perp S} \neq 0$, there exists at least one $|v_1\rangle_S$ such that
\begin{eqnarray}
    W_{S^\perp S}|v_1\rangle_S \neq \mathbf{0}.
\end{eqnarray}
Hence the orthogonal complement component of $W|v_1\rangle$ is nonzero. Now using the definition of an eigenvector, if $|v_1\rangle$ is an eigenvector of W, then there exists a constant $\mu$ satisfying
\begin{eqnarray}
W|v_1\rangle = \mu |v_1\rangle =
\begin{bmatrix}
\mu |v_1\rangle_S \\
\mathbf{0}
\end{bmatrix},
\end{eqnarray}
which leads to a contradiction. 

(ii) From (i), we must have $W_{S S^\perp}= O$. Thus we have $W(S)\subseteq S,\quad W(S^\perp)\subseteq S^\perp$ where the subspace S is a nontrivial invariant subspace of W. By definition, a self-adjoint operator that possesses a nontrivial invariant subspace is called a reducible operator. In the above, the trivial example is
\begin{eqnarray}
    W = W_{SS} \oplus \mathbf 0.
\end{eqnarray}
The action of the whole operator can be decomposed into two independent subsystems that are disconnected and do not affect each other. The operator cannot map vectors in $S$ into $S^\perp$, nor can it map vectors in $S^\perp$ into $S$.

Therefore, since the partial trace $\tr_A$ is a linear map, we have
\begin{eqnarray}
    W_B=\mathrm{Tr}_A W=\mathrm{Tr}_A W_{SS}\;\oplus\; \mathrm{Tr}_A W_{S^\perp S^\perp}.
\end{eqnarray}
Let
\begin{eqnarray}
\Omega_1=\mathrm{Tr}_A W_{SS}\in\mathcal{B}(V_2),\quad
\Omega_2=\mathrm{Tr}_A W_{S^\perp S^\perp}\in\mathcal{B}(V_\perp), 
\end{eqnarray}
then under the decomposition $\mathbb{C}^n_B=V_2\oplus V_\perp$, 
\begin{eqnarray}
    W_B=\Omega_1\oplus\Omega_2.
\end{eqnarray}

\subsection{Explicit Construction of The Correlation Matrix for Multipartite GHZ}
\label{app:ExplicitConstructionGHZ}
\begin{eqnarray}
    |\Psi_3 \rangle \langle\Psi_3 |=
    \begin{bmatrix}\begin{bmatrix}\begin{bmatrix}\frac{1}{3} & 0 & 0 \\ 0 & 0 & 0 \\ 0 & 0 & 0\end{bmatrix} & O & O \\ O & O & O \\ O & O & O\end{bmatrix} & \begin{bmatrix}O & \begin{bmatrix}0 & \frac{1}{3} & 0 \\ 0 & 0 & 0 \\ 0 & 0 & 0\end{bmatrix} & O \\ \mathrm{O} & O & O \\ O & O & O\end{bmatrix} & \begin{bmatrix}O & O & \begin{bmatrix}0 & 0 & \frac{1}{3} \\ 0 & 0 & 0 \\ 0 & 0 & 0\end{bmatrix} \\ O & O & O \\ \mathrm{O} & O & O\end{bmatrix} \\ \begin{bmatrix}O & \mathrm{O} & O \\ \begin{bmatrix}0 & 0 & 0 \\ \frac{1}{3} & 0 & 0 \\ 0 & 0 & 0\end{bmatrix} & O & O \\ O & O & O\end{bmatrix} & \begin{bmatrix}O & O & O \\ O & \begin{bmatrix}0 & 0 & 0 \\ 0 & \frac{1}{3} & 0 \\ 0 & 0 & 0\end{bmatrix} & O \\ O & O & O\end{bmatrix} & \begin{bmatrix}O & O & O \\ O & O & \begin{bmatrix}0 & 0 & 0 \\ 0 & 0 & \frac{1}{3} \\ 0 & 0 & 0\end{bmatrix} \\ O & \mathrm{O} & O\end{bmatrix} \\ \begin{bmatrix}O & O & \mathrm{O} \\ O & O & O \\ \begin{bmatrix}0 & 0 & 0 \\ 0 & 0 & 0 \\ \frac{1}{3} & 0 & 0\end{bmatrix} & O & O\end{bmatrix} & \begin{bmatrix}O & O & O \\ O & O & \mathrm{O} \\ O & \begin{bmatrix}0 & 0 & 0 \\ 0 & 0 & 0 \\ 0 & \frac{1}{3} & 0\end{bmatrix} & O\end{bmatrix} & \begin{bmatrix}O & O & O \\ O & O & O \\ O & O & \begin{bmatrix}0 & 0 & 0 \\ 0 & 0 & 0 \\ 0 & 0 & \frac{1}{3}\end{bmatrix}\end{bmatrix}\end{bmatrix},
\end{eqnarray}
and
\begin{eqnarray}
\label{eq:Psi3G}
    |\Psi_3 \rangle \langle\Psi_3 |^\G=
    \begin{bmatrix}\begin{bmatrix}\begin{bmatrix}\frac{1}{3} & 0 & 0 \\ 0 & 0 & 0 \\ 0 & 0 & 0\end{bmatrix} & O & O \\ O & O & O \\ O & O & O\end{bmatrix} & \begin{bmatrix}O & O & O \\ \begin{bmatrix}0 & 0 & 0 \\ \frac{1}{3} & 0 & 0 \\ 0 & 0 & 0\end{bmatrix} & O & O \\ O & O & O\end{bmatrix} & \begin{bmatrix}O & O & O \\ O & O & O \\ \begin{bmatrix}0 & 0 & 0 \\ 0 & 0 & 0 \\ \frac{1}{3} & 0 & 0\end{bmatrix} & O & O\end{bmatrix} \\ \begin{bmatrix}O & \begin{bmatrix}0 & \frac{1}{3} & 0 \\ 0 & 0 & 0 \\ 0 & 0 & 0\end{bmatrix} & O \\ O & O & O \\ O & O & O\end{bmatrix} & \begin{bmatrix}O & O & O \\ O & \begin{bmatrix}0 & 0 & 0 \\ 0 & \frac{1}{3} & 0 \\ 0 & 0 & 0\end{bmatrix} & O \\ O & O & O\end{bmatrix} & \begin{bmatrix}O & O & O \\ O & O & O \\ O & \begin{bmatrix}0 & 0 & 0 \\ 0 & 0 & 0 \\ 0 & \frac{1}{3} & 0\end{bmatrix} & O\end{bmatrix} \\ \begin{bmatrix}O & O & \begin{bmatrix}0 & 0 & \frac{1}{3} \\ 0 & 0 & 0 \\ 0 & 0 & 0\end{bmatrix} \\ O & O & O \\ O & O & O\end{bmatrix} & \begin{bmatrix}O & O & O \\ O & O & \begin{bmatrix}0 & 0 & 0 \\ 0 & 0 & \frac{1}{3} \\ 0 & 0 & 0\end{bmatrix} \\ O & O & O\end{bmatrix} & \begin{bmatrix}O & O & O \\ O & O & O \\ O & O & \begin{bmatrix}0 & 0 & 0 \\ 0 & 0 & 0 \\ 0 & 0 & \frac{1}{3}\end{bmatrix}\end{bmatrix}\end{bmatrix}.
\end{eqnarray}
Therefore, we obtain all eigenvalues of $|\Psi_3 \rangle \langle\Psi_3 |^ \G$, i.e., $\frac{1}{3}$ and $-\frac{1}{3}$, with algebraic multiplicities six and three, respectively.

\subsection{Generalization of The Case where $\tr(W^2) \le 1$}
\label{app:UseConvexity}
We consider a normalized DEW $W=P+Q^{\G_A}+R^{\G_B}$ where $\frac{P+Q^{\G_A}}{\tr(P+Q^{\G_A})}$ is also a normalized DEW. Then from the result before, we have
\begin{eqnarray}
    \tr[(\frac{P+Q^{\G_A}}{\tr(P+Q^{\G_A})})^2] \le 1.
\end{eqnarray}
At the same time time, since $R^{\G_B}$ is a Hermitian matrix, we obtain
\begin{eqnarray}
    \tr[(\frac{R^{\G_B}}{\tr(R^{\G_B})})^2] =\frac{1}{\tr(R^{\G_B})^2}\tr[(R^{\G_B})^2] = 1.
\end{eqnarray}
One can see that
\begin{eqnarray}
    \tr(P+Q^{\G_A})+\tr(R^{\G_B})=\tr(P+Q^{\G_A}+R^{\G_B})=1.
\end{eqnarray}
Thus, using the convexity of the matrix function $\tr(W^2)$ in Lemma \ref{le:ConvexityOfTr(W^2)},
\begin{align}
    \tr(W^2)=&\tr[(P+Q^{\G_A}+R^{\G_B})^2]\\
    =&\tr\left[\left( \tr(P+Q^{\G_A}) \cdot \frac{P+Q^{\G_A}}{\tr(P+Q^{\G_A})} + \tr(R^{\G_B}) \cdot \frac{R^{\G_B}}{\tr(R^{\G_B})} \right)^2\right]\\
    \le&\tr(P+Q^{\G_A}) \cdot \tr[(\frac{P+Q^{\G_A}}{\tr(P+Q^{\G_A})})^2] + \tr(R^{\G_B}) \cdot \tr[(\frac{R^{\G_B}}{\tr(R^{\G_B})})^2]\\
    \le&\tr(P+Q^{\G_A})+\tr(R^{\G_B})=1.
\end{align}

\bibliographystyle{unsrt}

\bibliography{nalan=ew} 

@article{johnston2018inverse,
 title={The inverse eigenvalue problem for entanglement witnesses},
 author={Johnston, N. and Patterson, E.},
 journal={Linear Algebra Appl},
 volume={550},
 pages={1--27},
 year={2018},
 publisher={Elsevier}
}

@article{serrano2024absolute,
 title={Absolute-separability witnesses for symmetric multiqubit states},
 author={Serrano-Ens{\'a}stiga, E. and Denis, J. and Martin, J.},
 journal={Phys. Rev. A},
 volume={109},
 number={2},
 pages={022430},
 year={2024},
 publisher={APS}
}

@article{chruscinski2025mirroredn,
	title={A mirrored pair of optimal non-decomposable entanglement witnesses for two qudits does exist},
	author={Chru{\'s}ci{\'n}ski, D. and Bera, A. and Bae, J. and Hiesmayr, B. C},
	journal={Sci. Rep.},
	volume={15},
	number={1},
	pages={28205},
	year={2025},
	publisher={Nature Publishing Group UK London}
}

@article{bourennane2004experimental,
	title={Experimental detection of multipartite entanglement using witness operators},
	author={Bourennane, M. and Eibl, M. and Kurtsiefer, C. and Gaertner, S. and Weinfurter, H. and G{\"u}hne, O. and Hyllus, P. and Bru{\ss}, D. and Lewenstein, M. and Sanpera, A.},
	journal={Phys. Rev. Lett.},
	volume={92},
	number={8},
	pages={087902},
	year={2004},
	publisher={APS}
}

@article{bera2023structure,
	title={On the structure of mirrored operators obtained from optimal entanglement witnesses},
	author={Bera, A. and Bae, J. and Hiesmayr, B. C and Chru{\'s}ci{\'n}ski, D.},
	journal={Sci. Rep.},
	volume={13},
	number={1},
	pages={10733},
	year={2023},
	publisher={Nature Publishing Group UK London}
}

@article{bae2020mirrored,
	title={Mirrored entanglement witnesses},
	author={Bae, J. and Chru{\'s}ci{\'n}ski, D. and Hiesmayr, B. C},
	journal={Npj Quantum Inf.},
	volume={6},
	number={1},
	pages={15},
	year={2020},
	publisher={Nature Publishing Group UK London}
}

@article{Chru2014Entanglement,
	title={Entanglement witnesses: construction, analysis and classification},
	author={Chrucinski, D. and  Sarbicki, G. },
	journal={J. Phys. A},
	volume={47},
	number={48},
	pages={483001},
	year={2014},
}

@article{lewenstein2000optimization1,
	title={Optimization of entanglement witnesses},
	author={Lewenstein, M. and Kraus, B. and Cirac, J. I. and Horodecki, P.},
	journal={Phys. Rev. A},
	volume={62},
	number={5},
	pages={052310},
	year={2000},
	publisher={APS}
}

@article{guhne2006nonlinear,
	title={Nonlinear entanglement witnesses},
	author={G{\"u}hne, O. and L{\"u}tkenhaus, N.},
	journal={Phys. Rev. Lett.},
	volume={96},
	number={17},
	pages={170502},
	year={2006},
	publisher={APS}
}

@article{hyllus2005relations,
	title={Relations between entanglement witnesses and Bell inequalities},
	author={Hyllus, P. and G{\"u}hne, O. and Bru{\ss}, D. and Lewenstein, M.},
	journal={Phys. Rev. A},
	volume={72},
	number={1},
	pages={012321},
	year={2005},
	publisher={APS}
}

@article{chruscinski2009spectral,
	title={Spectral conditions for entanglement witnesses versus bound entanglement},
	author={Chru{\'s}ci{\'n}ski, D. and Kossakowski, A. and Sarbicki, G.},
	journal={Phys. Rev. A},
	volume={80},
	number={4},
	pages={042314},
	year={2009},
	publisher={APS}
}

@article{feng2024inertia,
	title={Inertia of two-qutrit entanglement witnesses},
	author={Feng, C. C. and Chen, L. and Xu, C. and Shen, Y.},
	journal={	Linear Multilinear A},
	volume={72},
	number={3},
	pages={451--473},
	year={2024},
	publisher={Taylor \& Francis}
}

@article{shen2020inertias,
	title={Inertias of entanglement witnesses},
	author={Shen, Y. and Chen, L. and Zhao, L. J.},
	journal={J. Phys. A},
	volume={53},
	number={48},
	pages={485302},
	year={2020},
	publisher={IOP Publishing}
}

@article{2025Spectral,
  title={Spectral characterizations of entanglement witnesses},
  author={ Song, Zhiwei  and  Chen, Lin },
  year={2025},
}

@article{Ransford_2026,
  title={A 98-qubit trapped-ion quantum computer with all-to-all connectivity},
  volume={655},
  ISSN={1476-4687},
  url={http://dx.doi.org/10.1038/s41586-026-10676-4},
  DOI={10.1038/s41586-026-10676-4},
  number={8121},
  journal={Nature},
  publisher={Springer Science and Business Media LLC},
  author={Ransford, Anthony and Allman, M. S. and Arkinstall, Jake and others},
  year={2026},
  month={6},
  pages={81--86}
}

@article{slater2009eigenvalues,
	title={Eigenvalues, separability and absolute separability of two-qubit states},
	author={Slater, P. B},
	journal={J. Geom. Phys.},
	volume={59},
	number={1},
	pages={17--31},
	year={2009},
	publisher={Elsevier}
}

@article{han2017separability,
	title={Separability of three qubit},
	author={Han, K. H. and Kye, S. H.},
	journal={J. Phys. A},
	volume={50},
	number={14},
	pages={145303},
	year={2017},
	publisher={IOP Publishing}
}

@article{tanaka2014determining,
	title={Determining eigenvalues of a density matrix with minimal information in a single experimental setting},
	author={Tanaka, T. and Ota, Y. and Kanazawa, M. and Kimura, G. and Nakazato, H. and Nori, F.},
	journal={Phys. Rev. A},
	volume={89},
	number={1},
	pages={012117},
	year={2014},
	publisher={APS}
}

@book{nielsen2010quantum,
	title={Quantum computation and quantum information},
	author={Nielsen, M. A and Chuang, I. L},
	year={2010},
	publisher={Cambridge university press}
}

@article{horodecki1997separability,
	title={Separability criterion and inseparable mixed states with positive partial transposition},
	author={Horodecki, P.},
	journal={Phys. Lett. A},
	volume={232},
	number={5},
	pages={333--339},
	year={1997},
	publisher={Elsevier}
}

@article{horodecki1996necessary,
	title={On the necessary and sufficient conditions for separability of mixed quantum states},
	author={Horodecki, M. and Horodecki, P. and Horodecki, R.},
	journal={Phys. Lett. A},
	volume={223},
	number={1},
	year={1996},
	publisher={Citeseer}
}

@article{peres1996separability,
	title={Separability criterion for density matrices},
	author={Peres, A.},
	journal={Phys. Rev. Lett.},
	volume={77},
	number={8},
	pages={1413},
	year={1996},
	publisher={APS}
}

@article{leinaas2010numerical,
	title={Numerical studies of entangled positive-partial-transpose states in composite quantum systems},
	author={Leinaas, J. M. and Myrheim, J. and Sollid, P. O.},
	journal={Phys. Rev. A},
	volume={81},
	number={6},
	pages={062329},
	year={2010},
	publisher={APS}
}

@article{leinaas2007extreme,
	title={Extreme points of the set of density matrices with positive partial transpose},
	author={Leinaas, J. M. and Myrheim, J. and Ovrum, E.},
	journal={Phys. Rev. A},
	volume={76},
	number={3},
	pages={034304},
	year={2007},
	publisher={APS}
}

@article{ganguly2014witness,
	title={Witness of mixed separable states useful for entanglement creation},
	author={Ganguly, N. and Chatterjee, J. and Majumdar, A. S.},
	journal={Phys. Rev. A},
	volume={89},
	number={5},
	pages={052304},
	year={2014},
	publisher={APS}
}

@article{lewenstein2001characterization,
	title={Characterization of separable states and entanglement witnesses},
	author={Lewenstein, M. and Kraus, B and Horodecki, P and Cirac, J.},
	journal={Phys. Rev. A},
	volume={63},
	number={4},
	pages={044304},
	year={2001},
	publisher={APS}
}

@article{sarbicki2008spectral,
	title={Spectral properties of entanglement witnesses},
	author={Sarbicki, G.},
	journal={J. Phys. A: Math. Theor.},
	volume={41},
	number={37},
	pages={375303},
	year={2008},
	publisher={IOP Publishing}
}

@article{3x3inertia2022changchun,
author = {Changchun Feng and Lin Chen and Chang Xu and Yi Shen},
title = {Inertia of two-qutrit entanglement witnesses},
journal = {Linear and Multilinear Algebra},
volume = {0},
number = {0},
pages = {1-23},
year  = {2022},
publisher = {Taylor and Francis},
doi = {10.1080/03081087.2022.2159304},
}

@article{2020Inertias,
  title={Inertias of entanglement witnesses},
  author={ Shen, Y.  and  Chen, L.  and  Zhao, L. J. },
  journal={Journal of Physics A: Mathematical and Theoretical},
  volume={53},
  number={48},
  pages={485302 (27pp)},
  year={2020},
}

@article{2018Optimization,
  title={Optimization of ultrafine entanglement witnesses},
  author={ Shen, Shu Qian  and  Xu, Ti Run  and  Fei, Shao Ming  and  Li-Jost, Xianqing  and  Li, Ming },
  journal={physical review a},
  volume={97},
  number={3},
  year={2018},
}

@article{2019Design,
	title={Design and Experimental Performance of Local Entanglement Witness Operators},
	author={ Amaro, D.  and M Müller},
	year={2019},
	}

@article{2020Measurement,
	title={Measurement-Device-Independent Entanglement Witness of Tripartite Entangled States and Its Applications},
	author={ Li, Zheng Da  and  Zhao, Qi  and  Zhang, Rui  and  Liu, Li Zheng  and  Pan, Jian Wei },
	journal={Physical Review Letters},
	volume={124},
	number={16},
	year={2020},
	}

@article{2013Sperling,
	title={Multipartite Entanglement Witnesses},
	author={J. Sperling and W. Vogel },
	journal={Physical Review Letters},
	doi={https://doi.org/10.1103/PhysRevLett.111.110503},
	year={2013},
	}

@article{2012CM,
	title={Witnessing Quantum Coherence: from solid-state to biological systems},
	author={ C.-M. Li and N. Lambert and Y.-N. Chen G.-Y. Chen and F. Nori},
	journal={Scientific Reports},
	volume={2},
	number={1},
	pages={885},
	doi={https://doi.org/10.1038/srep00885},
	year={2012},
	}

@article{2003LGH,
	title={ Separable balls around the maximally mixed multipartite quantum states.},
	author={Leonid Gurvits and Howard Barnum.},
	journal={Phys. Rev. A},
	volume={68},
	number={},
	pages={042312},
	year={2003},
	}

@article{2025Sufficient,
  title={Sufficient criteria for absolute separability in arbitrary dimensions via linear map inverses},
  author={ Abellanetvidal, Jofre  and  Sanpera, Anna },
  journal={IOP Publishing Ltd},
  year={2025},
}

@article{Johnston_2018,
   title={The inverse eigenvalue problem for entanglement witnesses},
   volume={550},
   ISSN={0024-3795},
   url={http://dx.doi.org/10.1016/j.laa.2018.03.043},
   DOI={10.1016/j.laa.2018.03.043},
   journal={Linear Algebra and its Applications},
   publisher={Elsevier BV},
   author={Johnston, Nathaniel and Patterson, Everett},
   year={2018},
   month=Aug, pages={1–27} }

@article{2006FGS,
	title={Witnessed entanglement.},
	author={ FERNANDO G. S. L. BRANDÃO and REINALDO O. VIANNA.},
	journal={International Journal of Quantum Information},
	volume={04},
	number={02},
	pages={331–340},
	doi={},
	year={2006},
	}

@article{2006OGN,
	title={Nonlinear entanglement witnesses.},
	author={Otfried Gühne and Norbert Lütkenhaus.},
	journal={Phys. Rev. Lett.},
	volume={96},
	number={},
	pages={170502},
	doi={},
	year={2006},
	}

@article{2011JDB,
	title={ Device-independent witnesses of genuine multipartite entanglement.},
	author={Jean-Daniel Bancal and Nicolas Gisin and Yeong-Cherng Liang and Stefano Pironio.},
	journal={Phys. Rev. Lett.},
	volume={106},
	number={},
	pages={250404},
	doi={},
	year={2011},
	}

@article{Pezze2017MultipartiteEI,
  author = {Pezz{\`e}, Luca and Gabbrielli, Marco and Lepori, Luca and Smerzi, Augusto},
  title = {Multipartite Entanglement in Topological Quantum Phases},
  journal = {Physical Review Letters},
  volume = {119},
  number = {25},
  pages = {250401},
  year = {2017},
  month = {Dec},
  doi = {10.1103/PhysRevLett.119.250401},
  url = {https://doi.org/10.1103/PhysRevLett.119.250401},
  issn = {1079-7114},
  archivePrefix = {arXiv},
  eprint = {1706.06539},
  primaryClass = {quant-ph}
}

@article{Igloi2023Entanglement,
  author = {Iglói, Ferenc and Tóth, Géza},
  title = {Entanglement witnesses in the {XY} chain: Thermal equilibrium and postquench nonequilibrium states},
  journal = {Physical Review Research},
  volume = {5},
  number = {1},
  pages = {013158},
  year = {2023},
  month = {Mar},
  doi = {10.1103/PhysRevResearch.5.013158},
  url = {https://doi.org/10.1103/PhysRevResearch.5.013158},
  archivePrefix = {arXiv},
  eprint = {2207.04842},
  primaryClass = {quant-ph}
}

@article{MacLean2018Direct,
  author = {MacLean, Jean-Philippe W. and Donohue, John M. and Resch, Kevin J.},
  title = {Direct Characterization of Ultrafast Energy-Time Entangled Photon Pairs},
  journal = {Physical Review Letters},
  volume = {120},
  number = {5},
  pages = {053601},
  year = {2018},
  month = {Jan},
  publisher = {American Physical Society},
  doi = {10.1103/PhysRevLett.120.053601},
  url = {https://doi.org/10.1103/PhysRevLett.120.053601},
  archivePrefix = {arXiv},
  eprint = {1710.11541},
  primaryClass = {quant-ph}
}

@article{Zhang2018Characterization,
  author = {Zhang, Yu-Ran and Zeng, Yu and Fan, Heng and You, J. Q. and Nori, Franco},
  title = {Characterization of Topological States via Dual Multipartite Entanglement},
  journal = {Physical Review Letters},
  volume = {120},
  number = {25},
  pages = {250501},
  year = {2018},
  month = {Jun},
  publisher = {American Physical Society},
  doi = {10.1103/PhysRevLett.120.250501},
  url = {https://doi.org/10.1103/PhysRevLett.120.250501},
  archivePrefix = {arXiv},
  eprint = {1712.05286},
  primaryClass = {quant-ph}
}

\end{document}